\numberwithin{equation}{section}
\theoremstyle{plain} 
\newtheorem{theorem}{Theorem}[section]
\newtheorem{proposition}{Proposition}[section]
\theoremstyle{plain}
\newtheorem{remark}[proposition]{Remark}  
\theoremstyle{nonumberplain}
\newtheorem{proof}{Proof}
\definecolor{darkgray}{gray}{0.4}  
\definecolor{ddarkgray}{gray}{0.2} 
\definecolor{redgray}{rgb}{0.5,0.25,0.25}  
\definecolor{bluegray}{rgb}{0.25,0.25,0.5}  
\definecolor{refkey}{rgb}{0,1,0} 
\definecolor{labelkey}{rgb}{0,1,1}
\title{\textbf{\Large \parbox{\linewidth}{\centering
Diffuse-Interface Two-Phase Flow Models with\\[-.125\baselineskip] Different Densities: A New Quasi-Incompressible\\[-.125\baselineskip] Form and a Linear Energy-Stable Method
}
%
%
}}
\author{%
\parbox{\linewidth}{\centering
{\normalsize M.~Shokrpour Roudbari$^{a,\dag}$, G.~\c{S}im\c{s}ek$^{a,\ddag,*}$, E.H.~van Brummelen$^{a,\S}$, 
\\ and K.G.~van der Zee$^{b}$}
%
~\\[.5\baselineskip]
\textit{\small $^{a}$Eindhoven University of Technology, Multiscale Engineering Fluid Dynamics,\\
5600 MB, Eindhoven, Netherlands \\
$^{\dag}${}m.shokrpour.roudbari@tue.nl, $^{\ddag}${}grkemsimsek@gmail.com, $^{\S}${}e.h.v.brummelen@tue.nl
%
~\\[.5\baselineskip]
$^{b}$The University of Nottingham, School of Mathematical Sciences,\\ University Park, Nottingham, NG7 2RD, United Kingdom\\
kg.vanderzee@nottingham.ac.uk
}}}
\date{\small 7 June 2017}
\begin{document}
\maketitle

\renewcommand*{\thefootnote}{\fnsymbol{footnote}}
\footnotetext[1]{Corresponding author}
\renewcommand*{\thefootnote}{\alph{footnote}}


\begin{abstract}
\noindent
While various phase-field models have recently appeared for two-phase fluids with different densities, only some are known to be thermodynamically consistent, and practical stable schemes for their numerical simulation are lacking. In this paper, we derive a new form of thermodynamically-consistent quasi-incompressible diffuse-interface Navier--Stokes Cahn--Hilliard model for a two-phase flow of incompressible fluids with different densities. The derivation is based on mixture theory by invoking the second law of thermodynamics and Coleman--Noll procedure. We also demonstrate that our model and some of the existing models are equivalent and we provide a unification between them. In addition, we develop a linear and energy-stable time-integration scheme for the derived model. Such a linearly-implicit scheme is nontrivial, because it has to suitably deal with all nonlinear terms, in particular those involving the density. Our proposed scheme is the first linear method for quasi-incompressible two-phase flows with nonsolenoidal velocity that satisfies discrete energy dissipation independent of the time-step size, provided that the mixture density remains positive. The scheme also preserves mass. Numerical experiments verify the suitability of the scheme for two-phase flow applications with high density ratios using large time steps by considering the coalescence and break-up dynamics of droplets including pinching due to gravity.
%
\\[.75\baselineskip]
\noindent
\textit{Keywords:} Navier--Stokes Cahn--Hilliard; quasi-incompressible two-phase-flow; mixture theory; thermodynamic consistency; diffuse interface; energy-stable scheme
\\[.75\baselineskip]
\noindent
AMS Subject Classification: 22E46, 53C35, 57S20

\end{abstract}


\newcommand{\tinyfbox}[1]{\raisebox{0.4ex}{\tiny{}\fbox{{#1}}}}

\newlength{\bigfboxsep}
\setlength{\bigfboxsep}{\fboxsep}
\addtolength{\bigfboxsep}{1ex}
\newcommand{\bigboxed}[1]{\setlength{\fboxsep}{\bigfboxsep}%
   \fbox{$\displaystyle{#1}$}}

\newcommand{\bigfbox}[1]{\setlength{\fboxsep}{\bigfboxsep}\fbox{#1}}

\newcommand{\sect}[1]{\S\ref{sec:#1}}
\newcommand{\eq}[1]{(\ref{eq:#1})}

\newcommand{\negquad}{\mspace{-18.0mu}}
\newcommand{\negqquad}{\negquad\negquad}
\newcommand{\negqqquad}{\negquad\negquad\negquad}
\newcommand{\negqqqquad}{\negquad\negquad\negquad\negquad}
\newcommand{\qqquad}{\qquad\quad}
\newcommand{\qqqquad}{\qquad\qquad}

\newcommand{\norm}[1]{{\|#1\|}}
\newcommand{\bignorm}[1]{\big\|#1\big\|}
\newcommand{\Bignorm}[1]{\Big\|#1\Big\|}
\newcommand{\biggnorm}[1]{\bigg\|#1\bigg\|}
\newcommand{\Biggnorm}[1]{\Bigg\|#1\Bigg\|}
\newcommand{\snorm}[1]{{|#1|}}
\newcommand{\bigsnorm}[1]{\big|#1\big|}
\newcommand{\Bigsnorm}[1]{\Big|#1\Big|}
\newcommand{\biggsnorm}[1]{\bigg|#1\bigg|}
\newcommand{\Biggsnorm}[1]{\Bigg|#1\Bigg|}
\newcommand{\enorm}[1]{{|\! |\! | #1 |\! |\! |}}
\newcommand{\bigenorm}[1]{{\big|\! \big|\! \big| #1 \big|\! \big|\! \big|}}
\newcommand{\Bigenorm}[1]{{\Big|\! \Big|\! \Big| #1 \Big|\! \Big|\! \Big|}}
\newcommand{\biggenorm}[1]{\bigg|\! \bigg|\! \bigg| #1 \bigg|\! \bigg|\! \bigg|}
\newcommand{\Biggenorm}[1]{\Bigg|\! \Bigg|\! \Bigg| #1\Bigg|\! \Bigg|\! \Bigg|}
\newcommand{\innerprod}[1]{(#1)}
\newcommand{\biginnerprod}[1]{\big(#1\big)}
\newcommand{\Biginnerprod}[1]{\Big(#1\Big)}
\newcommand{\bigginnerprod}[1]{\bigg(#1\bigg)}
\newcommand{\Bigginnerprod}[1]{\Bigg(#1\Bigg)}
\newcommand{\dual}[1]{\langle#1\rangle}
\newcommand{\bigdual}[1]{\big\langle#1\big\rangle}
\newcommand{\Bigdual}[1]{\Big\langle#1\Big\rangle}
\newcommand{\biggdual}[1]{\bigg\langle#1\bigg\rangle}
\newcommand{\Biggdual}[1]{\Bigg\langle#1\Bigg\rangle}
\newcommand{\jump}[1]{{[\! [#1 ]\! ]}}
\newcommand{\bigjump}[1]{{\big[\! \big[ #1 \big]\! \big]}}
\newcommand{\Bigjump}[1]{{\Big[\! \Big[ #1 \Big]\! \Big]}}
\newcommand{\biggjump}[1]{{\bigg[\! \bigg[\! #1\bigg]\! \bigg]}}
\newcommand{\Biggjump}[1]{{\Bigg[\! \Bigg[\! #1\Bigg]\! \Bigg]}}
\newcommand{\vsizeex}[2]{\resizebox{\width}{#1ex}{$#2$}}
\newcommand{\bracedot}{\vsizeex{1.2}{(}\mspace{-2mu}\raisebox{-0.15ex}{$\cdot$}\mspace{-2mu}\vsizeex{1.2}{)}}

\newcommand{\complexset}{\mathbb{C}}  
\newcommand{\realset}{\mathbb{R}} 
\newcommand{\polyset}{\mathbb{P}}
\newcommand{\nozero}{\setminus\{0\}}
\newcommand{\Dir}{{\mathscr{D}}}
\newcommand{\Neu}{{\mathscr{N}}}

\newcommand{\ds}[1]{{\displaystyle{#1}}}
\newcommand{\ts}[1]{{\textstyle{#1}}}

\newcommand{\half}{\frac{1}{2}}
\newcommand{\thalf}{\tfrac{1}{2}}

\newcommand{\divid}{\div}
\renewcommand{\div}{\operatorname{div}}
\providecommand{\grad}{\nabla}
\renewcommand{\grad}{\nabla}
\newcommand{\curl}{\operatorname{curl}}
\newcommand{\rot}{\operatorname{rot}}
\newcommand{\Ext}{\operatorname{Ext}}
\newcommand{\Id}{{I\mspace{-2mu}d}}
\newcommand{\interior}{\operatorname{int}}
\newcommand{\meas}{\operatorname{meas}}
\newcommand{\Res}{\operatorname{Res}}
\providecommand{\supp}{\operatorname{supp}}
\newcommand{\tr}{\operatorname{tr}}
\newcommand{\Span}{\operatorname{Span}}

\newcommand{\pd}{\partial}
\newcommand{\dd}{\mathrm{d}}
\newcommand{\DD}{\mathrm{D}}
\newcommand{\mo}{{\mspace{2.0mu}\text{{}-{}}1}}
\newcommand{\TT}{{\mspace{-2.0mu}\mathsf{T}\mspace{1.0mu}}}
\newcommand{\mT}{{\mspace{2.0mu}\text{{}-{}}\mathsf{T}}}

\newcommand{\mbb}[1]{\mathbb{#1}}

\newcommand{\bs}[1]{\boldsymbol{#1}} 
\newcommand{\bsa}{\bs{a}}
\newcommand{\bsb}{\bs{b}}
\newcommand{\bsc}{\bs{c}}
\newcommand{\bsd}{\bs{d}}
\newcommand{\bse}{\bs{e}}
\newcommand{\bsf}{\bs{f}}
\newcommand{\bsg}{\bs{g}}
\newcommand{\bsh}{\bs{h}}
\newcommand{\bsi}{\bs{i}}
\newcommand{\bsj}{\bs{j}}
\newcommand{\bsk}{\bs{k}}
\newcommand{\bsl}{\bs{l}}
\newcommand{\bsm}{\bs{m}}
\newcommand{\bsn}{\bs{n}}
\newcommand{\bso}{\bs{o}}
\newcommand{\bsp}{\bs{p}}
\newcommand{\bsq}{\bs{q}}
\newcommand{\bsr}{\bs{r}}
\newcommand{\bss}{\bs{s}}
\newcommand{\bst}{\bs{t}}
\newcommand{\bsu}{\bs{u}}
\newcommand{\bsv}{\bs{v}}
\newcommand{\bsw}{\bs{w}}
\newcommand{\bsx}{\bs{x}}
\newcommand{\bsy}{\bs{y}}
\newcommand{\bsz}{\bs{z}}
\newcommand{\bszero}{\bs{0}}
\newcommand{\bsalpha}{\bs{\alpha}}
\newcommand{\bsbeta}{\bs{\beta}}
\newcommand{\bsgamma}{\bs{\gamma}}
\newcommand{\bsdelta}{\bs{\delta}}
\newcommand{\bsepsilon}{\bs{\epsilon}}
\newcommand{\bsvarepsilon}{\bs{\varepsilon}}
\newcommand{\bszeta}{\bs{\zeta}}
\newcommand{\bseta}{\bs{\eta}}
\newcommand{\bstheta}{\bs{\theta}}
\newcommand{\bsvartheta}{\bs{\vartheta}}
\newcommand{\bsiota}{\bs{\iota}}
\newcommand{\bskappa}{\bs{\kappa}}
\newcommand{\bslambda}{\bs{\lambda}}
\newcommand{\bsmu}{\bs{\mu}}
\newcommand{\bsnu}{\bs{\nu}}
\newcommand{\bsxi}{\bs{\xi}}
\newcommand{\bspi}{\bs{\pi}}
\newcommand{\bsvarpi}{\bs{\varpi}}
\newcommand{\bsrho}{\bs{\rho}}
\newcommand{\bsvarrho}{\bs{\varrho}}
\newcommand{\bssigma}{\bs{\sigma}}
\newcommand{\bsvarsigma}{\bs{\varsigma}}
\newcommand{\bstau}{\bs{\tau}}
\newcommand{\bsupsilon}{\bs{\upsilon}}
\newcommand{\bsphi}{\bs{\phi}}
\newcommand{\bsvarphi}{\bs{\varphi}}
\newcommand{\bschi}{\bs{\chi}}
\newcommand{\bspsi}{\bs{\psi}}
\newcommand{\bsomega}{\bs{\omega}}
\newcommand{\bsGamma}{\bs{\Gamma}}
\newcommand{\bsDelta}{\bs{\Delta}}
\newcommand{\bsTheta}{\bs{\Theta}}
\newcommand{\bsLambda}{\bs{\Lambda}}
\newcommand{\bsXi}{\bs{\Xi}}
\newcommand{\bsPi}{\bs{\Pi}}
\newcommand{\bsSigma}{\bs{\Sigma}}
\newcommand{\bsUpsilon}{\bs{\Upsilon}}
\newcommand{\bsPhi}{\bs{\Phi}}
\newcommand{\bsPsi}{\bs{\Psi}}
\newcommand{\bsOmega}{\bs{\Omega}}

\newcommand{\mbf}[1]{\mathbf{#1}}
\newcommand{\mbfA}{\mbf{A}}
\newcommand{\mbfB}{\mbf{B}}
\newcommand{\mbfC}{\mbf{C}}
\newcommand{\mbfD}{\mbf{D}}
\newcommand{\mbfE}{\mbf{E}}
\newcommand{\mbfF}{\mbf{F}}
\newcommand{\mbfG}{\mbf{G}}
\newcommand{\mbfH}{\mbf{H}}
\newcommand{\mbfI}{\mbf{I}}
\newcommand{\mbfJ}{\mbf{J}}
\newcommand{\mbfK}{\mbf{K}}
\newcommand{\mbfL}{\mbf{L}}
\newcommand{\mbfM}{\mbf{M}}
\newcommand{\mbfN}{\mbf{N}}
\newcommand{\mbfO}{\mbf{O}}
\newcommand{\mbfP}{\mbf{P}}
\newcommand{\mbfQ}{\mbf{Q}}
\newcommand{\mbfR}{\mbf{R}}
\newcommand{\mbfS}{\mbf{S}}
\newcommand{\mbfT}{\mbf{T}}
\newcommand{\mbfU}{\mbf{U}}
\newcommand{\mbfV}{\mbf{V}}
\newcommand{\mbfW}{\mbf{W}}
\newcommand{\mbfX}{\mbf{X}}
\newcommand{\mbfY}{\mbf{Y}}
\newcommand{\mbfZ}{\mbf{Z}}

\newcommand{\mscr}[1]{\mathscr{#1}}
\newcommand{\msA}{\mscr{A}}
\newcommand{\msB}{\mscr{B}}
\newcommand{\msC}{\mscr{C}}
\newcommand{\msD}{\mscr{D}}
\newcommand{\msE}{\mscr{E}}
\newcommand{\msF}{\mscr{F}}
\newcommand{\msG}{\mscr{G}}
\newcommand{\msH}{\mscr{H}}
\newcommand{\msI}{\mscr{I}}
\newcommand{\msJ}{\mscr{J}}
\newcommand{\msK}{\mscr{K}}
\newcommand{\msL}{\mscr{L}}
\newcommand{\msM}{\mscr{M}}
\newcommand{\msN}{\mscr{N}}
\newcommand{\msO}{\mscr{O}}
\newcommand{\msP}{\mscr{P}}
\newcommand{\msQ}{\mscr{Q}}
\newcommand{\msR}{\mscr{R}}
\newcommand{\msS}{\mscr{S}}
\newcommand{\msT}{\mscr{T}}
\newcommand{\msU}{\mscr{U}}
\newcommand{\msV}{\mscr{V}}
\newcommand{\msW}{\mscr{W}}
\newcommand{\msX}{\mscr{X}}
\newcommand{\msY}{\mscr{Y}}
\newcommand{\msZ}{\mscr{Z}}

\newcommand{\mcal}[1]{\mathcal{#1}}
\newcommand{\mcA}{\mcal{A}}
\newcommand{\mcB}{\mcal{B}}
\newcommand{\mcC}{\mcal{C}}
\newcommand{\mcD}{\mcal{D}}
\newcommand{\mcE}{\mcal{E}}
\newcommand{\mcF}{\mcal{F}}
\newcommand{\mcG}{\mcal{G}}
\newcommand{\mcH}{\mcal{H}}
\newcommand{\mcI}{\mcal{I}}
\newcommand{\mcJ}{\mcal{J}}
\newcommand{\mcK}{\mcal{K}}
\newcommand{\mcL}{\mcal{L}}
\newcommand{\mcM}{\mcal{M}}
\newcommand{\mcN}{\mcal{N}}
\newcommand{\mcO}{\mcal{O}}
\newcommand{\mcP}{\mcal{P}}
\newcommand{\mcQ}{\mcal{Q}}
\newcommand{\mcR}{\mcal{R}}
\newcommand{\mcS}{\mcal{S}}
\newcommand{\mcT}{\mcal{T}}
\newcommand{\mcU}{\mcal{U}}
\newcommand{\mcV}{\mcal{V}}
\newcommand{\mcW}{\mcal{W}}
\newcommand{\mcX}{\mcal{X}}
\newcommand{\mcY}{\mcal{Y}}
\newcommand{\mcZ}{\mcal{Z}}

\newcommand*\widefbox[1]{\fbox{\hspace{0.05em}#1\hspace{0.05em}}}

\newenvironment{newalgorithm}[1][hbp]
  {\renewcommand{\tablename}{Algorithm} 
   \begin{table}[]%
  }{\end{table}}

\section{Introduction}
\label{introduction}
Diffuse-interface (phase-field) models have emerged as a reliable and versatile alternative to sharp-interface methods for multi-phase flows. The main idea of diffuse-interface models is to replace the sharp interface by a thin, but finite, transition region, where a partial mixing of macroscopically immiscible fluids is allowed, and to define a continuous order parameter (the phase variable) using mass or volume concentration. This leads to the main advantage of diffuse-interface models, which is the natural capability of capturing topological changes, e.g., break-up and coalescence. Another fundamental feature of diffuse-interface models is their rigorous thermodynamic basis. Most of the models satisfy a nonlinear stability relationship such as dissipation of a non-convex free-energy functional, which endows these models with a strong mathematical foundation. Over the years, diffuse interface models have been analyzed theoretically{}\cite{EllPROC1989,EllZheARMA1986,GurPD1996,AbeCMP2009} and used widely in many applications{}\cite{AndMcfWhePD2000,BoyLapTPM2010,KimLowIFB2005,OdeHawPruM3AS2010,Brummelen:2016qa}, while several reviews\cite{AndMcfWheARFM1998,GomZeeBOOK-CH2016,KimCCP2012} have appeared of phase-field models in the context of fluid mechanics. 
\par
In this work we derive a new form of diffuse-interface model, and a corresponding practical time-stepping method, for binary-fluid flows whose components are incompressible with \emph{different} densities. The model is a so-called quasi-incompressible\cite{LowTruPRSLA1998} Navier--Stokes-Cahn--Hilliard (NSCH) model, which is based on the Navier--Stokes equations coupled with the convective Cahn--Hilliard equation. The Cahn--Hilliard equation is a fundamental continuum model for phase separation individually, which was introduced by Cahn~\& Hilliard\cite{CahHilJChP1958} in~1958. Since it is a fourth-order singularly-perturbed nonlinear parabolic PDE, it is challenging to solve it numerically. Coupling it with the Navier--Stokes equations increases the mathematical complexity of the model which makes it difficult to design provably stable numerical schemes. We will introduce a linear and stable time-integration scheme for our derived quasi-incompressible NSCH model. The scheme preserves the structural properties of the model, viz. mass conservation and energy dissipation, at the semi-discrete level independent of the time-step size.
\par
To explain the principal idea behind our model derivation, let us give a short overview of existing NSCH models. The concepts underlying diffuse-interface models for immiscible binary fluids were introduced in the classical works by Van~der~Waals{}\cite{Waa1893} and~Korteweg{}\cite{KortewegANSES1901}. One of the first appearances of the coupled NSCH model, called `Model~H', can be found in the review by Hohenberg \&~Halperin{}\cite{HohHalRMP1977}. This model assumes a solenoidal mixture-velocity field ($\div\mathbf{v} = 0$) and the coupling takes place via a convective term in the Cahn--Hilliard equation and an additional stress tensor term in the Navier--Stokes equations. A derivation of Model~H in the rational continuum mechanics framework was presented by~Gurtin, Polignone \&~Vi\~{n}als\cite{GurPolVinM3AS1996}, who established compatibility with the second law of thermodynamics using the standard Coleman--Noll procedure\cite{ColNolARMA1963,TruBOOK1984,GurFriAnaBOOK2010}. Importantly, Model~H assumes that the individual densities and the mixture density are uniformly constant. However, that assumption makes Model~H \emph{in}applicable for the large variety of problems with \emph{non-matching component densities}. 
\par
Lowengrub \&~Truskinovsky{}\cite{LowTruPRSLA1998} and Abels, Garcke \&~Gr{\"u}n\cite{AbeGarGruM3AS2012} extended Model~H to thermodynamically-consistent models for \emph{non-matching} densities using two different modelling assumptions on the velocity field (mass averaged and volume averaged velocity, respectively) and on the phase-field variable (mass concentration and volume fraction, respectively). Although the models by Lowengrub \&~Truskinovsky and Abels~\emph{et al.} are developed to represent the same type of flow dynamics, the resulting equations have significant differences due to the underlying modeling choices. More precisely, by adopting a \emph{mass}-averaged mixture velocity, the model by Lowengrub \&~Truskinovsky (see also Kim \&~Lowengrub\cite{KimLowIFB2005}) leads to a (generally) \emph{non-solenoidal} velocity field ($\div\mathbf{v} \neq 0$) and additional nonlinear terms compared to Model~H, whereas the \emph{volume}-averaged mixture-velocity model of Abels~\emph{et al.} has a \emph{solenoidal} velocity field and a modified momentum equation (similar to an earlier model obtained by Boyer\cite{BoyCF2002} through asymptotic arguments). 
\par
Noteworthy are also the recent follow-up works by Shen, Yang \&~Wang{}\cite{SheYanWanCCP2013} and Aki, Dreyer, Giesselmann \&~Kraus{}\cite{AkiDreyM3AS2014}, who independently introduced seemingly different quasi-incompressible NSCH models based on a volume-fraction phase variable and non-solenoidal mass-averaged velocity. Shen~\emph{et al.} obtained their model without reference to a Coleman--Noll procedure, but were able to demonstrate global energy dissipation for their model. The work of Aki~\emph{et al.} actually contains the derivation of a more general Navier--Stokes--Korteweg/Cahn--Hilliard/Allen--Cahn model in the non-isothermal and isothermal case (and also includes phase transition). Their derivation leads to a Korteweg stress tensor term as commonly seen in Navier--Stokes--Korteweg models{}\cite{AndMcfWheARFM1998,HeiMalIJES2010,LiuGomEvaHugLanJCP2013,GomZeeBOOK-CH2016,FreKotARMA2017}. Other simpler models have also been introduced,\cite{DingSpeShuJCP2007,SheYanSISC2010,BoyLapTPM2010} which neglect certain terms in the above-mentioned quasi-incompressible models. These models seem to be \emph{in}consistent with mixture theory and the second law of thermodynamics, and have been studied mostly because of their simpler implied numerical treatment, which is closer to that of the variable-density Navier--Stokes equations\cite{GueQuaJCP2000} and Volume-of-fluid (VOF) method\cite{ScaZalARFM1999}.
%
%
%
%
%
\par
The first objective of this paper is to derive from mixture theory the following new form of quasi-incompressible NSCH model:
\hspace{-1.3cm}
\begin{equation}
\label{intro_NSCH} 
\begin{aligned}
& \dot{\phi} + \phi  \nabla \cdot \mathbf{v} = \nabla \cdot \left(m(\phi) \nabla \mu \right)  \\
& \mu = \frac{\sigma}{\epsilon} \frac{df}{d\phi} - \sigma\varepsilon \Delta \phi  - \frac{p}{\hat{\rho}} \frac{d \hat{\rho}}{d \phi}  \\
 &\rho \dot{\mathbf{v}} = - \nabla p - \phi \nabla \mu  + \frac{p}{\hat{\rho}} \frac{d \hat{\rho}}{d \phi}  \nabla \phi  +  \nabla \cdot \big( \eta(\phi) ( 2  \mathbf{D}+ \lambda (\nabla \cdot \mathbf{v})\mathbf{I}) \big)- \hat{\rho} g {\boldsymbol{\jmath}}  \\
 & \nabla \cdot \mathbf{v} = \alpha \nabla \cdot (m(\phi)\nabla \mu)  \\
 & \dot{\rho} + \rho \nabla \cdot  \mathbf{v}= 0 \\
&\hat{\rho}= \rho_1 \frac{1+\phi}{2} + \rho_2 \frac{1-\phi}{2}
\end{aligned} 
\end{equation} 
and to prove its thermodynamic consistency. Here, $\phi$~is the volume fraction as phase variable, $\mathbf{v}$ is the mass-averaged mixture velocity, $\mu$ is the chemical potential, $p$ is the pressure and $\rho = \hat{\rho} \equiv \hat{\rho}(\phi)$ is the mixture density. Moreover, $\rho_1$ and $\rho_2$ are the component densities (assumed constant), $\sigma$~is a constant related to the fluid-fluid surface tension, $\varepsilon$~is the interface thickness parameter, $\eta(\phi)$ is the mixture viscosity, $m(\phi)$ is a mobility function, $\alpha= \frac{\rho_2-\rho_1}{\rho_2+\rho_1}$ is a constant,
and $\lambda\geq{-}2/d$ is a constant ($d$~being the dimension). Furthermore, the term $\hat{\rho} g {\boldsymbol{\jmath}}$ stands for the gravitational force and $\mathbf{D} $ is the symmetric velocity gradient tensor. Note that we presented model~(\ref{intro_NSCH}) with a compatible dual-density form, viz. $\rho$ and $\hat{\rho}$. The appearance of two equivalent but distinct representations of the mixture density specifically serves in the construction of a linear energy-stable time-integration scheme; see below.
\par
The distinguishing feature of model~(\ref{intro_NSCH}) is the term $\frac{p}{\hat{\rho}} \frac{d \hat{\rho}}{d \phi}$ (in~(\ref{intro_NSCH})$_2$ and (\ref{intro_NSCH})$_3$), which has not appeared in previous quasi-incompressible NSCH models. We will show that our model is nevertheless equivalent to the model by Shen, Yang and Wang\cite{SheYanWanCCP2013}, as well as the corresponding model in Aki, Dreyer, Giesselmann and Kraus\cite{AkiDreyM3AS2014}. We thereby provide a unification of the existing models based on a mass-averaged velocity and volume-fraction phase variable. 
It should be noted that in the case of matching densities, all of the aforementioned NSCH models for variable-density two-phase flows, including our model, reduce to Model~H. 
\par
\par
The development of stable and efficient time-integration methods for NSCH systems with non-matched densities is challenging on account of the strong coupling between the equations and the various nonlinear terms, in particular those involving the density. An important notion in the analysis of stability of time-integration schemes is that of {\em energy stability\/}, which implies that the discrete time-integration scheme inherits the fundamental free-energy dissipation property of the underlying PDE system.\cite{EyrPROC1998,GomZeeBOOK-CH2016} Unconditional energy stability is crucial for  robustness of diffuse-interface simulations and for proper resolution of interfaces as it enables, in principle, arbitrary time and space discretizations and hence provides a basis for adaptive refinement. 
\par
In the case of the Cahn--Hilliard equation, various energy-stable semi-discrete (continuous in space, discrete in time) schemes of first and second order have appeared over the past years,\cite{WisWanLowSINUM2009,HuWisWanLowJCP2009,GomHugJCP2011,TieGuiACME2015,WuZwiZee2014,GomZeeBOOK-CH2016} some of which are \emph{linear}, i.e., they require only one solution of a linear-algebraic system per time step.\footnote{Linear energy-stable schemes typically require that the nonconvex free energy has \emph{quadratic} growth (possibly obtained by \emph{truncating} super-quadratic growth).\cite{GomZeeBOOK-CH2016}} These schemes have been extended to NSCH systems with matched densities\cite{KimKanLowJCP2004,HanWangJCP2015} and for quasi-incompressible NSCH systems with a solenoidal mixture-velocity field\cite{DingSpeShuJCP2007,SheYanSISC2010,MinNMPDE2013,SheYanSINUM2015,GarHinKahAPNUM2016}. However, \emph{non}-solenoidal quasi-incompressible NSCH systems have only received scant consideration so far. The reason for this is that existing techniques for solenoidal systems can not be straightforwardly extended to non-solenoidal systems (which apart from being non-solenoidal also have auxiliary pressure terms). One recent scheme is by Guo, Lin \&~Lowengrub\cite{GuoLinJCP2014}, who proposed a complicated nonlinear energy-stable scheme for the model by Lowengrub \&~Truskinovsky\cite{LowTruPRSLA1998}. 
%
\par
\par
The second objective of this work is to introduce a simple {\em linear\/} energy-stable time-integration scheme for model~\eqref{intro_NSCH}. The proposed scheme is energy stable independent of the time-step size, provided that the mixture density remains positive. Moreover, under standard assumptions on the boundary conditions, the scheme is mass conservative. 
To our knowledge, this is the first work with a linear energy-stable time-integration scheme for a quasi-incompressible NSCH system with non-solenoidal velocity field. In contrast to existing time-integration schemes, we make use of a dual-density formulation involving both~$\rho$ and~$\hat{\rho}$ in its discretization (see Section~\ref{Section3_2:en_dis_time}). The dual-density formulation provides the basis for the energy stability of the scheme, especially for numerical computations with high component-density ratios.
\par
The remainder of this paper is structured as follows: In Section~\ref{section2}, we derive the new form of thermodynamically consistent quasi-incompressible model for two-phase flows and discuss its equivalence to other models. Section~\ref{section3} presents a weak form of the system and a time-discrete scheme including proofs of continuous and discrete energy dissipation. In Section~\ref{section4}, we exhibit the properties of our fully-discrete scheme based on numerical computations using standard finite elements as a spatial discretization. Section~\ref{section5} presents concluding remarks. 
%
%
\section{Derivation of the Quasi-Incompressible NSCH Model}
\label{section2}

In this section, we present a new form of quasi-incompressible diffuse--interface NSCH model with gravity, inspired by Aki, Dreyer, Giesselmann and Kraus\cite{AkiDreyM3AS2014}. 
We derive the model for an isothermal mixture with a thin interfacial region between two immiscible and incompressible fluids. The derivation is based on the theory of mixtures\cite{BowBOOK-CH1976,TruBOOK1984}, which assumes that the mass, momentum and energy are conserved at the constituent level as well as within the mixture, see Sections~\ref{cont_theory}--\ref{2ndaxiom}. The model is derived in Sections~\ref{colemanNoll}--\ref{choice} via application of the standard Coleman--Noll procedure\cite{ColNolARMA1963,GurFriAnaBOOK2010} to the energy dissipation inequality, which enforces the second law of thermodynamics and endows the model with thermodynamic consistency. We demonstrate the equivalence of the model to other existing models in Section~\ref{sec:Equiv}, and present a non-dimensionalisation and interface-profile analysis in Sections~\ref{sec:nondim}--\ref{sec:ffinterface}.
%
\subsection{Continuum theory of mixtures} \label{cont_theory}
To provide a setting for our model, we consider an open bounded domain $\Omega \subset \mathbb{R}^d$ ($d = 2,3$) and label the fluids of the binary mixture by $k =1,2$. Let the \textit{volume fraction}{}\footnote{The volume fraction and mass concentration are defined by division of volume and mass of the fluids and the mixture. However, their actual definitions are the limit of the volume and the mass of species per unit volume and mass, i.e. $\phi_{k}  = \lim_{V \to 0 } \frac{V_{k}}{V} $ and $c_{k} = \lim_{V \to 0 } \frac{M_{k}}{M} $, respectively. Here, we drop the limit notation for simplicity.}  be $\phi_k = V_k/V$, where $V_k$ is the volume of the $k$th component and $V$ is the total volume of the mixture. Similarly, the \textit{mass concentration} can be defined as $c_k=M_k/M$ with the mass of the $kth$ component, $M_k$, and the total mass of the mixture, $M$. Particularly, since $M = \sum_{k} M_k$ and $V = \sum_{k} V_k$, we have $\sum_{k} c_k = \sum_{k} \phi_k = 1$.
\par
Each component of the fluid is assumed to be incompressible, so we describe a \textit{constant} specific density $\rho_k = M_{k}/V_{k}$ for $k=1,2$. Similarly, \textit{partial} mass densities can be defined as $\tilde{\rho}_{k} = M_k/V$, which are \textit{non-constant} and their sum is the total mass density of the mixture, $ \rho = \sum_{k} \tilde{\rho}_k$. Note that the three densities are related by 
\begin{equation}
\tilde{\rho}_{k} = \rho_k \phi_k = \rho c_k. \label{fraction_relation}
\end{equation} 
\par
Unlike the individual components, the mixture is not assumed to be incompressible. We assume the mixture density to be a function of an order parameter $\phi$ such that $\rho \equiv \hat{\rho}(\phi)$, where $\phi$ is related to the volume fractions according to $\phi := \phi_1 -\phi_2$. It is to be noted that the total density can generally be chosen as a function of various combinations of volume or mass fractions, some of which are mentioned by Abels, Garcke \&~Gr\"un\cite{AbeGarGruM3AS2012}. We define the order parameter $\phi$ based on the volumes $V_1, V_2$ and $V$ by $V_1/V = (1+\phi)/2$ and $V_2/V = (1-\phi)/2$, which leads to the relations $ \tilde{\rho_1} =  \rho_1(1 + \phi)/2$ and $\tilde{\rho_2} =  \rho_2(1 - \phi)/2$, for $\phi \in [-1, 1]$. This implies the following algebraic equation for the mixture density:
\begin{equation}
\label{rho_algebraic}
\displaystyle
 \tilde{\rho}_1 + \tilde{\rho}_2 = \rho_1 \frac{1+ \phi}{2} + \rho_2 \frac{1 - \phi}{2} =: \hat{\rho}(\phi). 
\end{equation}
Note that $\hat{\rho}(\phi)$ coincides with the mixture density $\rho$, where both represent the non-constant mixture density. Although $\rho$ and $\hat{\rho}:= \hat{\rho}(\phi)$ coincide, they serve separate roles in the formulation and, accordingly, can not be used interchangeably. We introduce the mixture velocity $\mathbf{v}$ as the \textit{mass averaged/barycentric} velocity:
\begin{equation}
 \mathbf{v} = \frac{1}{\rho} \sum\limits_k {\tilde{\rho}}_{k} \mathbf{v_{k}}, \label{mass_avgd}
 \end{equation}
where $\mathbf{v}_{k}$ denotes the velocity of component $k$.  
%
\subsection{Balance equations} \label{balance}
Balance of mass for the individual constituents $k=1, 2$ can be written as
\begin{equation}
\partial_{t} \tilde{\rho}_k + \nabla \cdot (\tilde{\rho}_k \mathbf{v}_{k}) = 0, \label{mass}
\end{equation} 
where we assume no mass production of or conversion between the constituents, which is reflected in the zero right hand side. 
%
\par
We replace $\tilde{\rho}_1 $ and $\tilde{\rho}_2$ by $ \rho_1(1 + \phi)/2$ and $ \rho_2(1 - \phi)/2$ in \eqref{mass}, respectively, use the definition of diffusion velocity according to $\displaystyle {\mathbf{w}}_{k}= {\mathbf{v}}_{k} - \mathbf{v}$ and relation \eqref{fraction_relation} to obtain
\begin{align}
&\partial_{t} \left(\frac{\rho_1}{2} ( 1+\phi) \right) + \nabla \cdot \left(\frac{\rho_1}{2} ( 1+\phi) \mathbf{v}\right) + \nabla \cdot \left( \tilde{\rho}_{1} {\mathbf{w}}_{1}\right) = 0 \label{mass_1} \\
&\partial_{t} \left(\frac{\rho_2}{2} ( 1-\phi) \right) + \nabla \cdot \left(\frac{\rho_2}{2} ( 1-\phi) \mathbf{v}\right) + \nabla \cdot \left( \tilde{\rho}_{2} {\mathbf{w}}_{2}\right) = 0  \label{mass_2}
\end{align}
Multiplying \eqref{mass_1} and \eqref{mass_2} by $1/\rho_1$ and $1/\rho_2$, respectively, and subtracting the resulting equations gives the phase equation:
\begin{equation}
\dot{\phi} + \phi \nabla \cdot \mathbf{v} + \nabla \cdot \mathbf{h}= 0 \label{phase_eq}
\end{equation}
where $ \displaystyle \mathbf{h} = \phi_1 {\mathbf{w}}_{1} - \phi_2 {\mathbf{w}}_{2}$ is the mass flux due to diffusive changes in the phase variable $\phi$ and $\displaystyle  \dot{x}$ is the material derivative $\displaystyle  \dot{x} = \partial_{t} x + \mathbf{v} \cdot \nabla x$ for any field variable $x$. 
\par
Similarly, summing the equations in \eqref{mass_1}--\eqref{mass_2} and using the identity $\rho_1 \phi_1 {\mathbf{w}}_{1} +\rho_2 \phi_2 {\mathbf{w}}_{2}=0$, which can be inferred from \eqref{mass_avgd}, we obtain the following quasi-incompressibility relation for the mixture velocity $\mathbf{v}$:
\begin{equation}
\nabla \cdot \mathbf{v} + \alpha\,  \nabla \cdot \mathbf{h} = 0, \label{quasi_incomp}
 \end{equation}
where 
\begin{equation}
\alpha := \frac{\rho_2 - \rho_1}{\rho_2+\rho_1} \label{alpha}
\end{equation}
is a constant. By combining \eqref{phase_eq} and \eqref{quasi_incomp} we obtain the following relation between the phase variable and the mixture velocity:
\begin{equation}
\nabla \cdot \mathbf{v} = \frac{\alpha}{1-\alpha \phi} \dot{\phi}. \label{vel_phase_rel}
\end{equation}
\begin{remark}
If the specific densities are equal, that is,  ${\rho}_1 ={\rho}_2$, then \eqref{quasi_incomp} 
reduces to $\nabla \cdot \mathbf{v} = 0$. The barycentric velocity is therefore solenoidal if the specific densities of the two components coincide, but not generally otherwise.
\end{remark}
Balance of mass is not only satisfied for the components, but also for the mixture. Indeed, summing \eqref{mass} in $k$, we obtain
\begin{equation}
\partial_{t} \rho + \nabla \cdot (\rho \mathbf{v}) = 0 \quad \text{or} \quad \dot{\rho} + \rho \nabla\cdot\mathbf{v} = 0 \label{mass_global}
\end{equation} 
and similarly
\begin{equation}
\partial_{t} \hat{\rho} + \nabla \cdot ( \hat{\rho}\mathbf{v}) = 0, \label{mass_global_dens_alg}
\end{equation} 
where $\hat{\rho}$ is the algebraic definition of the mixture density as in \eqref{rho_algebraic}. Then, from \eqref{mass_global_dens_alg} it follows that
\begin{equation}
\nabla \cdot \mathbf{v} = - \frac{1}{\hat{\rho}} \dot{\hat{\rho}} = - \frac{1}{\hat{\rho}} \frac{d \hat{\rho}}{d \phi} \dot{\phi}, \label{velocity_div}
\end{equation} 
which is another identity for the divergence of the mixture velocity in \eqref{quasi_incomp}. Additionally, from \eqref{vel_phase_rel} and \eqref{velocity_div}, we have
\begin{equation}
 - \frac{1}{\hat{\rho}} \frac{d \hat{\rho}}{d \phi} =  \frac{\alpha}{1-\alpha \phi}, \label{relation2}
 \end{equation}
 where $ d \hat{\rho} /d \phi$ is constant.
 \par
For our model, we are interested in the velocity of the mixture, $\mathbf{v}$, more than the specific velocities, ${\mathbf{v}}_k$. Therefore, instead of introducing the momentum equation for each constituent, we will consider the mixture momentum balance
\begin{equation}
\partial_t (\rho \mathbf{v}) + \nabla \cdot \left( \rho \mathbf{v} \otimes \mathbf{v}\right) = \nabla \cdot \mathbf{T} +\mathbf{b}, \nonumber
\end{equation}
where $\mathbf{T}$ is the stress tensor of the mixture and $\mathbf{b}$ is the body force. By \eqref{mass_global}, this can be simplified to
\begin{equation}
\rho \dot{\mathbf{v}}= \nabla \cdot \mathbf{T} + \mathbf{b}. \label{momentum}
\end{equation}
We restrict our considerations here to body-force terms corresponding to gravitational effects. Accordingly, 
$\mathbf{b}$ in~\eqref{momentum} is replaced by $-\hat{\rho} g {\boldsymbol{\jmath}} $ with $g$ as 
gravitational acceleration and ${\boldsymbol{\jmath}} $ as the vertical unit vector.

\begin{remark}
An alternative mixture velocity definition to~\eqref{mass_avgd} is the \textit{volume-averaged} velocity $\displaystyle \mathbf{v} = \phi_{1} {\mathbf{v}}_{1} + \phi_{2} {\mathbf{v}}_{2}$, which is employed in other works\cite{AbeGarGruM3AS2012,BoyCF2002,GarHinKahAPNUM2016}. This choice would reduce~\eqref{quasi_incomp} to ${\nabla \cdot \mathbf{v}} = 0$ even for non-matching densities, i.e. for ${\rho}_1 \neq {\rho}_2$. However, this simplification requires a change in the conservation equations for mass and momentum to obtain a thermodynamically consistent model. More explicitly, volume averaged velocity leads to an additional term in the momentum conservation equation related to diffusion of components in order to describe the density flux.\cite{AbeGarGruM3AS2012}
\end{remark}
%
\subsection{The second axiom of thermodynamics} \label{2ndaxiom}
Similar to the momentum equation \eqref{momentum}, we assume the balance of internal energy not for the individual constituents, but for the mixture. However, instead of introducing the balance of energy equation here, we directly consider the second axiom of thermodynamics in the form of an energy dissipation inequality.\cite{GurPolVinM3AS1996} The connection between the internal-energy density and the dissipation inequality is made via the Helmholtz free-energy density, $\rho \psi$. 
\par
Let us consider the following energy dissipation inequality in terms of the free energy $\rho \psi$, the kinetic  and gravitational potential energy $K({\mathcal{P}})$ and $G({\mathcal{P}})$, the total work done by macro- and micro-stresses $ W({\mathcal{P}}) $ and the energy transported by diffusion $ M({\mathcal{P}}) $:
\begin{equation}
\frac{d}{d t} \int_{{\mathcal{P}}(t)} \Big(\rho \psi  + K({\mathcal{P}}) + G({\mathcal{P}})\Big) \, dv \leq W({\mathcal{P}}) - M({\mathcal{P}}), \label{dissipation}
\end{equation}
where $\mathcal{P}$ denotes any time-dependent subset of $\Omega$, which moves with the mixture velocity $\mathbf{v}$.
\par
Equation \eqref{dissipation} implies that the work on ${\mathcal{P}}$ plus the rate at which free energy is transported to ${\mathcal{P}}$ by diffusion always exceed the sum of the increase in free, kinetic and the gravitational potential energy. Specifically, following Gurtin\cite{GurPD1996} we consider
\begin{align}
& \hspace{3cm} K({\mathcal{P}}) = \frac{1}{2} \rho |\mathbf{v}|^2,\quad G({\mathcal{P}}) = \hat{\rho}g y, \label{eqn_def_1} \\
& W({\mathcal{P}}) = \int_{\partial \mathcal{P}} \mathbf{T} \mathbf{n}\cdot \mathbf{v} \, da + \int_{\partial  {\mathcal{P}}} \dot{\phi}\boldsymbol\xi\cdot\mathbf{n}\, da, \quad 
 M({\mathcal{P}}) = \int_{\partial {\mathcal{P}}} \mu \mathbf{h}\cdot \mathbf{n}\, da, \label{eqn_def_2}
\end{align}
 where $y$ is the vertical coordinate, $\mathbf{n}$ is the exterior unit normal vector to the boundary of $\mathcal{P}$, $\partial {\mathcal{P}}$, $\mu$ denotes a chemical potential and $\boldsymbol\xi$ is a vectorial surface force as a component of micro-stress.
\par
Upon substituting \eqref{eqn_def_1} and \eqref{eqn_def_2} into \eqref{dissipation} and applying the Reynolds transport theorem, we obtain
\begin{equation}
\label{local_dissipation-1}
\begin{aligned}
&\int_{{\mathcal{P}}} \frac{\partial}{ \partial t} (\rho \psi) \, dv  +   \int_{\partial \mathcal{P}} \rho \psi    \mathbf{v} \cdot\mathbf{n}\, da + \int_{{\mathcal{P}}}  \frac{1}{2} \frac{\partial}{\partial t} (\rho |\mathbf{v}|^2 )  \, dv  \\
& +\int_{\partial \mathcal{P}}  \frac{1}{2}   \rho |\mathbf{v}|^2 \mathbf{v} \cdot\mathbf{n}\, da + \int_{{\mathcal{P}}}   \partial_{t} \hat{\rho} g y  \, dv + \int_{\partial \mathcal{P}}  \hat{\rho} g y   \mathbf{v} \cdot\mathbf{n}\, da  \\
& \leq    \int_{\partial \mathcal{P}} \mathbf{T} \mathbf{n}\cdot \mathbf{v} \, da + \int_{\partial  {\mathcal{P}}} \dot{\phi}\boldsymbol\xi\cdot\mathbf{n}\, da - \int_{\partial {\mathcal{P}}} \mu \mathbf{h}\cdot \mathbf{n}\, da. 
\end{aligned}
\end{equation}
Then using the conservation laws \eqref{mass_global}, \eqref{mass_global_dens_alg} and \eqref{momentum} on the left hand side and the divergence theorem on the right hand side of the inequality,  the following local form of \eqref{local_dissipation-1} is obtained:
\begin{equation}
\rho \dot{\psi}- \mathbf{T}:\nabla \mathbf{v} - \nabla \cdot ( \dot{\phi} \boldsymbol\xi ) + \nabla \cdot (\mu \mathbf{h}) \leq 0. \label{local_dissipation}
\end{equation}
By applying the product rule to the last two terms in \eqref{local_dissipation} and introducing
\begin{equation}
\dot{\overline{\rho \psi}} = \rho \dot{\psi}  + \psi \dot{\rho}  \quad \text{and} \quad \mathbf{T} = \mathbf{T}_{0} - p\mathbf{I}, \nonumber
\end{equation}
we obtain 
\begin{equation}
\dot{\overline{\rho \psi}} - \psi \dot{\rho} - \mathbf{T}_{0}:\nabla \mathbf{v}  +p \nabla \cdot \mathbf{v} - \dot{\phi} \nabla \cdot \boldsymbol\xi
-\boldsymbol\xi   \nabla (\dot{\phi} )+ \mu \nabla \cdot \mathbf{h} + \mathbf{h}\cdot \nabla \mu \leq 0. \label{local_dissip_2}
\end{equation}
The partition $\mathbf{T} = \mathbf{T}_{0} - p\mathbf{I}$ is such that $- p\mathbf{I}$ corresponds to the hydro-static part of the stress tensor $\mathbf{T}$. The reduced dissipation inequality \eqref{local_dissip_2} forms the basis for determining our class of admissible constitutive relations and the independent variables. 
\begin{remark}
We define $\psi$ as the density of the Helmholtz free energy with respect to the mass measure and, accordingly, $\rho \psi$ as the volumetric free-energy density. In many other formulations\cite{GurPolVinM3AS1996,AbeGarGruM3AS2012}, $\psi$ is defined directly as a volumetric free energy. Our choice for~$\psi$ is crucial to simplify the final derived model such that the momentum equation has a density-free capillary force term, which enables the construction of a linear time-integration scheme.
\end{remark}
%
\subsection{Constitutive equations and Coleman--Noll procedure}
\label{colemanNoll}
We consider the independent variables: $\phi, \mathbf{v}, \mu, p$ and the dependent functions: $\rho \psi, \hat{\rho}, \mathbf{T}, \mathbf{h}, \mathbf{\boldsymbol\xi}$. The explicit form of the functions in terms of independent variables is determined via the Coleman--Noll procedure, except $\hat{\rho}$ which has been defined previously in \eqref{rho_algebraic}. We start with the constitutive assumption for the energy density  $\rho \psi$ according to ${\rho \psi} = \widehat{\rho \psi} (\phi, \grad \phi)$, which gives
\begin{equation}
\dot{\widehat{\rho\psi} } =  \frac{\partial (\widehat{\rho\psi})}{\partial \phi}  \dot{\phi}+ \frac{\partial (\widehat{\rho \psi})}{\partial \nabla \phi}  \dot{\overline{\nabla \phi}}. \label{rhopsi}
\end{equation}
Furthermore, on account of $\dot{\overline{\nabla \phi}} = \nabla (\dot{\phi} ) - \nabla \mathbf{v}\cdot \nabla\phi$, it follows from \eqref{rhopsi} that
 \begin{equation}
\dot{\widehat{\rho\psi} } =  \frac{\partial (\widehat{\rho\psi})}{\partial \phi}  \dot{\phi}+ \frac{\partial (\widehat{\rho \psi})}{\partial \nabla \phi}  \nabla (\dot{\phi} )  - \left( \frac{\partial (\widehat{\rho \psi})}{\partial \nabla \phi} \otimes \nabla\phi \right) :\nabla \mathbf{v} . \label{rhopsi_new}
\end{equation}
Using the identities \eqref{quasi_incomp}, \eqref{mass_global} and \eqref{rhopsi_new}, the inequality \eqref{local_dissip_2} can be recast into
\begin{equation}
\begin{split}
\frac{\partial (\widehat{\rho\psi})}{\partial \phi} \, \dot{\phi}+ \frac{\partial (\widehat{\rho \psi})}{\partial \nabla \phi}  \,\nabla (\dot{\phi} ) &-\left( \frac{\partial (\widehat{\rho \psi})}{\partial \nabla \phi} \otimes \nabla\phi \right) :\nabla \mathbf{v} + (\widehat{\rho\psi})\mathbf{I} :\nabla \mathbf{v} - \mathbf{T}_0 :\nabla \mathbf{v} \\
& \hspace{-1.5cm}+ \boxed{p \nabla \cdot \mathbf{v} }
- \dot{\phi} \nabla \cdot \boldsymbol\xi -\boldsymbol\xi \nabla (\dot{\phi} )  + \boxed{\mu \, \nabla \cdot \mathbf{h} }+ \mathbf{h}\cdot \nabla \mu \leq 0,
\end{split} \label{local_dissip_3}
\end{equation}
where $p\mathbf{I}: \nabla\mathbf{v} =  p\nabla\cdot \mathbf{v} $.
In \eqref{local_dissip_3}, we have written each term as a contraction of a dependent and an independent variable except for the boxed terms. The boxed terms can be recast into the same form by means of the relations 
\begin{equation}
 \nabla\cdot \mathbf{v} = - \frac{1}{\hat{\rho}} \frac{d \hat{\rho}}{d \phi} \dot{\phi} \quad \text{and}  \quad  \nabla \cdot \mathbf{h} =  - \dot{\phi} - \phi \nabla \cdot \mathbf{v}, \label{relation1}
 \end{equation}
which results in the following local dissipation inequality 
\begin{equation}\label{local_dissip_4}
\begin{split}
\frac{\partial (\widehat{\rho\psi})}{\partial \phi} \, \dot{\phi} + \frac{\partial (\widehat{\rho \psi})}{\partial \nabla \phi}  \,\nabla (\dot{\phi} ) &-\left( \frac{\partial (\widehat{\rho \psi})}{\partial \nabla \phi} \otimes \nabla\phi \right) :\nabla \mathbf{v} + (\widehat{\rho\psi})\mathbf{I} :\nabla \mathbf{v} - \mathbf{T}_0 :\nabla \mathbf{v} \\
& \hspace{-2.0cm} - \frac{p}{\hat{\rho}} \frac{d \hat{\rho}}{d \phi}   \dot{\phi} - \dot{\phi} \nabla \cdot \boldsymbol\xi
-\boldsymbol\xi  \nabla (\dot{\phi}) + \mu (- \dot{\phi} - \phi \nabla \cdot \mathbf{v})+ \mathbf{h}\cdot \nabla \mu \leq 0.
\end{split}
\end{equation}
Replacing $\mathbf{T}_{0}$ by $\mathbf{T}_{0} = \mathbf{T} + p\mathbf{I}$ and rearranging terms, we obtain
\begin{equation}
\begin{split}
& -\nabla \mathbf{v} : \left( \mathbf{T} + p \mathbf{I}+ \frac{\partial (\widehat{\rho \psi})}{\partial \nabla \phi} \otimes \nabla\phi + (\mu \phi - \widehat{\rho\psi} )\mathbf{I}\right) \\
&+\dot{\phi} \left(\frac{\partial (\widehat{\rho\psi})}{\partial \phi}  - \nabla \cdot \boldsymbol\xi - \mu - \frac{p}{\hat{\rho}} \frac{d \hat{\rho}}{d \phi}    \right) \\
& +\nabla (\dot{\phi} ) \left( \frac{\partial (\widehat{\rho \psi})}{\partial \nabla \phi} - \boldsymbol\xi  \right)+  \mathbf{h}\cdot \nabla \mu \leq 0.
\end{split}\label{local_dissip_5}
\end{equation}
Based on the inequality \eqref{local_dissip_5}, we choose the constitutive relations as:
\begin{equation}
\begin{split}
\mathbf{{T}} &= \mathbf{\hat{T}} (\phi, \grad \phi, \grad \mathbf{v}) \\
{\boldsymbol\xi} &= \hat{\boldsymbol\xi}(\phi, \grad \phi) \\
\mathbf{h} &= \mathbf{\hat{h}}(\phi, \grad \phi, \mu, \grad \mu) ,
\end{split} \nonumber
\end{equation}
where according to the standard Coleman--Noll argument, the form of the functions can be selected arbitrarily. To avoid that a variation of $\dot{\phi }$ and $\nabla (\dot{\phi} )$ leads to a violation of the inequality \eqref{local_dissip_5}, we insist that:
\begin{align}
& \frac{\partial (\widehat{\rho\psi})}{\partial \phi}  - \nabla \cdot \boldsymbol\xi - \mu - \frac{p}{\hat{\rho}} \frac{d \hat{\rho}}{d \phi}   = 0 \label{equation1_a} \\
& \frac{\partial (\widehat{\rho \psi})}{\partial \nabla \phi} - \boldsymbol\xi  = 0, \label{equation1_b}
\end{align}
which provide equations for $\mu$ and $ \boldsymbol\xi$, respectively. The following constitutive relations then ensure compliance with \eqref{local_dissip_5}:
\begin{align}
 &\mathbf{h} = -m(\phi) \nabla \mu, \label{equation2_b} \\
&\mathbf{T} +p\mathbf{I}+ \left(\frac{\partial (\widehat{\rho \psi})}{\partial \nabla \phi} \otimes \nabla\phi\right)+ (\mu \phi - \widehat{\rho\psi})\mathbf{I} = \eta(\phi) \left(  2 \mathbf{D}+ \lambda (\nabla \cdot \mathbf{v}) \mathbf{I} \right) \label{equation2_a}
\end{align}
where $\mathbf{D} = \tfrac{1}{2}\left(\nabla \mathbf{v} + \nabla \mathbf{v}^T \right)$ is the symmetric velocity gradient tensor and $\nabla \mathbf{v}^T$ denotes transpose of $\nabla \mathbf{v}$.
Then \eqref{local_dissip_5} is satisfied as, in particular,
\begin{equation}
- \eta(\phi) \left(  2 \mathbf{D}+ \lambda  (\nabla \cdot \mathbf{v})\mathbf{I}  \right) : \nabla \mathbf{v}  -m(\phi) |\nabla \mu|^2 \leq 0, \nonumber
\end{equation}
where $\eta(\phi) \geq 0$ is the viscosity, $m(\phi)\geq$ is the mobility. The viscous contribution 
$- \eta(\phi) \left(  2 \mathbf{D}+ \lambda  (\nabla \cdot \mathbf{v})\mathbf{I}  \right)$ is non-positive under the standing assumption $\lambda\ge{}-2/d\,$.\cite{Temam:2001cz} 
%
\begin{remark}
We introduce $ \eta(\phi) \left( 2  \mathbf{D} + \lambda (\nabla \cdot \mathbf{v}) \mathbf{I} \right) $ as the viscous stress tensor where the fluid is considered to be an isotropic Newtonian mixture with volume-fraction-dependent viscosity.\cite{OdeHawPruM3AS2010}
\end{remark}
%
\subsection{Special choice of free energy and the Navier--Stokes Cahn--Hilliard Equation} 
\label{choice}
To obtain a specific quasi-incompressible model, we choose the Helmholtz free energy in the Ginzburg-Landau form\cite{CahHilJChP1958}:
$$\displaystyle \widehat{\rho\psi}(\phi, \nabla \phi ) = \frac{\sigma}{\varepsilon} f(\phi) + \frac{\sigma \varepsilon}{2} |\nabla \phi|^{2},$$
where $\varepsilon >0$ represents the thickness of the interface of the mixture, $\sigma$ is related to the surface energy density and $\displaystyle f(\phi)$ represents a double-well potential, for example the globally $C^{2,1}$-continuous standard (truncated) quartic polynomial according to
\begin{equation}\label{f_phi}
f(\phi) := \left\{ \begin{array}{ll}
 (\phi+1)^2, & \phi <-1\\
 \frac{1}{4} (\phi^2 - 1)^2, & \phi \in [-1, 1]\\
 (\phi-1)^2, & \phi > 1
  \end{array} \right.
\end{equation} 
Then, by the definition of $\widehat{\rho\psi}$ it holds that
\begin{equation}
 \frac{\partial (\widehat{\rho\psi})}{\partial \phi} =  \frac{\sigma}{\varepsilon} \frac{df}{d\phi} ,\qquad \frac{\partial (\widehat{\rho \psi})}{\partial \nabla \phi} = \sigma \varepsilon \nabla \phi. \label{helmoltz_div}
\end{equation}
If we substitute \eqref{helmoltz_div} into \eqref{equation1_a}--\eqref{equation2_a} and by \eqref{phase_eq} and \eqref{momentum}, we obtain the following quasi-incompressible Navier--Stokes-Cahn--Hilliard system:
\begin{subequations}
\label{NSCH_loc}
\begin{align}
&\dot{\phi} + \phi  \nabla \cdot \mathbf{v} = \nabla \cdot \left(m(\phi) \nabla \mu \right) \label{phase_modelII} \\
& \mu = \frac{\sigma}{\epsilon} \frac{df}{d\phi} - \sigma\varepsilon \Delta \phi  - \frac{p}{\hat{\rho}} \frac{d \hat{\rho}}{d \phi} \label{chem_pot_modelII} \\
 &\rho \dot{\mathbf{v}}= - \nabla p - \sigma \varepsilon \nabla \cdot ( \nabla \phi \otimes \nabla \phi) +\nabla \cdot \left( \frac{\sigma}{\varepsilon} f(\phi) + \frac{\sigma \varepsilon}{2} |\nabla \phi|^2 - \mu \phi \right) \mathbf{I} \nonumber \\
 & \hspace{3cm}+ \nabla \cdot \big( \eta(\phi) \left( 2  \mathbf{D}+ \lambda  (\nabla \cdot \mathbf{v})\mathbf{I}  \right) \big) - \hat{\rho} g {\boldsymbol{\jmath}} \label{momentum_modelII} \\
 & \nabla \cdot \mathbf{v} = \alpha \nabla \cdot  (m(\phi)\nabla \mu) \label{div_v_modelII}\\
 & \dot{\rho} + \rho \nabla \cdot  \mathbf{v} = 0 \label{density_alg_modelII_a} \\
 &\hat{\rho}= \rho_1 \frac{1+\phi}{2} + \rho_2 \frac{1-\phi}{2} \label{density_alg_modelII_b}
\end{align}
\end{subequations}
The NSCH system~\eqref{NSCH_loc} is thermodynamically consistent by construction, i.e. it complies with the second law
of thermodynamics.

\begin{remark}
One may note that the two variants \eqref{density_alg_modelII_a} and \eqref{density_alg_modelII_b} of the mixture density both appear in the equation of motion \eqref{momentum_modelII}. More explicitly, the density in \eqref{density_alg_modelII_a} appears in the $\rho \dot{\mathbf{v}}$ term, while the density in \eqref{density_alg_modelII_b} appears in the gravity term $\hat{\rho} g {\boldsymbol{\jmath}}$. \eqref{density_alg_modelII_b} also appears in the equation of chemical potential \eqref{chem_pot_modelII}. Defining the mixture densities $\rho$ and $\hat{\rho}$ via two separate equations in \eqref{density_alg_modelII_a} and \eqref{density_alg_modelII_b} is crucial to obtain an energy dissipative time-discrete scheme.  These two definitions of the mixture density are consistent by construction; see Section~\ref{cont_theory}.
\end{remark}

\begin{remark}
The components of the stress in \eqref{momentum_modelII} can be endowed with specific physical interpretations. The tensor $\eta(\phi) ( 2  \mathbf{D}+ \lambda (\nabla \cdot \mathbf{v}) \mathbf{I} )$ represents the standard viscous stress tensor. The tensors $\sigma \varepsilon  ( \nabla \phi \otimes \nabla \phi) $ and $ \left( \frac{\sigma}{\varepsilon} f(\phi) + \frac{\sigma \varepsilon}{2} |\nabla \phi|^2 - \mu \phi \right) \mathbf{I} $ are associated with capillary forces due to the surface tension and an additional contribution due to quasi-incompressibility, respectively.
\end{remark}
Equation  \eqref{momentum_modelII} can be reformulated such that the additional complicated stress terms assume a simpler form. To this end, we note that for $\displaystyle \sigma \varepsilon \nabla \cdot ( \nabla \phi \otimes \nabla \phi) $ we have the following sequence of identities: 
\begin{align}
\sigma \varepsilon \nabla \cdot ( \nabla \phi \otimes \nabla \phi) &= \sigma \varepsilon (\Delta \phi \nabla \phi + \frac{1}{2} \nabla |\nabla \phi|^2) \nonumber \\
&= \nabla \phi \left (\frac{\sigma}{\varepsilon} \frac{df}{d\phi} - \mu -\frac{p}{\hat{\rho}} \frac{d \hat{\rho}}{d \phi}  \right) + \frac{\sigma \varepsilon}{2} \nabla |\nabla \phi|^2 \nonumber \\
&= \nabla \cdot  \left( \frac{\sigma}{\varepsilon} f(\phi) + \frac{\sigma \varepsilon}{2} |\nabla \phi|^2 - \mu \phi \right)\mathbf{I} + \phi \nabla \mu  -\frac{p}{\hat{\rho}} \frac{d \hat{\rho}}{d \phi} \nabla \phi, \label{simplify}
\end{align}
where the second identity in \eqref{simplify} follows from \eqref{chem_pot_modelII}. Inserting \eqref{simplify} into \eqref{momentum_modelII} leads to the final form of the quasi-incompressible Navier--Stokes Cahn--Hilliard model 
\begin{subequations} \label{final_NSCH}
\begin{empheq}[box=\widefbox]{align}
& \dot{\phi} + \phi  \nabla \cdot \mathbf{v} = \nabla \cdot \left(m(\phi) \nabla \mu \right)  \label{phase_modelII_simp} \\
& \mu = \frac{\sigma}{\epsilon} \frac{df}{d\phi} - \sigma\varepsilon \Delta \phi  - \frac{p}{\hat{\rho}} \frac{d \hat{\rho}}{d \phi} \label{chem_pot_modelII_simp}  \\
 &\rho \dot{\mathbf{v}} = - \nabla p - \phi \nabla \mu  + \frac{p}{\hat{\rho}} \frac{d \hat{\rho}}{d \phi} \nabla \phi  +  \nabla \cdot \left( \eta(\phi) \left( 2  \mathbf{D}+ \lambda (\nabla \cdot \mathbf{v}) \mathbf{I} \right) \right)- \hat{\rho} g {\boldsymbol{\jmath}} \label{momentum_modelII_simp} \\
 & \nabla \cdot \mathbf{v} = \alpha \nabla \cdot (m(\phi)\nabla \mu) \label{div_v_modelII_simp}\\
 & \dot{\rho} + \rho \nabla \cdot  \mathbf{v}= 0 \label{density_alg_modelII_simp_a} \\
& \hat{\rho}= \rho_1 \frac{1+\phi}{2} + \rho_2 \frac{1-\phi}{2}  \label{density_alg_modelII_simp_b}
\end{empheq}
\end{subequations}
\begin{remark}
For matching densities, i.e. $\rho_1 = \rho_2$, equations \eqref{density_alg_modelII_simp_a} and \eqref{density_alg_modelII_simp_b} are trivially satisfied and equations \eqref{phase_modelII_simp}--\eqref{div_v_modelII_simp} reduce to the standard incompressible NSCH system\cite{GurPolVinM3AS1996}. Additionally, the $- \phi \nabla \mu$ term can be rewritten as $\nabla \cdot ( \nabla \phi \otimes \nabla \phi) $ upon redefining the pressure by $\tilde{p} = p - \frac{\sigma}{\varepsilon} f(\phi) - \frac{\sigma \varepsilon}{2} |\nabla \phi|^2 + \mu \phi $.
\end{remark}
\par
In the sequel, we will generally equip \eqref{final_NSCH} with the following natural boundary conditions:
\begin{equation}
\partial_{n} \phi =\partial_{n} \mu  = 0, \quad \mathbf{v} = 0 \quad \text{on}\, \partial \Omega, \label{boun_con}
\end{equation}
where $\partial_{n}(\cdot) $ represents the normal-derivative in the trace sense. Also, for the sake of simplicity, we restrict our considerations to constant viscosity and mobility, i.e. $\eta(\phi):= \eta$ and $m(\phi):= m$.
\begin{remark} Other boundary conditions, e.g. of non-homogeneous form can be considered as well. However, one should take into consideration that there is a compatibility between the velocity $\mathbf{v}$ and the chemical potential $\mu$ due to the quasi-incompressibility equation \eqref{div_v_modelII_simp}, that is,
\begin{equation}
\int_{\partial \Omega} \mathbf{v} \cdot \mathbf{n} \, \, dS = \alpha \int_{\partial \Omega} m(\phi) \, \partial_{n} \mu  \,\, dS \nonumber 
\end{equation}
\end{remark}
\par
The total energy functional associated with \eqref{final_NSCH} is compatible with \eqref{dissipation} and comprises the Helmholtz free energy, the kinetic energy and the gravitational energy:
\begin{equation}
E\left(\phi, \rho, \hat{\rho}, \mathbf{v} \right) := \int_{\Omega} \left( \frac{1}{2} \rho |\mathbf{v}|^2 + \frac{\sigma}{\varepsilon} f(\phi) + \frac{\sigma \varepsilon}{2} |\nabla \phi|^2 + \hat{\rho} g y\right) \, d\Omega, \label{energy}
\end{equation}
Here, $\frac{1}{2} \rho |\mathbf{v}|^2$ is the kinetic energy, $ \frac{\sigma}{\varepsilon} f(\phi) + \frac{\sigma \varepsilon}{2} |\nabla \phi|^2$ is the Cahn--Hilliard free energy and $\displaystyle \hat{\rho} g y$ is the gravitational potential energy.

%
\subsection{Equivalent form and relation to existing models}
\label{sec:Equiv}
Based on the same modeling assumptions, one can derive a seemingly different quasi-incompressible model to~\eqref{final_NSCH}. Let us follow the same steps in the derivation until \eqref{local_dissip_3} and rewrite the term $p \nabla \cdot \mathbf{v}$ using
\begin{equation}
 \nabla\cdot \mathbf{v} = \alpha \dot{\phi} + \alpha \phi \nabla \cdot \mathbf{v} 
 \end{equation}
instead of $\nabla \cdot  \mathbf{v} = -\dfrac{1}{\hat{\rho}}\dfrac{d \hat{\rho}}{d \phi} \dot{\phi}$ in \eqref{relation1}. Then the local dissipation inequality becomes
\begin{equation}\label{local_dissip_alternative}
\begin{split}
&\frac{\partial (\widehat{\rho\psi})}{\partial \phi} \, \dot{\phi} + \frac{\partial (\widehat{\rho \psi})}{\partial \nabla \phi}  \,\nabla (\dot{\phi} ) -\left( \frac{\partial (\widehat{\rho \psi})}{\partial \nabla \phi} \otimes \nabla\phi \right) :\nabla \mathbf{v} + (\widehat{\rho\psi})\mathbf{I} :\nabla \mathbf{v} - \mathbf{T}_0 :\nabla \mathbf{v} \\
&+\alpha  p  \dot{\phi}  +\left(\alpha p \phi\right) \mathbf{I} :\nabla \mathbf{v}  - \dot{\phi} \nabla \cdot \boldsymbol\xi -\boldsymbol\xi \nabla (\dot{\phi} ) + \mu (- \dot{\phi} - \phi \nabla \cdot \mathbf{v})+ \mathbf{h}\cdot \nabla \mu \leq 0.
\end{split}
\end{equation} 
Applying the Coleman--Noll procedure to \eqref{local_dissip_alternative}, we obtain the same equations as \eqref{equation1_b} and \eqref{equation2_b} and the following two equations for chemical potential and the stress tensor:
\begin{equation}
\begin{split}
& \frac{\partial (\widehat{\rho\psi})}{\partial \phi}  - \nabla \cdot \boldsymbol\xi - \mu +\alpha  p = 0\\ 
&\mathbf{T} + p \mathbf{I} + (\boldsymbol\xi \otimes \nabla\phi) + (\mu \phi - \widehat{\rho\psi}+\alpha p \phi)\mathbf{I} =   \eta(\phi) \left( 2  \mathbf{D}+ \lambda (\nabla \cdot \mathbf{v}) \mathbf{I} \right).
\end{split} \label{equation1_alternative}
\end{equation}
Hence, repeating the steps in \eqref{simplify}, we obtain the following alternative quasi-incompressible NSCH model: 
\begin{subequations}\label{modelI_simp}
\begin{empheq}[box=\widefbox]{align}
& \dot{\phi} + \phi  \nabla \cdot \mathbf{v}  = \nabla \cdot \left(m(\phi) \nabla \mu \right) \label{phase_modelI_simp}  \\
& \mu = \dfrac{\sigma}{\varepsilon} \frac{df}{d\phi} - \sigma \varepsilon \Delta \phi  + \alpha p \label{chem_pot_modelI_simp}  \\
 &\rho \dot{\mathbf{v}}= - \nabla p -\phi \nabla (\mu-\alpha p) + \nabla \cdot \left( \eta(\phi) \left( 2  \mathbf{D}+ \lambda (\nabla \cdot \mathbf{v}) \mathbf{I} \right) \right) - \hat{\rho} g  {\boldsymbol{\jmath}}   \label{momentum_modelI_simp}  \\
 & \nabla \cdot \mathbf{v} = \alpha \nabla \cdot \left(m(\phi) \nabla \mu \right) \label{div_v_modelI_simp} \\ 
 & \dot{\rho} + \rho \nabla \cdot  \mathbf{v}= 0  \label{density_alg_modelI_simp_a} \\ 
& \hat{\rho} = \rho_1 \frac{1+\phi}{2} + \rho_2 \frac{1-\phi}{2}.  \label{density_alg_modelI_simp_b} 
\end{empheq}
\end{subequations}
It can be observed, however, that the models~\eqref{final_NSCH} and~\eqref{modelI_simp} are \emph{equivalent}, by redefining the pressure! Using the relation \eqref{relation2}, if the pressure $p$ in \eqref{chem_pot_modelI_simp} and~\eqref{momentum_modelI_simp} is redefined as $p = \frac{\tilde{p}}{1-\alpha \phi}$, then one obtains \eqref{chem_pot_modelII_simp} and~\eqref{momentum_modelII_simp}.
\par
\begin{remark}[Equivalence with Shen, Yang \&~Wang]
The form obtained in~\eqref{modelI_simp} is equivalent to the model by Shen, Yang \&~Wang\cite{SheYanWanCCP2013}, which was derived using different arguments, not invoking the Coleman--Noll procedure. Indeed, their Eqs.~(2.13a)--(2.13d) on their page~1050, can be obtained from our~\eqref{phase_modelI_simp}--\eqref{div_v_modelI_simp} upon simply redefining our $\rho_2 = 2\tilde{\rho}_2 - \rho_1$, to account for the fact that their phase variable ranges from~$0$ to~$1$ (as opposed to~${-}1$ to~$1$ in our case). With that substitution, the $\alpha$ in~\eqref{chem_pot_modelI_simp}--\eqref{div_v_modelI_simp} changes to~$\frac{\tilde{\rho}_2-\rho_1}{\tilde{\rho}_2}$, and $\hat{\rho}(\phi) = \rho_1\phi + \tilde{
\rho}_2(1-\phi)$, which correspond to the model of Shen~\emph{et al.}
%
\end{remark}
\begin{remark}[Equivalence with Aki, Dreyer, Giesselmann~\& Kraus]
The form obtained in~\eqref{modelI_simp} is also equivalent to the isothermal form (and without phase transition) of the generalized model by Aki, Dreyer, Giesselmann~\& Kraus\cite{AkiDreyM3AS2014}. Indeed, using~\eqref{simplify} and the identity 
\begin{alignat*}{2}
  \nabla\cdot \Big( 
  \nabla \phi \otimes \nabla \phi - (\phi \Delta \phi + \tfrac{1}{2}|\nabla \phi|^2) \mathbf{I}
  \Big) = -\phi\nabla \Delta \phi\,,
\end{alignat*}
Eq.~\eqref{momentum_modelI_simp} can be written as
\begin{alignat}{2}
\label{akiMom}
  \rho \dot{\mathbf{v}}= - \nabla p + \sigma \varepsilon  \phi \nabla \Delta \phi 
  + \nabla P(\phi)
  + \nabla \cdot \left( \eta(\phi) \left( 2  \mathbf{D}+ \lambda (\nabla \cdot \mathbf{v}) \mathbf{I} \right) \right) - \hat{\rho} g  {\boldsymbol{\jmath}}
\end{alignat}
where $P(\phi) :=  \phi \frac{dW}{d\phi}(\phi) - W(\phi)$ is the thermodynamic pressure as used by Aki~\emph{et al.}, and $W(\phi) := \sigma \varepsilon^{-1} f(\phi)$. The system given by~\eqref{phase_modelI_simp}, \eqref{chem_pot_modelI_simp}, \eqref{akiMom}, \eqref{div_v_modelI_simp} is now exactly equal to the one in Aki~\emph{et al.} on page~828 (cf.~ their Eqs.~(2.16)--(2.20)), upon setting the mobilities in their model to $m_j = \frac{m(\phi)}{c_+^2}$ and $m_r=0$.
\end{remark}
These equivalences unify the seemingly different quasi-incompressible NSCH models based on the mass-averaged velocity and volume-fraction phase variable, which have all been derived in different ways.
%
\subsection{Non-dimensionalization} 
\label{sec:nondim}
We perform the non-dimensionalization of \eqref{final_NSCH} based on physical properties of the
liquid associated with the density $\rho_1$, which are the characteristic scales of length, $L_{\ast}$, velocity, $U_{\ast}$ and
$P_{\ast}=\sigma/L_{\ast}$. Considering that $\phi$ is already a dimensionless phase field
variable, the governing dimensionless variables are given by:
\begin{align*}
  \bar{\mathbf{x}} &= \frac{\mathbf{x}}{L_{\ast}} &
  \bar{\mathbf{v}} &= \frac{\mathbf{v}}{U_{\ast}} &
  \bar{t} &= t \, \frac{L_{\ast}}{U_{\ast}} &
  \bar{p} &= \frac{p}{P_{\ast}} &
  \bar{\mu} &= \frac{\mu}{P_{\ast}} & \bar{\rho}&=\frac{\rho}{\rho_1}
\end{align*}
where $P_{\ast}$ is used for both dimensionless pressure and chemical potential. Suppressing the over bar for the sake of simplicity, 
the dimensionless quasi-incompressible NSCH system \eqref{final_NSCH} writes as:

\begin{subequations} \label{NSCH_nondim}
\begin{empheq}[box=\widefbox]{align}
& \dot{\phi} + \phi  \nabla \cdot \mathbf{v}  = \frac{1}{\mathrm{Pe}} \triangle \mu   \label{phase_modelII_simp_nondim} \\
&\mu = \frac{1}{\mathrm{Cn}} \frac{df}{d\phi} - \mathrm{Cn} \Delta \phi  - \frac{p}{\hat{\rho}} \frac{d \hat{\rho}}{d \phi}  \label{chem_pot_modelII_simp_nondim}  \\
&\rho \dot{\mathbf{v}} = - \frac{1}{\mathrm{We}} \left( \nabla p + \phi \nabla \mu  - \frac{p}{\hat{\rho}} \frac{d \hat{\rho}}{d \phi}  \nabla \phi \right) \nonumber \\
  & \hspace{2cm} +\frac{1 }{\mathrm{Re}} \nabla \cdot \left( 2  \mathbf{D}+ \lambda (\nabla \cdot \mathbf{v}) \mathbf{I} \right) - \frac{1}{\mathrm{Fr}^{2}} \hat{\rho}{\boldsymbol{\jmath}} \label{momentum_modelII_simp_nondim} \\
 & \nabla \cdot \mathbf{v} = \frac{\alpha}{\mathrm{Pe}} \triangle \mu  \label{div_v_modelII_simp_nondim}\\
 &  \dot{\rho} + \rho \nabla \cdot  \mathbf{v}= 0  \label{density_alg_modelII_simp_nondim_a} \\
 & \hat{\rho} = \frac{1+\phi}{2} + \frac{\rho_2}{\rho_1} \frac{1-\phi}{2},  \label{density_alg_modelII_simp_nondim_b}
\end{empheq}
\end{subequations}
where $\mathrm{Pe} = U_{\ast} L_{\ast}^2/m\sigma$ is the diffusional $P\acute{e}clet$  number, $\mathrm{Cn} = \varepsilon/L_{\ast}$ is the Cahn number as a measure of the interface thickness, $\mathrm{We} = \rho_1 U_{\ast
}^2 L_{\ast}/\sigma$ is the Weber number, $\mathrm{Re} = \rho_1 U_{\ast} L_{\ast}/\eta$ is the Reynolds number and $\mathrm{Fr} = U_{\ast}/\sqrt{g L_{\ast}}$ is the Froude number, which expresses the relative strengths of the inertial and the gravitational forces. Additionally, $\alpha $ is the ratio of specific densities as already defined in \eqref{alpha} and the dimensionless $d \hat{\rho} / d \phi$ writes:
\begin{equation}
\frac{d \hat{\rho}}{d \phi} = \frac{1}{2} \left( 1+ \frac{\rho_2}{\rho_1} \right). \nonumber
\end{equation}
The dimensionless form of \eqref{modelI_simp} can be obtained similarly.
\par
In addition to the main system, the dimensionless energy is obtained from \eqref{energy} by
rescaling $\bar{E} = E/\rho_{1} U_{\ast}^2$ as
\begin{equation}
E \left(\phi, \rho, \hat{\rho}, \mathbf{v} \right) = \int_{\Omega} \left( \frac{1}{2} \rho |\mathbf{v}|^2 + \frac{1}{\mathrm{We} \, \mathrm{Cn}} f(\phi) + \frac{\mathrm{Cn}}{2 \, \mathrm{We}} |\nabla \phi|^2 + \frac{1}{\mathrm{Fr}^2}  \hat{\rho} y\right) \, d\Omega. \label{energy_nondim}
\end{equation}

\subsection{Interface profile}
\label{sec:ffinterface}
To elucidate the manner in which the fluid--fluid interface is represented in the 
quasi-incompressible NSCH model~\eqref{NSCH_nondim}, we consider the particular case of a 
stationary planar fluid--fluid interface in the absence of gravity. Under suitable boundary conditions, it
follows from~\eqref{NSCH_nondim} that $\mathbf{v}=0$ and $\mu=\mathrm{const}$.
Denoting by~$s$ a coordinate normal to the 
interface and centered at the interface, equations~\eqref{NSCH_nondim} then imply
\begin{equation}
\label{eq:NSCH_stat}
\left.
\begin{aligned}
\frac{d{}p}{ds}-\frac{p}{\hat{\rho}}\frac{d\hat{\rho}}{d\phi}\frac{d\phi}{ds}&=0
\\
\mu
-\frac{1}{\mathrm{Cn}}\frac{df}{d\phi}
+\mathrm{Cn}\frac{d^2\phi}{ds^2}
+\frac{p}{\hat{\rho}}\frac{d\hat{\rho}}{d\phi}&=0
\end{aligned}
\right\}\text{ in }\mathbb{R}
\end{equation}
We insist that the mixture reduces to the pure species and exhibits a uniform pressure 
(i.e. trace of the stress) away from the interface:
\begin{equation}
\label{eq:farfield}
\lim_{s\to\pm\infty}\phi(s)=\pm{}1\qquad
\lim_{s\to\pm\infty}dp(s)-\tr\boldsymbol{\zeta}(s)=dp_{\infty}
\end{equation}
where $\boldsymbol{\zeta}$ represents the dimensionless capillary-stress tensor according to
\begin{equation}
\label{eq:zeta_spec}
\begin{aligned}
\boldsymbol{\zeta}&=-\mathrm{Cn}\nabla\phi\otimes\nabla\phi+\mathbf{I}\bigg(\frac{\mathrm{Cn}}{2}\big|\nabla\phi\big|^2
+\frac{1}{\mathrm{Cn}}f(\phi)-\mu\phi\bigg)
\\
&=-\frac{\mathrm{Cn}}{2}\bigg|\frac{d\phi}{ds}\bigg|^2
+\frac{1}{\mathrm{Cn}}f(\phi)-\mu\phi
\end{aligned}
\end{equation}
and $\operatorname{tr}\boldsymbol{\zeta}$ represents its trace. The second identity in~\eqref{eq:zeta_spec}
holds in the one-dimensional case under consideration.
The first equation in~\eqref{eq:NSCH_stat} can be solved via separation of variables to obtain the general 
solution~$p=C\rho(\phi)$ for some constant~$C$. It can then be verified by substitution that
\begin{equation}
\label{eq:sol_ODE}
\phi(s)=\tanh\bigg(\frac{s}{\sqrt{2}\,\mathrm{Cn}}\bigg)
\qquad
p(s)=p_{\infty}+p_{\infty}\bigg(\frac{\rho_1-\rho_2}{\rho_1+\rho_2}\bigg)\phi(s)
\qquad
\mu=-p_{\infty}\bigg(\frac{\rho_1-\rho_2}{\rho_1+\rho_2}\bigg)
\end{equation}
solves~\eqref{eq:NSCH_stat}--\eqref{eq:farfield}. It is to be noted that, in particular,  
the order parameter $\phi$ assumes the typical tanh\nobreakdash-form.\cite{Jacqmin:1999fk,Jacqmin:2000kx}

\begin{remark}
The fluid--fluid surface tension can be conceived of as the increase in the free energy that accompanies an increase in surface area of the fluid-fluid meniscus.\cite{Gennes:2004ai} Accordingly, the surface tension associated with the 
solution~\eqref{eq:sol_ODE} of the order parameter can be derived from Eq.~\eqref{energy_nondim}~as:
\begin{equation}
\label{eq:Sigma}
\begin{aligned}
\frac{1}{\mathrm{We}}
\int_{-\infty}^{+\infty} 
\bigg(\frac{\mathrm{Cn}}{2}\bigg|\frac{d\phi}{ds}\bigg|^2+\frac{1}{\mathrm{Cn}}f(\phi)\bigg)\,ds
&=
\frac{1}{\mathrm{We}}
\int_{-\infty}^{\infty}\bigg(\frac{1}{2\,\mathrm{Cn}}\bigg)\operatorname{sech}^4\bigg(\frac{s}{\sqrt{2}\,\mathrm{Cn}}\bigg)\,ds
\\
&=
\frac{2\sqrt{2}}{3\,\mathrm{We}}
\end{aligned}
\end{equation}
with $\operatorname{sech}(\cdot)$ the hyperbolic-secant function.
\end{remark}

\begin{remark}
The energy density in~\eqref{eq:Sigma} is a strictly positive function that is essentially localized in the
vicinity of the interface, and that collapses onto the interface in the sharp-interface limit $\mathrm{Cn}\to+0$;
see Figure~\ref{fig:sech4}.
\end{remark}
\begin{figure}
\begin{center}
\includegraphics[width=0.75\textwidth]{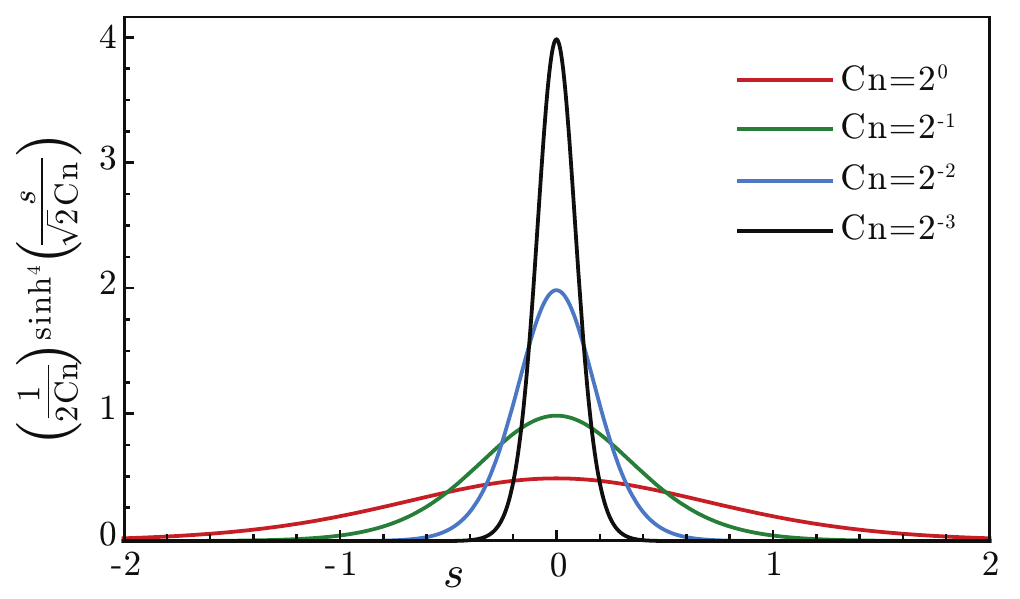}
\end{center}
\caption{Illustration of the energy density in~\eqref{eq:Sigma} for $\mathrm{Cn}\in{}2^{\{-3,-2,-1,0\}}$.
\label{fig:sech4}}
\end{figure}

%
\section{Energy-Dissipative Numerical Method}
\label{section3}
%
We constructed our model \eqref{NSCH_nondim} based on thermodynamic consistency, which is equivalent to satisfying the second axiom of thermodynamics. The axiom demands entropy production which in turn implies dissipation of total energy $E$, i.e. according to \eqref{energy_nondim}, $\displaystyle d_{t}E \leq 0$. To maintain stability in numerical computations, this physical dissipation property should be observed at the discrete level, as it is in the continuous PDE model. In this section, for completeness, we first present the continuous energy dissipation for the weak formulation of \eqref{NSCH_nondim}. Then we present a linearly implicit, first order accurate time-integration scheme. We restrict ourselves here only to time-discretization and prove the discrete dissipation for a space-continuous semi-discrete model.
%
\subsection{Energy dissipative weak formulation} 
We assume that the natural boundary conditions \eqref{boun_con} hold. The system of differential equations in \eqref{NSCH_nondim} subject to \eqref{boun_con} can be condensed into the following weak form:
Find $\phi, \mu \in {H^{1}}(\Omega)$, $\mathbf{v} \in {\mathbf{H}}^{1}(\Omega)$, $p \in {L^{2}}(\Omega)$ such that
\begin{subequations}
\label{NSCH_nondim_weak}
\begin{align}
& \int_{\Omega} \Big(\dot{\phi} \omega+ \phi  \nabla \cdot \mathbf{v} \omega  +  \frac{1}{\mathrm{Pe}} \nabla \mu \cdot \nabla \omega \Big) \,d \Omega = 0 , \quad \forall \omega \in  H^{1}(\Omega) \label{phase_modelII_simp_nondim_weak} \\
& \int_{\Omega} \left( \mu \psi  - \frac{1}{\mathrm{Cn}} \frac{df}{d\phi} \psi - \mathrm{Cn} \nabla \phi \cdot  \nabla \psi +\frac{p}{\hat{\rho}} \frac{d \hat{\rho}}{d \phi}  \psi \right)  \,d \Omega = 0,  \quad \forall \psi \in H^{1}(\Omega) \label{chem_pot_modelII_simp_nondim_weak}  \\
 & \int_{\Omega} \left( \rho \dot{\mathbf{v}} \cdot \boldsymbol{\chi} + \dfrac{1}{\mathrm{We}} (  - p \, (\nabla \cdot \boldsymbol{\chi} )+ \phi \nabla \mu \cdot \boldsymbol{\chi}  - \frac{p}{\hat{\rho}} \frac{d \hat{\rho}}{d \phi}  \nabla \phi  \cdot  \boldsymbol{\chi} )  +\frac{2}{\mathrm{Re}}  \mathbf{D}  : \nabla  \boldsymbol{\chi} \right. \nonumber \\
 &\hspace{1.5cm} \left. + \frac{\lambda}{\mathrm{Re}}  (\nabla \cdot \mathbf{v} )\, (\nabla \cdot \boldsymbol{\chi})  +  \frac{1}{\mathrm{Fr}^{2}}\hat{\rho}{\boldsymbol{\jmath}}  \cdot  \boldsymbol{\chi} \right)  \, d \Omega = 0, \quad \forall \boldsymbol{\chi} \in {\mathbf{H}}^{1}(\Omega) \label{momentum_modelII_simp_nondim_weak} \\
 & \int_{\Omega} \left( (\nabla \cdot \mathbf{v}) \, \theta  + \frac{\alpha}{\mathrm{Pe}} \nabla \mu \cdot \nabla \theta \right) \,d \Omega = 0,  \quad \forall \theta \in  L^{2}(\Omega) \label{div_v_modelII_simp_nondim_weak}
\end{align}
\end{subequations}
for a.e. $t\in (0,T)$. 

The specification of the function spaces in the weak formulation~\eqref{NSCH_nondim_weak}
is formal and a consideration of existence and stability is beyond the scope of this work. However, additional
(a-posteriori) conditions are required, e.g. $\phi\in{}L^{\infty}(\Omega)$ and $\phi\in[-1,1]$ a.e. in~$\Omega$,
to ensure that the integrals in~\eqref{NSCH_nondim_weak} are bounded. The mixture densities $\hat{\rho}$ and $\rho$ 
have not been included in~\eqref{NSCH_nondim_weak}, because~$\hat{\rho}$ is defined via the 
algebraic relation~\eqref{density_alg_modelII_simp_nondim_b} and it can be replaced with its definition in \eqref{NSCH_nondim_weak}. Similarly, $\rho$ is related to the model with respect to the mass conservation equation \eqref{density_alg_modelII_simp_nondim_a} and it is not considered as part of the system \eqref{NSCH_nondim_weak}. 
However, to ensure that the integrals in the weak form~\eqref{NSCH_nondim_weak} are appropriately bounded, we require the mixture densities to satisfy $\rho, \hat{\rho} \in {L^{\infty}}(\Omega,\mathbb{R}_{>0})$. 
%
\begin{theorem} \label{thm_1}
Let $\phi, \mu, \mathbf{v}, p$ be a sufficiently smooth solution to \eqref{NSCH_nondim_weak} subject to the boundary condition  \eqref{boun_con}. Assume also that $\rho$ and $\hat{\rho}$ according to~\eqref{density_alg_modelII_simp_nondim_a} and \eqref{density_alg_modelII_simp_nondim_b}, respectively, are positive. Then the following energy dissipation relation holds:
\begin{equation}
\label{energy_dissipation}
\begin{aligned}
&\frac{d }{dt} \int_{\Omega} \left( \frac{1}{2}\rho |\mathbf{v}|^2 + \frac{1}{\mathrm{We} \, \mathrm{Cn}}   f(\phi) + \frac{\mathrm{Cn}}{ 2 \, \mathrm{We}} |\nabla \phi|^2  +  \frac{1}{\mathrm{Fr}^2} \hat{\rho} y \right) \, d \Omega  \\
& \hspace{0.6cm} = -\frac{1}{\mathrm{Re}}  \|\nabla  \boldsymbol{v} \|_{L^{2}}^{2} 
- \frac{1+\lambda}{\mathrm{Re}} \| \nabla \cdot \mathbf{v} \|_{L^{2}}^2 - \frac{1}{\mathrm{Pe}\, \mathrm{We}} \| \nabla \mu \|_{L^{2}}^2 \leq 0.
\end{aligned}
\end{equation}
\end{theorem}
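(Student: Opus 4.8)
The plan is to test the weak equations~\eqref{phase_modelII_simp_nondim_weak}--\eqref{div_v_modelII_simp_nondim_weak} against carefully chosen multipliers, add the resulting identities, and check that every cross term cancels so that only manifestly non-positive quantities survive on the right-hand side. Concretely I would take $\omega=\mu$ in~\eqref{phase_modelII_simp_nondim_weak}, $\psi=\tfrac{1}{\mathrm{We}}\partial_t\phi$ in~\eqref{chem_pot_modelII_simp_nondim_weak}, and $\boldsymbol{\chi}=\mathbf{v}$ in~\eqref{momentum_modelII_simp_nondim_weak}; equation~\eqref{div_v_modelII_simp_nondim_weak} is not tested on its own but is used, together with~\eqref{phase_modelII_simp_nondim_weak} and~\eqref{relation2}, to derive the pointwise relation $(1-\alpha\phi)\,\nabla\cdot\mathbf{v}=\alpha\dot{\phi}$, i.e. $\nabla\cdot\mathbf{v}=-\tfrac{1}{\hat{\rho}}\tfrac{d\hat{\rho}}{d\phi}\dot{\phi}=-\dot{\hat{\rho}}/\hat{\rho}$, equivalently $\partial_t\hat{\rho}+\nabla\cdot(\hat{\rho}\mathbf{v})=0$. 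The positivity of $\hat{\rho}$ assumed in the theorem (equivalently $1-\alpha\phi\neq 0$) is exactly what legitimises these divisions, and all boundary terms produced by integration by parts vanish by the boundary conditions~\eqref{boun_con} ($\mathbf{v}=0$, $\partial_n\mu=0$ on $\partial\Omega$).

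In the phase equation with $\omega=\mu$, splitting $\dot{\phi}=\partial_t\phi+\mathbf{v}\cdot\nabla\phi$ and using $\int_\Omega(\mathbf{v}\cdot\nabla\phi+\phi\nabla\cdot\mathbf{v})\mu\,d\Omega=\int_\Omega\mu\,\nabla\cdot(\phi\mathbf{v})\,d\Omega=-\int_\Omega\phi\,\mathbf{v}\cdot\nabla\mu\,d\Omega$ gives $\int_\Omega\mu\,\partial_t\phi\,d\Omega=\int_\Omega\phi\,\mathbf{v}\cdot\nabla\mu\,d\Omega-\tfrac{1}{\mathrm{Pe}}\|\nabla\mu\|_{L^2}^2$. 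Testing the chemical-potential equation with $\psi=\tfrac{1}{\mathrm{We}}\partial_t\phi$ and observing (for the fixed domain $\Omega$) that $-\tfrac{1}{\mathrm{We}\,\mathrm{Cn}}\int_\Omega\tfrac{df}{d\phi}\partial_t\phi\,d\Omega-\tfrac{\mathrm{Cn}}{\mathrm{We}}\int_\Omega\nabla\phi\cdot\partial_t\nabla\phi\,d\Omega=-\tfrac{d}{dt}\int_\Omega\big(\tfrac{1}{\mathrm{We}\,\mathrm{Cn}}f(\phi)+\tfrac{\mathrm{Cn}}{2\,\mathrm{We}}|\nabla\phi|^2\big)\,d\Omega$, and substituting the previous identity, I obtain the Cahn--Hilliard part of the energy balance, $\tfrac{d}{dt}\int_\Omega\big(\tfrac{1}{\mathrm{We}\,\mathrm{Cn}}f(\phi)+\tfrac{\mathrm{Cn}}{2\,\mathrm{We}}|\nabla\phi|^2\big)\,d\Omega=\tfrac{1}{\mathrm{We}}\int_\Omega\phi\,\mathbf{v}\cdot\nabla\mu\,d\Omega-\tfrac{1}{\mathrm{Pe}\,\mathrm{We}}\|\nabla\mu\|_{L^2}^2+\tfrac{1}{\mathrm{We}}\int_\Omega\tfrac{p}{\hat{\rho}}\tfrac{d\hat{\rho}}{d\phi}\partial_t\phi\,d\Omega$.

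Testing the momentum equation with $\boldsymbol{\chi}=\mathbf{v}$, the inertial term is $\tfrac{d}{dt}\int_\Omega\tfrac12\rho|\mathbf{v}|^2\,d\Omega$ by the mass balance~\eqref{density_alg_modelII_simp_nondim_a} and $\mathbf{v}=0$ on $\partial\Omega$; the gravity term equals $\tfrac{d}{dt}\int_\Omega\tfrac{1}{\mathrm{Fr}^2}\hat{\rho}\,y\,d\Omega$ since $(\partial_t\hat{\rho})y=-\nabla\cdot(\hat{\rho}\mathbf{v})\,y$ integrates by parts (with $\nabla y=\boldsymbol{\jmath}$) to $\tfrac{1}{\mathrm{Fr}^2}\int_\Omega\hat{\rho}\,\boldsymbol{\jmath}\cdot\mathbf{v}\,d\Omega$; and the viscous terms combine, via $\int_\Omega 2\mathbf{D}:\nabla\mathbf{v}\,d\Omega=\|\nabla\mathbf{v}\|_{L^2}^2+\|\nabla\cdot\mathbf{v}\|_{L^2}^2$, into $\tfrac{1}{\mathrm{Re}}\|\nabla\mathbf{v}\|_{L^2}^2+\tfrac{1+\lambda}{\mathrm{Re}}\|\nabla\cdot\mathbf{v}\|_{L^2}^2$. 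Adding this to the Cahn--Hilliard identity, the $\pm\tfrac{1}{\mathrm{We}}\int_\Omega\phi\,\mathbf{v}\cdot\nabla\mu\,d\Omega$ terms cancel and all pressure contributions collapse into $\tfrac{1}{\mathrm{We}}\int_\Omega p\big(\nabla\cdot\mathbf{v}+\tfrac{1}{\hat{\rho}}\tfrac{d\hat{\rho}}{d\phi}(\partial_t\phi+\mathbf{v}\cdot\nabla\phi)\big)\,d\Omega=\tfrac{1}{\mathrm{We}}\int_\Omega p\big(\nabla\cdot\mathbf{v}+\dot{\hat{\rho}}/\hat{\rho}\big)\,d\Omega=0$ by the relation extracted at the outset. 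What remains is exactly~\eqref{energy_dissipation}, and its right-hand side is $\le 0$ because $\lambda\ge -2/d$ makes $1+\lambda\ge 0$ for $d\in\{2,3\}$.

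The integrations by parts, the product-rule manipulations and the viscous-tensor identity are routine. The genuinely structural step — and, I expect, the main obstacle — is the exact cancellation of all pressure terms: the compressibility terms $\tfrac{p}{\hat{\rho}}\tfrac{d\hat{\rho}}{d\phi}$, which are the distinguishing feature of model~\eqref{NSCH_nondim} and enter both the chemical-potential and the momentum equations, must assemble with $\int_\Omega p\,\nabla\cdot\mathbf{v}\,d\Omega$ into $\int_\Omega p\,(\nabla\cdot\mathbf{v}+\dot{\hat{\rho}}/\hat{\rho})\,d\Omega=0$, which works only by virtue of the quasi-incompressibility/mass-balance constraint and hence hinges on the positivity of $\hat{\rho}$. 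A secondary point requiring care is keeping material derivatives $\dot{(\cdot)}$ distinct from partial derivatives $\partial_t(\cdot)$, so that the free-energy and gravitational-energy rates (which are $\partial_t$ under the fixed domain) are correctly paired with the convective terms; the simultaneous use of the two density fields $\rho$ and $\hat{\rho}$ — one producing $\tfrac{d}{dt}\int\tfrac12\rho|\mathbf{v}|^2$, the other the gravity and pressure cancellations — is precisely what makes this bookkeeping close.
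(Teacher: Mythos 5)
Your proof is correct, and it follows the same basic energy method as the paper (test the momentum equation with $\mathbf{v}$, the chemical-potential equation with a multiple of $\partial_t\phi$, and cancel the pressure and gravity cross terms using the quasi-incompressibility structure and positivity of $\hat{\rho}$), but the bookkeeping is genuinely different in one respect. The paper never divides pointwise: it tests the phase equation~\eqref{phase_modelII_simp_nondim_weak} with the augmented multiplier $\tfrac{1}{\mathrm{We}}\bigl(\mu+\tfrac{p}{\hat{\rho}}\tfrac{d\hat{\rho}}{d\phi}\bigr)+\tfrac{1}{\mathrm{Fr}^2}\tfrac{d\hat{\rho}}{d\phi}y$ and, crucially, also tests the divergence equation~\eqref{div_v_modelII_simp_nondim_weak} with $\theta=-\tfrac{1}{\alpha}\bigl(\tfrac{1}{\mathrm{We}}\tfrac{p}{\hat{\rho}}\tfrac{d\hat{\rho}}{d\phi}+\tfrac{1}{\mathrm{Fr}^2}\tfrac{d\hat{\rho}}{d\phi}y\bigr)$, so that the pressure and gravity contributions cancel between weak-form integrals via~\eqref{relation2}. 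You instead take the simpler multipliers $\omega=\mu$ and $\psi=\tfrac{1}{\mathrm{We}}\partial_t\phi$, skip testing~\eqref{div_v_modelII_simp_nondim_weak}, and extract from the strong forms (legitimate under the ``sufficiently smooth'' hypothesis) the pointwise identity $(1-\alpha\phi)\,\nabla\cdot\mathbf{v}=\alpha\dot{\phi}$, i.e.\ $\partial_t\hat{\rho}+\nabla\cdot(\hat{\rho}\mathbf{v})=0$ as in~\eqref{mass_global_dens_alg}, which then kills the assembled pressure term $\int_\Omega p\bigl(\nabla\cdot\mathbf{v}+\dot{\hat{\rho}}/\hat{\rho}\bigr)\,d\Omega$ and converts the gravity power into $\tfrac{d}{dt}\int_\Omega\hat{\rho}y\,d\Omega$. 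At the continuous level your route is a bit leaner (no $\nabla\mu\cdot\nabla\bigl(\tfrac{p}{\hat{\rho}}\tfrac{d\hat{\rho}}{d\phi}\bigr)$-type terms to track), and the reliance on positivity of $\hat{\rho}$ is made explicit where the division occurs. What the paper's test-function construction buys is that it transfers verbatim to the time-discrete setting: the proof of Theorem~\ref{thm2} for Algorithm~\ref{algorithm_scheme} repeats exactly these augmented multipliers, where your pointwise elimination of $\nabla\cdot\mathbf{v}$ through $\dot{\phi}$ would not carry over as directly. Both arguments hinge on~\eqref{relation2}, the boundary conditions~\eqref{boun_con}, and the dual use of $\rho$ and $\hat{\rho}$, and both yield~\eqref{energy_dissipation} with the correct sign condition $1+\lambda\geq 0$ from $\lambda\geq -2/d$.
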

\begin{proof}
Setting $\mathbf{\boldsymbol{\chi}} = \mathbf{v}$ for the momentum equation \eqref{momentum_modelII_simp_nondim_weak} and invoking integration by parts and the boundary conditions~\eqref{boun_con}, we obtain
\begin{equation}
\label{momentum_modelII_simp_nondim_weak_energy}
\begin{aligned}
 & \frac{d }{d t} \int_{\Omega}  \frac{1}{2}\rho |\mathbf{v}|^2 \,d \Omega +\frac{1}{\mathrm{Re}}  \|\nabla  \boldsymbol{v} \|_{L^{2}}^{2} + \frac{1+\lambda}{\mathrm{Re}} \| \nabla \cdot \mathbf{v} \|_{L^{2}}^2 + \int_{\Omega} \frac{1}{\mathrm{Fr}^2} \hat{\rho} {\boldsymbol{\jmath}}  \cdot  \mathbf{v}  \, d \Omega  \\
& + \int_{\Omega} \frac{1}{\mathrm{We}} 
 \left(  - p (\nabla \cdot \mathbf{v} )+ \phi \nabla \mu \cdot \mathbf{v}  - \frac{p}{\hat{\rho}} \frac{d \hat{\rho}}{d \phi}  \nabla \phi \cdot \mathbf{v}\right) \, d \Omega = 0. 
\end{aligned}
\end{equation}
In \eqref{momentum_modelII_simp_nondim_weak_energy} we have used the following identity, which is obtained by the mass conservation equation \eqref{density_alg_modelII_simp_nondim_a}, integration by parts and the homogeneous boundary condition $\mathbf{v}|_{\partial \Omega} = 0$:
\begin{equation}
\int_{\Omega} \rho (\mathbf{v} \cdot \nabla) \mathbf{v} \cdot \mathbf{v} \, d \Omega = - \int_{\Omega} \frac{1}{2} \nabla \cdot (\rho \mathbf{v}) |\mathbf{v}|^{2}  \, d \Omega =\int_{\Omega} \frac{1}{2} \partial_t\rho |\mathbf{v}|^{2}  \, d \Omega. \nonumber
\end{equation}
Then, set $\omega = \frac{1}{\mathrm{We}}(\mu + \frac{p}{\hat{\rho}} \frac{d \hat{\rho}}{ d \phi} ) + \frac{1}{\mathrm{Fr}^2} \frac{d \hat{\rho}}{ d \phi}  y $ and $\displaystyle  \psi = - \partial_t \phi / \mathrm{We}$ for the phase 
equation~\eqref{phase_modelII_simp_nondim_weak} and the chemical potential equation \eqref{chem_pot_modelII_simp_nondim_weak}, respectively, to obtain
\begin{align}
& \int_{\Omega}  \frac{1}{\mathrm{We}}\left( \partial_t \phi +  \nabla \cdot(\phi \mathbf{v}) \right) 
\left(\mu + \frac{p}{\hat{\rho}} \frac{d \hat{\rho}}{ d \phi} \right) \,d \Omega +  \frac{1}{\mathrm{Pe} \, \mathrm{We}} \| \nabla \mu \|_{L^{2}}^2 \nonumber \\
 &+ \int_{\Omega} \frac{1}{\mathrm{Pe} \,\mathrm{ We}} \nabla \mu \cdot  \nabla \left( \frac{p}{\hat{\rho}} \frac{d{\rho}}{ d \phi} \right)  \,d \Omega   +\int_{\Omega} \left( \partial_t \phi +  \nabla \cdot(\phi \mathbf{v}) \right) \frac{1}{\mathrm{Fr}^2} \frac{d \hat{\rho}}{ d \phi}  y  \,d \Omega  \nonumber \\
 &+ \int_{\Omega}  \frac{1}{\mathrm{Pe}} \nabla \mu \cdot \nabla \left( \frac{1}{\mathrm{Fr}^2} \frac{d \hat{\rho}}{ d \phi}  y\right)  \,d \Omega  = 0, \label{phase_modelII_simp_nondim_weak_energy}  \\
& \int_{\Omega} \frac{1}{\mathrm{We}}\left( - \mu \partial_t \phi  + \frac{1}{\mathrm{Cn}} \frac{df}{d\phi} \partial_t \phi  + \mathrm{Cn} \nabla \phi \cdot  \partial_{t}(\nabla \phi) - \frac{p}{\hat{\rho}} \frac{d \hat{\rho}}{d \phi}  \, \partial_t \phi  \right)  \,d \Omega = 0,  \label{chem_pot_modelII_simp_nondim_weak_energy} 
\end{align}
Similarly, the choice $ \theta = - \frac{1}{\alpha} \left( \frac{1}{\mathrm{We}} \frac{p}{\hat{\rho}} \frac{d \hat{\rho}}{ d \phi}  + \frac{1}{\mathrm{Fr}^2} \frac{d \hat{\rho}}{ d \phi} y \right)$ for \eqref{div_v_modelII_simp_nondim_weak} gives
\begin{equation}
\label{div_v_modelII_simp_nondim_weak_energy}
\begin{aligned}
\int_{\Omega} & - \frac{1}{\alpha}\nabla \cdot  \mathbf{v}\left(  \frac{1}{\mathrm{We}} \frac{p}{\hat{\rho}} \frac{d \hat{\rho}}{ d \phi}  + \frac{1}{\mathrm{Fr}^2} \frac{d \hat{\rho}}{ d \phi} y\right) \,d \Omega  \\
&- \int_{\Omega} \frac{1}{\mathrm{Pe}} \nabla \mu \cdot \nabla \left(   \frac{1}{\mathrm{We}} \frac{p}{\hat{\rho}} \frac{d \hat{\rho}}{d \phi}  + \frac{1}{\mathrm{Fr}^2} \frac{d \hat{\rho}}{ d \phi} y \right)  = 0.  
\end{aligned}
\end{equation}
Adding equations \eqref{momentum_modelII_simp_nondim_weak_energy}--\eqref{div_v_modelII_simp_nondim_weak_energy} results in 
\begin{align}
&\frac{d }{dt} \int_{\Omega} \left( \frac{1}{2}\rho |\mathbf{v}|^2 + \frac{1}{\mathrm{We} \, \mathrm{Cn}}   f(\phi) + \frac{\mathrm{Cn}}{ 2 \, \mathrm{We}} |\nabla \phi|^2  +  \frac{1}{\mathrm{Fr}^2} \hat{\rho} y \right) \, d \Omega \nonumber \\
& =  -\frac{1}{\mathrm{Re}}  \|\nabla  \boldsymbol{v} \|_{L^{2}}^{2} - \frac{1+\lambda}{\mathrm{Re}} \| \nabla \cdot \mathbf{v} \|_{L^{2}}^2 - \frac{1}{\mathrm{Pe} \, \mathrm{We}} \| \nabla \mu \|_{L^{2}}^2 \nonumber \\
& + \int_{\Omega}  \frac{1}{\mathrm{We}} \left(  p (\nabla \cdot \mathbf{v}) + \frac{p}{\hat{\rho}} \frac{d \hat{\rho}}{d \phi} \left (
\nabla \phi \cdot \mathbf{v} -  \nabla \cdot (\phi \mathbf{v}) + \frac{1}{\alpha} \nabla \cdot \mathbf{v}\right)\right) \, d \Omega \nonumber \\
& + \int_{\Omega} \frac{1}{\mathrm{Fr}^{2}} \left( -\hat{\rho} {\boldsymbol{\jmath}} \cdot \mathbf{v} - \frac{d \hat{\rho}}{ d \phi} y \left( \nabla \cdot (\phi \mathbf{v}) - \frac{1}{\alpha} \nabla \cdot \mathbf{v} \right)  \right) \, d \Omega. \nonumber
\end{align}
Using integration by parts, the relation \eqref{relation2} and the fact that $\hat{\rho}$ is a function of $\phi$ as defined in \eqref{density_alg_modelII_simp_nondim_b}, one can deduce that the last two integrals vanish:
\begin{equation}
\label{vanish_integrals}
\begin{aligned}
& \int_{\Omega}  \frac{1}{\mathrm{We}} \left(  p (\nabla \cdot \mathbf{v}) + \frac{p}{\hat{\rho}} \frac{d \hat{\rho}}{d \phi} \left (
\nabla \phi \cdot \mathbf{v} -  \nabla \cdot (\phi \mathbf{v}) + \frac{1}{\alpha} \nabla \cdot \mathbf{v}\right)\right) \, d \Omega = 0 \\
& \int_{\Omega} \frac{1}{\mathrm{Fr}^{2}} \left( -\hat{\rho} {\boldsymbol{\jmath}} \cdot \mathbf{v} - \frac{d \hat{\rho}}{ d \phi} y \left( \nabla \cdot (\phi \mathbf{v}) - \frac{1}{\alpha} \nabla \cdot \mathbf{v} \right)  \right) \, d \Omega = 0, 
\end{aligned}
\end{equation}  
which completes the proof.
\end{proof}
The energy-dissipation relation in Theorem~\ref{thm_1} is a fundamental structural property.
Noting that the energy in~\eqref{energy_nondim} corresponds to a convex functional in~$\mathbf{v}$ and~$\phi$,
the energy-dissipation relation endows the quasi-incompressible NSCH system~\eqref{NSCH_nondim_weak} with 
stability in the Lyapunov sense. The energy-dissipation property should be preserved in discrete 
approximations: see Section~\ref{Section3_2:en_dis_time}. Additionally, conservation of mass and phase are other structural properties of the NSCH model to be retained in discrete approximations. The continuity equation 
\eqref{density_alg_modelII_simp_nondim_a} and the boundary conditions in \eqref{boun_con} imply conservation of mass:
\begin{align}
\frac{d}{dt}\int_{\Omega} \rho \, d \Omega &= \int_{\Omega} \partial_{t} \rho  \, d\Omega =  \int_{\Omega} - \nabla \cdot (\rho \mathbf{v})  \, d\Omega =  \int_{\partial \Omega} - \rho \mathbf{v} \cdot \mathbf{n} \, dS = 0  \nonumber 
\end{align}
Similarly, conservation of phase $\phi$ follows from the expression for $\hat{\rho}$ as a function of $\phi$, equations \eqref{phase_modelII_simp_nondim} and \eqref{div_v_modelII_simp_nondim}, relation \eqref{relation2} and the boundary conditions:
\begin{equation}
\label{conservation_of _phase}
\begin{aligned}
\frac{d}{dt}\int_{\Omega} \hat{\rho} \, d\Omega &= \int_{\Omega}  \frac{d \hat{\rho}}{d \phi} \frac{\partial\phi}{\partial t} \, d\Omega = \int_{\Omega} \frac{d \hat{\rho}}{d \phi}  \left( - \nabla \cdot (\phi \mathbf{v}) + \frac{1}{\mathrm{Pe}} \triangle \mu  \right) \, d \Omega \\
 &  =\int_{\Omega} \frac{d \hat{\rho}}{d \phi}  \left( - \nabla \cdot (\phi \mathbf{v}) + \frac{1}{\alpha}  \nabla \cdot  \mathbf{v} \right) \, d \Omega = \int_{\Omega} - \nabla \cdot (\hat{\rho} \mathbf{v})  \, d\Omega  \\
 &=  \int_{\partial \Omega} - \hat{\rho} \mathbf{v} \cdot \mathbf{n} \, dS = 0.  
\end{aligned}
\end{equation}
Because $ {d \hat{\rho}}/{d \phi}$ is constant, the chain of identities in \eqref{conservation_of _phase} implies 
\begin{equation}
\int_{\Omega}\frac{\partial\phi}{\partial t}\,d\Omega
=
\frac{d}{dt}\int_{\Omega}\phi\,d\Omega=0
\end{equation}
That is, the phase $\phi$ is conserved.
%
\subsection{Linear energy-stable time-integration scheme} 
\label{Section3_2:en_dis_time}
In this section, we introduce a linear finite difference time-integration scheme for model \eqref{NSCH_nondim}. Instead 
of the weak form, we propose the scheme in the context of the strong form of the PDE to present it with a clear algorithm chart.
\par
We consider a partitioned of the time interval $(0,T)$ into $N$ sub-intervals of constant time-step size, 
$ \Delta t = t^{n+1} - t^{n}$ for $n = 0, 1, 2, \ldots, N-1$. Algorithm~\ref{algorithm_scheme} presents a coupled, 
first-order accurate and energy-dissipative time-integration scheme for the NSCH system~\eqref{NSCH_nondim}. The 
properties of the scheme are stated and proved in Theorem~\ref{thm2}.
\begin{newalgorithm}
\begin{framed}
Given $\phi^{0}, {\mathbf{v}}^{0}$. Initialize $\rho^{0}$ using the algebraic definition by
$$\rho^{0} = \frac{1+\phi^{0}}{2} + \frac{\rho_2}{\rho_1} \frac{1-\phi^{0}}{2}.$$ 
For $n = 0, 1, 2, \ldots, N-1$,
\begin{itemize}
\item[] \textit{Step 1.} Compute
 \begin{equation}
\hat{\rho}^{n}= \frac{1+\phi^{n}}{2} + \frac{\rho_2}{\rho_1} \frac{1-\phi^{n}}{2}. \label{density_alg_modelII_simp_nondim_timedis}
 \end{equation}
 \end{itemize}
 \begin{itemize}
\item[] \textit{Step 2.} Solve the following system to obtain $\phi^{n+1},\mu^{n+1}, {\mathbf{v}}^{n+1} $ and $p^{n+1}$
\begin{subequations}
\begin{align}
& \hspace{-0.5cm} \frac{\phi^{n+1}-\phi^{n}}{\Delta t}+  \nabla \cdot  (\phi^{n} \mathbf{v} ^{n+1}) = \frac{1}{\mathrm{Pe}} \triangle \mu^{n+1}   \label{phase_modelII_simp_nondim_timedis} \\
& \hspace{-0.5cm} \mu^{n+1} =\frac{\beta}{\mathrm{Cn}} (\phi^{n+1} - \phi^{n}) +  \frac{1}{\mathrm{Cn}} f'(\phi^{n}) - \mathrm{Cn} \Delta \phi^{n+1}  - \frac{1}{\hat{\rho}^{n}} \frac{d \hat{\rho}^{n}}{d \phi^{n}} p^{n+1} \label{chem_pot_modelII_simp_nondim_timedis}  \\
 & \hspace{-0.5cm}\rho^{n} \frac{\mathbf{v}^{n+1} - \mathbf{v}^{n}}{\Delta t} + \rho^{n} \mathbf{v}^{n} \cdot \nabla \mathbf{v}^{n+1}= \nonumber \\
  & \hspace{2.0cm} - \frac{1}{\mathrm{We}}\left( \nabla p^{n+1} + \phi^{n} \nabla \mu^{n+1}  - \frac{1}{\hat{\rho}^{n}} \frac{d \hat{\rho}^{n}}{d \phi^{n}} p^{n+1} \nabla \phi^{n} \right) \nonumber \\
& \hspace{2.0cm} +\frac{1}{\mathrm{Re}} \nabla \cdot\left( 2  \mathbf{D}^{n+1}+ \lambda (\nabla \cdot \mathbf{v}^{n+1}) \mathbf{I} \right) - \frac{1}{\mathrm{Fr}^{2}} \hat{\rho}^{n} {\boldsymbol{\jmath}} \label{momentum_modelII_simp_nondim_timedis} \\
 &\hspace{-0.5cm} \nabla \cdot \mathbf{v}^{n+1} = \frac{\alpha}{\mathrm{Pe}} \triangle \mu^{n+1}  \label{div_v_modelII_simp_nondim_timedis}
\end{align}
\end{subequations}
with boundary conditions 
\begin{equation}
\nabla \phi^{n+1} \cdot \mathbf{n} = \nabla \mu^{n+1} \cdot \mathbf{n} = 0, \quad \mathbf{v}^{n+1} = \mathbf{0} \quad \text{on}\, \partial \Omega. \label{boun_con_disc}
\end{equation}
 \end{itemize}
  \begin{itemize}
\item[] \textit{Step 3.} Update $\rho^{n}$ to $\rho^{n+1}$  for $n \geq 1$, using the mass conservation equation:
\begin{equation}
\rho^{n+1}  = \rho^{n} -  \Delta t \nabla \cdot (\rho^{n} \mathbf{v}^{n}). \label{density_mass_modelII_simp_nondim_timedis}
 \end{equation}
  \end{itemize}
 This completes one time step, update $n$ to $n+1$.
\end{framed}
\caption{Energy-dissipative linearly-implicit time-stepping scheme}
\label{algorithm_scheme}
\end{newalgorithm} 
Except for the initialization step in which we take $\hat{\rho}^{0} = \rho^{0}$, each time step in the time-integration algorithm involves both densities $\hat{\rho}^{n}$ and $\rho^{n}$ according to~\eqref{density_alg_modelII_simp_nondim_timedis} and \eqref{density_mass_modelII_simp_nondim_timedis}, respectively. By virtue of this approach and by delaying the 
transport velocity in the nonlinear convective term~\eqref{momentum_modelII_simp_nondim_timedis} to~$t^n$, 
Algorithm~\ref{algorithm_scheme} is \emph{linearly implicit}, i.e. equations \eqref{density_alg_modelII_simp_nondim_timedis}--\eqref{density_mass_modelII_simp_nondim_timedis} are linear in $\phi^{n+1}, \mu^{n+1}, \mathbf{v}^{n+1}$ and~$p^{n+1}$.
\par
It is to be noted that a stabilizing term $\frac{\beta}{\mathrm{Cn}} (\phi^{n+1} - \phi^{n})$ has been introduced in the definition of the chemical potential in~\eqref{chem_pot_modelII_simp_nondim_timedis}. The constant stabilization 
parameter $\beta$ depends on the choice of~$f(\phi)$. Specifically, for the $f(\phi)$ function in~\eqref{f_phi}, 
the scheme becomes stable if the stabilization constant satisfies~$\beta\geq{}1$.

\begin{theorem} \label{thm2}
Consider the time-integration scheme in Algorithm~\ref{algorithm_scheme} with the double well 
potential~$f$ according to~\eqref{f_phi}. Assume that $\rho^{n} >0$ for all $n$. If the 
stabilization parameter $\beta$ in  \eqref{chem_pot_modelII_simp_nondim_timedis} is selected in accordance with 
\begin{equation}
\beta \geq 1 \label{beta}
\end{equation}
then the scheme in Algorithm~\ref{algorithm_scheme} is:
\begin{itemize}
\item[(i)] \textbf{unconditionally energy stable}: for all $n=1,\ldots, N-1$ the following discrete energy-dissipation relation holds:
\begin{equation}
\label{energy_dissip_disc}
\begin{aligned}
E^{n+1} - E^{n} \leq & -\frac{\Delta t}{\mathrm{Re}}  \|\nabla  \mathbf{v}^{n+1} \|_{L^{2}}^{2} - \frac{\Delta t(1+\lambda)}{\mathrm{Re}} \| \nabla \cdot \mathbf{v}^{n+1} \|_{L^{2}}^2 
\\
&- \frac{\Delta t}{\mathrm{Pe}\, \mathrm{We}} \| \nabla \mu^{n+1} \|_{L^{2}}^2 
 - \frac{1}{2} \rho^{n} \| \mathbf{v}^{n+1} - \mathbf{v}^{n} \|_{L^{2}}^2  
\\ 
& - \frac{\mathrm{Cn}}{2 \,\mathrm{We}} \| \nabla (\phi^{n+1} - \phi^{n}) \|_{L^{2}}^2 
- \frac{ \beta-1}{\mathrm{We} \, \mathrm{Cn}}   \|\phi^{n+1} - \phi^{n}\|_{L^{2}}^2 \leq 0
\end{aligned}
\end{equation}
independent of the time-step size $\Delta t > 0$.
\item[(ii)] \textbf{mass and phase conserving}: for all $n=1,\ldots, N-1$ there holds
\begin{equation}
 \label{preserve_mass}
\begin{aligned}
\int_{\Omega}  \rho^{n+1}\, d \Omega  = &\int_{\Omega}  \rho^{0}\, d \Omega, \quad \int_{\Omega} \hat{\rho}^{n+1} \, d \Omega  = \int_{\Omega} \hat{\rho}^{0} \, d \Omega, \quad \text{and}  \\
&\int_{\Omega} \phi^{n+1} \, d \Omega  = \int_{\Omega}  \phi^{0} \, d \Omega.
\end{aligned}
\end{equation}
\end{itemize}
\end{theorem}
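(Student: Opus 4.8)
The plan is to reproduce, at the time-discrete level, the testing-and-summing argument used for Theorem~\ref{thm_1}, tracking the extra terms that arise from time discretisation. First I would put the equations of Step~2 in Algorithm~\ref{algorithm_scheme} in weak form and test them with discrete analogues of the multipliers from the proof of Theorem~\ref{thm_1}, scaled by $\Delta t$ so as to produce energy increments: the momentum equation~\eqref{momentum_modelII_simp_nondim_timedis} with $\Delta t\,\mathbf{v}^{n+1}$; the chemical-potential equation~\eqref{chem_pot_modelII_simp_nondim_timedis} with $-(\phi^{n+1}-\phi^n)/\mathrm{We}$; the phase equation~\eqref{phase_modelII_simp_nondim_timedis} with $\Delta t\big(\tfrac{1}{\mathrm{We}}(\mu^{n+1}+\tfrac{1}{\hat{\rho}^n}\tfrac{d\hat{\rho}^n}{d\phi^n}p^{n+1})+\tfrac{1}{\mathrm{Fr}^2}\tfrac{d\hat{\rho}^n}{d\phi^n}y\big)$; and the divergence constraint~\eqref{div_v_modelII_simp_nondim_timedis} with $-\tfrac{\Delta t}{\alpha}\big(\tfrac{1}{\mathrm{We}}\tfrac{1}{\hat{\rho}^n}\tfrac{d\hat{\rho}^n}{d\phi^n}p^{n+1}+\tfrac{1}{\mathrm{Fr}^2}\tfrac{d\hat{\rho}^n}{d\phi^n}y\big)$. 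Summing the four resulting identities and integrating by parts using~\eqref{boun_con_disc}, the goal is that, exactly as in~\eqref{vanish_integrals}, all pressure and gravity cross-terms drop out, leaving the energy increments and a sum of manifestly signed terms.

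Three discrete ingredients replace the continuous chain rule. For the inertia I would use the polarisation identity $\rho^n\mathbf{a}\cdot(\mathbf{a}-\mathbf{b})=\tfrac12\rho^n(|\mathbf{a}|^2-|\mathbf{b}|^2+|\mathbf{a}-\mathbf{b}|^2)$ together with the mass-conservation update~\eqref{density_mass_modelII_simp_nondim_timedis}, i.e. $\nabla\cdot(\rho^n\mathbf{v}^n)=(\rho^n-\rho^{n+1})/\Delta t$ (available for $n\ge1$, which is why the statement restricts to $n\ge1$), so that $\int_\Omega\rho^n\tfrac{\mathbf{v}^{n+1}-\mathbf{v}^n}{\Delta t}\cdot\mathbf{v}^{n+1}+\int_\Omega\rho^n(\mathbf{v}^n\cdot\nabla\mathbf{v}^{n+1})\cdot\mathbf{v}^{n+1}$ equals $\tfrac{1}{\Delta t}\big(\tfrac12\int_\Omega\rho^{n+1}|\mathbf{v}^{n+1}|^2-\tfrac12\int_\Omega\rho^n|\mathbf{v}^n|^2\big)+\tfrac{1}{2\Delta t}\int_\Omega\rho^n|\mathbf{v}^{n+1}-\mathbf{v}^n|^2$; this is precisely where the dual-density bookkeeping ($\rho^n$ for inertia, $\hat{\rho}^n$ for the pressure coupling and gravity) is needed, and where $\rho^n>0$ makes the numerical-dissipation term nonnegative. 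For the interface energy I would use $\mathrm{Cn}\,\nabla\phi^{n+1}\cdot\nabla(\phi^{n+1}-\phi^n)=\tfrac{\mathrm{Cn}}{2}(|\nabla\phi^{n+1}|^2-|\nabla\phi^n|^2)+\tfrac{\mathrm{Cn}}{2}|\nabla(\phi^{n+1}-\phi^n)|^2$. For the double-well part I would invoke the Taylor bound $f(\phi^{n+1})-f(\phi^n)\le f'(\phi^n)(\phi^{n+1}-\phi^n)+\tfrac12\sup|f''|(\phi^{n+1}-\phi^n)^2$; since the truncated quartic~\eqref{f_phi} has $\sup|f''|=2$, the stabilisation term $\tfrac{\beta}{\mathrm{Cn}}(\phi^{n+1}-\phi^n)$ with $\beta\ge1$ exactly absorbs the remainder, leaving $\tfrac{1}{\mathrm{We}\,\mathrm{Cn}}(f(\phi^{n+1})-f(\phi^n))+\tfrac{\beta-1}{\mathrm{We}\,\mathrm{Cn}}\|\phi^{n+1}-\phi^n\|^2$ with the last term nonnegative. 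Collecting these pieces reproduces~\eqref{energy_dissip_disc}, and the sign of each retained term (using $\lambda\ge-2/d$, $\rho^n>0$, $\beta\ge1$) gives $E^{n+1}-E^n\le0$.

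The main obstacle I expect is the bookkeeping of the pressure and gravity cross-terms. The scheme is engineered so that every coupling between the phase field and the pressure sits at the old time level---$\nabla\cdot(\phi^n\mathbf{v}^{n+1})$ in~\eqref{phase_modelII_simp_nondim_timedis}, $\phi^n\nabla\mu^{n+1}$ and $\tfrac{1}{\hat{\rho}^n}\tfrac{d\hat{\rho}^n}{d\phi^n}p^{n+1}\nabla\phi^n$ in~\eqref{momentum_modelII_simp_nondim_timedis}, and the density weights $\hat{\rho}^n$, $d\hat{\rho}^n/d\phi^n$---so that the algebraic identity $-\tfrac{1}{\hat{\rho}^n}\tfrac{d\hat{\rho}^n}{d\phi^n}=\tfrac{\alpha}{1-\alpha\phi^n}$ (relation~\eqref{relation2} at $\phi=\phi^n$, with $d\hat{\rho}/d\phi$ constant) can be used at a single time level. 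One then repeats the manipulations behind~\eqref{vanish_integrals} with $(\phi,\mathbf{v},p,\mu)\mapsto(\phi^n,\mathbf{v}^{n+1},p^{n+1},\mu^{n+1})$: the identity $\nabla\phi^n\cdot\mathbf{v}^{n+1}-\nabla\cdot(\phi^n\mathbf{v}^{n+1})=-\phi^n\nabla\cdot\mathbf{v}^{n+1}$ together with $-\tfrac{1}{\hat{\rho}^n}\tfrac{d\hat{\rho}^n}{d\phi^n}=\tfrac{\alpha}{1-\alpha\phi^n}$ makes the $p^{n+1}$-terms cancel $p^{n+1}\nabla\cdot\mathbf{v}^{n+1}$, an integration by parts (with $\mathbf{v}^{n+1}=\mathbf{0}$ on $\partial\Omega$ and $\nabla y={\boldsymbol{\jmath}}$) reduces the gravity cross-terms to $-\tfrac{1}{\mathrm{Fr}^2}\int_\Omega\hat{\rho}^n{\boldsymbol{\jmath}}\cdot\mathbf{v}^{n+1}$ which cancels the gravity term from the momentum test, and the $\nabla\mu^{n+1}$-cross-terms generated by the nonstandard phase-equation multiplier are cancelled by the divergence-constraint test via $\int_\Omega\triangle\mu^{n+1}\,\theta=-\int_\Omega\nabla\mu^{n+1}\cdot\nabla\theta$ (using $\partial_n\mu^{n+1}=0$).

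Finally, part~(ii) is an induction. Integrating~\eqref{density_mass_modelII_simp_nondim_timedis} over $\Omega$ and using $\mathbf{v}^n=\mathbf{0}$ on $\partial\Omega$ gives $\int_\Omega\rho^{n+1}=\int_\Omega\rho^n$, and iterating down to the initialisation yields $\int_\Omega\rho^{n+1}=\int_\Omega\rho^0$. Integrating~\eqref{div_v_modelII_simp_nondim_timedis} (or~\eqref{phase_modelII_simp_nondim_timedis}) over $\Omega$ and using $\mathbf{v}^{n+1}=\mathbf{0}$ and $\partial_n\mu^{n+1}=0$ on $\partial\Omega$ gives $\int_\Omega\nabla\cdot\mathbf{v}^{n+1}=0$ and $\int_\Omega(\phi^{n+1}-\phi^n)=0$, hence $\int_\Omega\phi^{n+1}=\int_\Omega\phi^0$; conservation of $\int_\Omega\hat{\rho}^{n+1}$ then follows because $\hat{\rho}^{n+1}$ is the affine function~\eqref{density_alg_modelII_simp_nondim_timedis} of $\phi^{n+1}$. (The energy inequality at the very first step $n=0$ would need a separate remark, since $\rho^0=\hat{\rho}^0$ is set algebraically rather than via~\eqref{density_mass_modelII_simp_nondim_timedis}; the statement only asserts~\eqref{energy_dissip_disc} for $n\ge1$, consistently with Step~3 being applied for $n\ge1$.)
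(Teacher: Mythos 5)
Your proposal is correct and follows essentially the same route as the paper's proof: the same test multipliers for the four equations, the same polarization identities combined with the density update \eqref{density_mass_modelII_simp_nondim_timedis} for the kinetic term, the same Taylor bound with $\sup|f''|\le 2$ absorbed by $\beta\ge 1$, and the same cancellation of pressure/gravity cross-terms via \eqref{relation2}, with induction for part (ii). The only (immaterial) deviation is that you obtain phase conservation by integrating the phase equation directly and then deduce $\hat{\rho}$-conservation from affinity, whereas the paper argues via $\hat{\rho}$ first using \eqref{hat_rho_prop} and the divergence constraint.
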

\begin{proof} 
~\newline
\textit{(i)} Our proof of the discrete energy dissipation relation \eqref{energy_dissip_disc} closely follows the derivation for the time-continuous case in the proof of Theorem.~\ref{thm_1}. Using the definition of the energy in \eqref{energy_nondim}, the discrete energy can be written as
\begin{equation}
E^n = \int_{\Omega} \left( \frac{1}{2} \rho^n |\mathbf{v}^n|^2 + \frac{1}{\mathrm{We} \, \mathrm{Cn}} f(\phi^n) + \frac{\mathrm{Cn}}{2 \, \mathrm{We}} |\nabla \phi^n|^2 + \frac{1}{\mathrm{Fr}^2}  \hat{\rho}^n y\right) \, d\Omega. \label{energy_nondim_disc}
\end{equation}
For the difference in discrete energies at $t^{n+1}$ and $t^n$ it then follows that
\begin{equation}
\label{disc_energy_1}
\begin{aligned}
\hspace{-0.2cm} E^{n+1}-E^{n} &= \int_{\Omega}  \frac{1}{2} \left( \rho^{n+1} |\mathbf{v}^{n+1}|^2  - \rho^{n} |\mathbf{v}^{n}|^2\right)\, d \Omega  \\ 
&+ \int_{\Omega} \frac{1}{\mathrm{We} \, \mathrm{Cn}} \left( f(\phi^{n+1}) - f(\phi^{n})\right) \, d \Omega  \\
&+\int_{\Omega}  \frac{\mathrm{Cn}}{2\, \mathrm{We}} \left( |\nabla \phi^{n+1}|^2 - |\nabla \phi^{n}|^2 \right) \, d \Omega \\
& + \int_{\Omega} \frac{1}{\mathrm{Fr}^2} \left( \hat{\rho}^{n+1} - \hat{\rho}^{n} \right)y \, d \Omega. 
\end{aligned}
\end{equation}
For the first term in \eqref{disc_energy_1}, it holds that
\begin{align}
\int_{\Omega}  \frac{1}{2} \left( \rho^{n+1} |\mathbf{v}^{n+1}|^2 - \rho^{n} |\mathbf{v}^{n}|^2\right)\, d \Omega &= \nonumber \\
&\hspace{-3.5cm} \int_{\Omega}  \frac{1}{2} (\rho^{n+1} - \rho^{n}) |\mathbf{v}^{n+1}|^2\, d \Omega + \int_{\Omega} \frac{1}{2} \rho^{n}  \left( |\mathbf{v}^{n+1}|^2 - |\mathbf{v}^{n}|^2\right) \, d \Omega \label{first_term}
\end{align}
by adding and subtracting $\rho^{n} |\mathbf{v}^{n+1}|^{2}/2$. Using \eqref{density_mass_modelII_simp_nondim_timedis} and invoking integration by parts on $ (\rho^{n+1} - \rho^{n}) |\mathbf{v}^{n+1}|^2$, we obtain
\begin{equation}
\int_{\Omega}  \frac{1}{2} (\rho^{n+1} - \rho^{n}) |\mathbf{v}^{n+1}|^2\, d \Omega = \int_{\Omega} \Delta t \rho^{n} \left(\mathbf{v}^{n} \cdot \nabla \right) \mathbf{v}^{n+1} \cdot \mathbf{v}^{n+1} \, d \Omega. \nonumber
\end{equation}
Moreover, applying the algebraic identity $\left( a^2 - b^2 \right) = 2 a (a-b) - (a-b)^2$
to $\rho^{n}  \left( |\mathbf{v}^{n+1}|^2 - |\mathbf{v}^{n}|^2\right)/2$ gives
\begin{align}
\int_{\Omega} \frac{1}{2} \rho^{n}  \left( |\mathbf{v}^{n+1}|^2 - |\mathbf{v}^{n}|^2\right) \, d \Omega 
&= \nonumber \\ 
&\hspace{-2.5cm}  \int_{\Omega}  \rho^{n} \mathbf{v}^{n+1} \cdot \left( \mathbf{v}^{n+1} - \mathbf{v}^{n} \right)\, d \Omega  - \int_{\Omega}  \frac{1}{2} \rho^{n} | \mathbf{v}^{n+1} - \mathbf{v}^{n} |^2 \, d \Omega. \nonumber
\end{align}
Hence, the identity \eqref{disc_energy_1} can be recast into
\begin{equation}
\begin{aligned}
E^{n+1}-E^{n} = &
\int_{\Omega}  \Big( \Delta t \rho^{n} (\mathbf{v}^{n} \cdot \nabla) \mathbf{v}^{n+1} \cdot \mathbf{v}^{n+1} + \rho^{n} \mathbf{v}^{n+1} \cdot \left( \mathbf{v}^{n+1}  - \mathbf{v}^{n} \right) \Big) \, d \Omega 
\\
&  - \frac{1}{2} \rho^{n} \| \mathbf{v}^{n+1} - \mathbf{v}^{n} \|_{L^{2}}^2
+ \int_{\Omega} \frac{1}{\mathrm{We} \, \mathrm{Cn}} \left( f(\phi^{n+1}) - f(\phi^{n})\right) \, d \Omega 
\\
& + \frac{\mathrm{Cn}}{2\, \mathrm{We}} \left( \| \nabla \phi^{n+1} \|_{L^{2}}^2 -  \| \nabla \phi^{n} \|_{L^{2}}^2\right) 
+ \int_{\Omega} \frac{1}{\mathrm{Fr}^2} \frac{d \hat{\rho}^{n}}{d \phi^{n}} (\phi^{n+1} -\phi^{n}) y \, d \Omega \label{disc_energy_2}
\end{aligned}
\end{equation}
Note that the ultimate terms in \eqref{disc_energy_2} and in \eqref{disc_energy_1} coincide by virtue of the identities:
\begin{equation}
\hat{\rho}^{n+1} - \hat{\rho}^{n} = \frac{1}{2} \left( 1+\frac{\rho_{2}}{\rho_{1}}\right) (\phi^{n+1} -\phi^{n}) = \frac{d \hat{\rho}^{n}}{d \phi^{n}} (\phi^{n+1} -\phi^{n}).  \label{hat_rho_prop}
\end{equation} 
Next, we regard the time-stepping scheme \eqref{phase_modelII_simp_nondim_timedis}--\eqref{div_v_modelII_simp_nondim_timedis}. By multiplying the phase equation \eqref{phase_modelII_simp_nondim_timedis} with 
\begin{equation}
\frac{\Delta t}{\mathrm{We}}\bigg(\mu^{n+1} + \frac{1}{\hat{\rho}^{n}} \frac{d \hat{\rho}^{n}}{ d \phi^{n}} p^{n+1}\bigg) + \frac{\Delta t}{\mathrm{Fr}^2} \frac{d \hat{\rho}^{n}}{d \phi^{n}}  y
\end{equation}
integrating over the domain and invoking integration by parts, we obtain
\begin{equation}
\label{phase_test}
\begin{aligned} 
&\int_{\Omega}  \frac{1}{\mathrm{We}} \left(\mu^{n+1} + \frac{1}{\hat{\rho}^{n}} \frac{d \hat{\rho}^{n}}{ d \phi^{n}} p^{n+1} \right)  \Big( (\phi^{n+1}-\phi^{n}) + \Delta t\nabla \cdot  (\phi^{n} \mathbf{v} ^{n+1}) \Big)  \, d \Omega  \\
&+ \int_{\Omega} \frac{1}{\mathrm{Fr}^2} \frac{d \hat{\rho}^{n}}{d \phi^{n}}  y \Big( (\phi^{n+1}-\phi^{n}) + \Delta t\nabla \cdot  (\phi^{n} \mathbf{v} ^{n+1}) \Big)  \, d \Omega  \\
& = - \frac{\Delta t}{\mathrm{Pe}\, \mathrm{We}} \| \nabla \mu^{n+1} \|_{L^{2}}^2 
+ \int_{\Omega}\left( \frac{1}{\mathrm{We}} \frac{1}{\hat{\rho}^{n}} \frac{d \hat{\rho}^{n}}{ d \phi^{n}} p^{n+1} +\frac{1}{\mathrm{Fr}^2} \frac{d \hat{\rho}^{n}}{d \phi^{n}}  y \right) \frac{\Delta t}{\mathrm{Pe}} \triangle \mu^{n+1}  \, d \Omega.
\end{aligned}
\end{equation}
Multiplying \eqref{chem_pot_modelII_simp_nondim_timedis} by $\displaystyle -(\phi^{n+1} - \phi^{n})/\mathrm{We}$, integrating over the domain and using integration by parts, we deduce:
\begin{equation}
\label{mu_test}
\begin{aligned}
 &- \int_{\Omega} \frac{1}{\mathrm{We}} \mu^{n+1} (\phi^{n+1} - \phi^{n}) \, d \Omega  \\
&  \quad= - \frac{\beta}{\mathrm{We} \, \mathrm{Cn}}  \|\phi^{n+1} - \phi^{n}\|_{L^{2}}^2 - \int_{\Omega} \frac{1}{\mathrm{We} \, \mathrm{Cn}} f'(\phi^{n}) (\phi^{n+1} - \phi^{n})  \, d \Omega   
- \frac{\mathrm{Cn}}{\mathrm{We}} \| \nabla \phi^{n+1} \|_{L^{2}}^2 
\\
&\phantom{=}\,\quad\,
+  \int_{\Omega} \frac{\mathrm{Cn}}{\mathrm{We}} (\nabla \phi^{n+1} \cdot \nabla \phi^{n} )\, d \Omega  
+ \int_{\Omega} \frac{1}{\mathrm{We} }  (\phi^{n+1} - \phi^{n})  \frac{1}{\hat{\rho}^{n}} \frac{d \hat{\rho}^{n}}{d \phi^{n}}  p^{n+1} \, d \Omega  
\end{aligned}
\end{equation}
Similary, multiplication of~\eqref{momentum_modelII_simp_nondim_timedis} by $\Delta t\, {\mathbf{v}}^{n+1}$, integrating over the domain and invoking integration by parts yields:
\begin{equation}
\label{velocity_test}
\begin{aligned}
&\int_{\Omega }\Big( \rho^{n} \mathbf{v}^{n+1} \cdot \left( \mathbf{v}^{n+1}  - \mathbf{v}^{n} \right)  +\Delta t \rho^{n} (\mathbf{v}^{n} \cdot \nabla) \mathbf{v}^{n+1} \cdot \mathbf{v}^{n+1} \Big) \, d \Omega \\
&\quad= -\int_{\Omega } \frac{\Delta t}{\mathrm{We}} \left( \nabla p^{n+1} + \phi^{n} \nabla \mu^{n+1}  - \frac{1}{\hat{\rho}^{n}} \frac{d \hat{\rho}^{n}}{d\phi^{n}} p^{n+1} \nabla \phi^{n} \right) \cdot {\mathbf{v}}^{n+1} \, d \Omega \\
&\quad\phantom{=}\,\,-\frac{\Delta t}{\mathrm{Re}}  \|\nabla  \mathbf{v}^{n+1} \|_{L^{2}}^{2} - \frac{\Delta t(1+\lambda)}{\mathrm{Re}} \| \nabla \cdot \mathbf{v}^{n+1} \|_{L^{2}}^2 - \int_{\Omega } \frac{\Delta t}{\mathrm{Fr}^{2}} \hat{\rho}^{n} {\boldsymbol{\jmath}} \cdot {\mathbf{v}}^{n+1} \, d \Omega 
\end{aligned}
\end{equation}
Finally, upon multiplying~\eqref{div_v_modelII_simp_nondim_timedis} by
\begin{equation}
\displaystyle - \frac{\Delta t}{\alpha} \left( \frac{1}{\mathrm{We}} \frac{1}{\hat{\rho}^{n}} \frac{d \hat{\rho}^{n}}{d \phi^{n}}  p^{n+1} + \frac{1}{\mathrm{Fr}^2} \frac{d \hat{\rho}^{n}}{d \phi^{n}}  y \right)
\end{equation}
and integrating over the domain, we obtain: 
\begin{align}
& - \int_{\Omega } \frac{\Delta t}{\alpha} \left( \frac{1}{\mathrm{We}} \frac{1}{\hat{\rho}^{n}} \frac{d \hat{\rho}^{n}}{d \phi^{n}}  p^{n+1} + \frac{1}{\mathrm{Fr}^2} \frac{d \hat{\rho}^{n}}{d \phi^{n}}  y \right) (\nabla \cdot \mathbf{v}^{n+1})  \, d \Omega \nonumber \\
& \quad=  - \int_{\Omega }  \frac{\Delta t}{\mathrm{Pe}} \left( \frac{1}{\mathrm{We}} \frac{1}{\hat{\rho}^{n}} \frac{d \hat{\rho}^{n}}{d \phi^{n}}  p^{n+1} + \frac{1}{\mathrm{Fr}^2} \frac{d \hat{\rho}^{n}}{d \phi^{n}}  y \right) \triangle \mu^{n+1} \, d \Omega \label{div_test}
\end{align}
By collecting the results in \eqref{phase_test}--\eqref{div_test}, we obtain the identity:
\begin{equation}
\label{disc_energy_3}
\begin{aligned}
 & \int_{\Omega} \Big( \rho^{n} \mathbf{v}^{n+1} \cdot \left( \mathbf{v}^{n+1}  - \mathbf{v}^{n} \right) + \Delta t \rho^{n} (\mathbf{v}^{n} \cdot \nabla) \mathbf{v}^{n+1} \cdot \mathbf{v}^{n+1} \Big)\, d \Omega  \\
 &+ \int_{\Omega}  \frac{1}{\mathrm{Fr}^2} \frac{d \hat{\rho}^{n}}{d \phi^{n}} (\phi^{n+1} -\phi^{n}) y \, d \Omega  \\
&\quad=-\frac{\Delta t}{\mathrm{Re}}  \|\nabla  \mathbf{v}^{n+1} \|_{L^{2}}^{2} - \frac{\Delta t(1+\lambda)}{\mathrm{Re}} \| \nabla \cdot \mathbf{v}^{n+1} \|_{L^{2}}^2 - \frac{\Delta t}{\mathrm{Pe}\, \mathrm{We}} \| \nabla \mu^{n+1} \|_{L^{2}}^2  \\
 &\quad\phantom{=\,\,}- \frac{\mathrm{Cn}}{\mathrm{We}} \| \nabla \phi^{n+1} \|_{L^{2}}^2 +  \int_{\Omega} \frac{\mathrm{Cn}}{\mathrm{We}} (\nabla \phi^{n+1} \cdot \nabla \phi^{n} )\, d \Omega \\
 &\quad\phantom{=\,\,} - \int_{\Omega} \frac{1}{\mathrm{We} \, \mathrm{Cn}} f'(\phi^{n}) (\phi^{n+1} - \phi^{n})   \, d \Omega - \frac{\beta}{\mathrm{We} \, \mathrm{Cn}}  \|\phi^{n+1} - \phi^{n}\|_{L^{2}}^2    \\
 &\quad\phantom{=\,\,} + \int_{\Omega}  \frac{\Delta t}{\mathrm{We}} \left(  - \nabla p^{n+1} \cdot \mathbf{v}^{n+1} + \frac{1}{\hat{\rho}^{n}} \frac{d \hat{\rho}^{n}}{d\phi^{n}} p^{n+1}  ( - \phi^{n} \nabla \cdot \mathbf{v}^{n+1} + \frac{1}{\alpha} \nabla \cdot \mathbf{v}^{n+1} )\right) \, d \Omega  \\
&\quad\phantom{=\,\,} + \int_{\Omega} \frac{\Delta t}{\mathrm{Fr}^{2}} \left( -\hat{\rho}^{n} {\boldsymbol{\jmath}} \cdot \mathbf{v}^{n+1} - \frac{d \hat{\rho}^{n}}{d \phi^{n}} y \left(\nabla \cdot (\phi^{n} \mathbf{v}^{n+1}) - \frac{1}{\alpha} \nabla \cdot \mathbf{v}^{n+1}  \right)  \right) \, d \Omega.  
\end{aligned}
\end{equation}
The penultimate and ultimate terms in the right member of~\eqref{disc_energy_3} vanish
by virtue of relation~\eqref{relation2}, in a similar manner as their continuous counterparts in \eqref{vanish_integrals}.
By replacing the first and the last terms in \eqref{disc_energy_2} in accordance with \eqref{disc_energy_3}, we obtain
\begin{align}
E^{n+1} - E^{n} = & -\frac{\Delta t}{\mathrm{Re}}  \|\nabla  \mathbf{v}^{n+1} \|_{L^{2}}^{2} - \frac{\Delta t(1+\lambda)}{\mathrm{Re}} \| \nabla \cdot \mathbf{v}^{n+1} \|_{L^{2}}^2 - \frac{\Delta t}{\mathrm{Pe}\, \mathrm{We}} \| \nabla \mu^{n+1} \|_{L^{2}}^2 \nonumber \\
&  - \frac{1}{2} \rho^{n} \| \mathbf{v}^{n+1} - \mathbf{v}^{n} \|_{L^{2}}^2  - \frac{\mathrm{Cn}}{2 \,\mathrm{We}} \| \nabla (\phi^{n+1} - \phi^{n}) \|_{L^{2}}^2 \nonumber \\
& + \int_{\Omega} \frac{1}{\mathrm{We} \, \mathrm{Cn}} \Big( -f'(\phi^{n}) (\phi^{n+1} - \phi^{n}) +  f(\phi^{n+1}) - f(\phi^{n}) \Big) \, d \Omega\nonumber \\
&  - \frac{\beta}{\mathrm{We} \, \mathrm{Cn}}  \|\phi^{n+1} - \phi^{n}\|_{L^{2}}^2. \label{disc_energy_4} 
\end{align}
To derive \eqref{disc_energy_4}, we have also combined the sum of the fourth term in the right hand side 
of~\eqref{disc_energy_2} with the fourth and the fifth terms in the right hand side of~\eqref{disc_energy_3}
according to:
\begin{align}
\frac{\mathrm{Cn}}{2\, \mathrm{We}} &\left( \| \nabla \phi^{n+1} \|_{L^{2}}^2 -  \| \nabla \phi^{n} \|_{L^{2}}^2\right) - \frac{\mathrm{Cn}}{\mathrm{We}} \| \nabla \phi^{n+1} \|_{L^{2}}^2 \nonumber \\
 + \int_{\Omega} &\frac{\mathrm{Cn}}{\mathrm{We}} (\nabla \phi^{n+1} \cdot \nabla \phi^{n} )\, d \Omega = - \frac{\mathrm{Cn}}{2 \,\mathrm{We}} \| \nabla (\phi^{n+1} - \phi^{n}) \|_{L^{2}}^2. 
\end{align}
Finally, we use Taylor expansion on the double-well function $f(\phi)$ to obtain the identity:
\begin{equation}
 f(\phi^{n+1}) - f(\phi^{n}) = f'(\phi^{n}) (\phi^{n+1} - \phi^{n}) + \frac{f''(\xi^{n})}{2}  (\phi^{n+1} - \phi^{n})^2 \nonumber
\end{equation}
for some $\xi^{n} \in [\phi^{n}, \phi^{n+1}]$ and for all $\phi^{n+1}$. For the double-well function in \eqref{f_phi}, it holds that: 
\begin{equation}
    \max\limits_{\phi \in \mathbb{R}} |f''(\phi)| \leq 2.
\end{equation} 
From \eqref{disc_energy_4}, we then infer the following bound:
\begin{equation}
\label{disc_energy_5} 
\begin{aligned}
E^{n+1} - E^{n} \leq&  -\frac{\Delta t}{\mathrm{Re}}  \|\nabla  \mathbf{v}^{n+1} \|_{L^{2}}^{2} - \frac{\Delta t(1+\lambda)}{\mathrm{Re}} \| \nabla \cdot \mathbf{v}^{n+1} \|_{L^{2}}^2 - \frac{\Delta t}{\mathrm{Pe}\, \mathrm{We}} \| \nabla \mu^{n+1} \|_{L^{2}}^2 \\
&  - \frac{1}{2} \rho^{n} \| \mathbf{v}^{n+1} - \mathbf{v}^{n} \|_{L^{2}}^2  - \frac{\mathrm{Cn}}{2 \,\mathrm{We}} \| \nabla (\phi^{n+1} - \phi^{n}) \|_{L^{2}}^2
\\
&- \frac{ \beta-1}{\mathrm{We} \, \mathrm{Cn}}   \|\phi^{n+1} - \phi^{n}\|_{L^{2}}^2 
 \leq 0 
\end{aligned}
\end{equation}
The bound \eqref{disc_energy_5} implies the desired discrete dissipation law \eqref{energy_dissip_disc}. Let us note that in the above proof we have not imposed any conditions on the time step $\Delta t>0$.

\noindent
\textit{(ii)} Integrating \eqref{density_mass_modelII_simp_nondim_timedis} over the domain $\Omega$, applying the divergence theorem and the homogeneous boundary condition for velocity, we obtain:
\begin{equation}
 \int_{\Omega}  ( \rho^{n+1} - \rho^{n})  \, d \Omega  = \int_{\Omega} -  \Delta t \nabla \cdot (\rho^{n} \mathbf{v}^{n}) \, d \Omega 
=  \int_{\partial \Omega}  - \Delta t \rho^{n} \mathbf{v}^{n}  \cdot \mathbf{n} \, dS = 0.  \label{rho_preserve}
\end{equation}
Similarly, by integrating \eqref{hat_rho_prop} over the domain, using \eqref{phase_modelII_simp_nondim_timedis} and \eqref{div_v_modelII_simp_nondim_timedis} together with the discrete version of \eqref{relation2} and applying the boundary conditions in \eqref{boun_con_disc}, we obtain the following sequence of identities:
\begin{equation} 
\label{hat_rho_preserve}
\begin{aligned}
 \int_{\Omega}  (\hat{\rho}^{n+1} - \hat{\rho}^{n}) \, d \Omega &= \frac{d \hat{\rho}^{n}}{d \phi^{n}}  \int_{\Omega}  (\phi^{n+1} -\phi^{n})  \, d \Omega  \\
 &= \frac{d \hat{\rho}^{n}}{d \phi^{n}}   \int_{\Omega}\Delta t\bigg(  -  \nabla \cdot  (\phi^{n} \mathbf{v} ^{n+1}) +\frac{1}{\mathrm{Pe}} \triangle \mu^{n+1}  \bigg)  \, d \Omega  \\
 &= \frac{d \hat{\rho}^{n}}{d \phi^{n}}   \int_{\Omega} \Delta t \bigg( - \nabla \cdot  (\phi^{n} \mathbf{v} ^{n+1}) +\frac{1}{\alpha} \nabla \cdot  \mathbf{v}^{n+1}  \bigg)  \, d \Omega  \\
 &  =  \int_{\Omega}  - \Delta t \nabla \cdot (\hat{\rho}^{n} \mathbf{v}^{n+1}) \, d \Omega 
=  \int_{\partial \Omega}  - \Delta t \hat{\rho}^{n} \mathbf{v}^{n+1}  \cdot \mathbf{n} \, dS = 0.
\end{aligned}
\end{equation}
The assertions in~\eqref{preserve_mass} follow by induction on \eqref{rho_preserve} and \eqref{hat_rho_preserve}.
\end{proof}

\begin{remark}
Compared to the continuous dissipation relation \eqref{energy_dissipation}, the discrete energy dissipation \eqref{disc_energy_5} has additional dissipation terms due to the underlying backward Euler method 
in Algorithm~\ref{algorithm_scheme} and the stabilization term, viz. $ - \frac{1}{2} \rho^{n} \| \mathbf{v}^{n+1} - \mathbf{v}^{n} \|_{L^{2}}^2 - \frac{\mathrm{Cn}}{2 \,\mathrm{We}} \| \nabla (\phi^{n+1} - \phi^{n}) \|_{L^{2}}^2$ and $ - \frac{\beta - 1}{\mathrm{We} \, \mathrm{Cn}} \|\phi^{n+1} - \phi^{n}\|_{L^{2}}^2$, respectively. Gravity does not contribute to the dissipation, neither in the continuous dissipation relation \eqref{energy_dissipation} nor in the time-discrete dissipation relation \eqref{disc_energy_5}. 
\end{remark}

\begin{remark}
A stabilization term similar to the one in~\eqref{chem_pot_modelII_simp_nondim_timedis} has been proposed by Shen, Yang and Wang\cite{SheYanWanCCP2013}. Their stabilization is motivated on the basis of the heuristic argument that it damps high-frequency or high wave-number modes in the numerical simulation, thus stabilizing the time-integration scheme and allowing for larger time steps. The proof of Theorem~\ref{thm2} conveys that the stabilization term in fact ensures that the energy-dissipation property of the quasi-incompressible NSCH system is retained in the time-discrete case.
\end{remark}

\begin{remark}
Theorem~\ref{thm2} is contingent on the premise that the mixture density~$\rho^n$ is positive for all~$n$. Otherwise,
the sign of the fourth term in~\eqref{energy_dissip_disc} reverses and the energy decay relation $E^{n+1}-E^n\leq{}0$
is not ensured. In addition, if positivity of $\rho^n$ is violated, then the energy~\eqref{energy_nondim_disc} does
not constitute a Lyapunov functional. For binary fluids with matching densities, $\rho_1=\rho_2$, positivity of $\rho^n$ is trivially satisfied. For non-matching densities, positivity of the mixture density is directly connected with the range of the phase variable~$\phi$, viz. compliance with $\phi\in[-1,1]$; cf. Equation~\eqref{rho_algebraic}. Accordingly,
assuming without loss of generality that $\rho_2<\rho_1$, it holds that $\rho\in[\rho_2,\rho_1]$. The conditions 
on the phase variable and the mixture density can be imposed a-priori by restricting $\phi(t)$ and $\rho(t)$ 
to the convex spaces
\begin{equation}
\label{eq:convexspaces}
\begin{aligned}
&\big\{\phi\in{}H^1(\Omega)\cap{}L^{\infty}(\Omega):|\phi|\leq{}1\text{ a.e. in }\Omega\big\}
\\
&\big\{\rho\in{}H^1(\Omega)\cap{}L^{\infty}(\Omega):\rho\in[\rho_2,\rho_1]\text{ a.e. in }\Omega\big\}
\end{aligned}
\end{equation}
Arrangement of the order parameter $\phi$ in a convex space instead of a general linear space has been investigated in the context of the Cahn--Hilliard equation with a non-smooth free energy.\cite{Blowey:1991qa} Numerical approximation methods for the Cahn--Hilliard equation in this setting have also been studied.\cite{Bosch:2014kx,Hintermuller:2011pi} However, it is not known if solutions to the NSCH system~\eqref{NSCH_nondim} subject to~\eqref{boun_con} 
and subject to initial conditions in~\eqref{eq:convexspaces} remain in~\eqref{eq:convexspaces} as time progresses.

Just as it is not known for~\eqref{NSCH_nondim} if it admits solutions that remain in~\eqref{eq:convexspaces} 
as time progresses, it is not known for the semi-discretization in Algorithm~\ref{algorithm_scheme}
if it has this property and, if so, under which circumstances this property is retained under spatial discretization.
In practice we observe that positivity of $\rho^n$ can be violated for large density ratios, on coarse meshes and at 
large time steps. By virtue of~\eqref{density_mass_modelII_simp_nondim_timedis}, for non-matched densities 
positivity of the mixture density $\rho^{n+1}$ according to Algorithm~\ref{algorithm_scheme} is ensured 
if the following (local) time-step restriction holds:
\begin{equation}
\label{eq:Deltatlimit}
(\Delta t)^{n+1} < \min\bigg\{\bigg\lfloor\frac{\rho^n}{\nabla\cdot(\rho^n\mathbf{v}^n)}\bigg\rfloor\text{ in }\Omega\bigg\}
\end{equation}
where $\lfloor\cdot\rfloor=\frac{1}{2}(\cdot)+\frac{1}{2}|\cdot|$ represents the non-negative part of a function~$(\cdot)$.
Hence, the time-integration scheme in Algorithm~\ref{algorithm_scheme} is energy stable if the time-step is set 
adaptively in accordance with~\eqref{eq:Deltatlimit}.
\end{remark}

\begin{remark}
In \eqref{chem_pot_modelII_simp_nondim_timedis}, any choice of $\beta \geq 1$ yields a non-unique splitting of the double-well potential $f(\phi)$ in a similar manner as originally  discussed by Eyre\cite{EyrPROC1998}. That is, for the splitting, the truncated $f(\phi)$ can be composed of a convex (contractive) and a concave (expansive) part according to $f = f_{c} - f_{e}$, where both functions $f_{c}$ and $f_{e}$ are convex. Particularly, in our computations we choose $\beta = 2$ which reduces the stabilization into the linearly-stabilized splitting also proposed by Eyre according to:
\begin{equation}\label{f_phi_split}
f_{c} - f_{e} = \left\{ \begin{array}{ll}
 \left(\phi^2 + \frac{1}{4}\right) - \left( -2 \phi - \frac{3}{4} \right), & \phi <-1\\
 \left(\phi^2 + \frac{1}{4}\right) - \left( \frac{3}{2} \phi^2 - \frac{1}{4} \phi^4 \right), & \phi \in [-1, 1]\\
 \left(\phi^2 + \frac{1}{4}\right) - \left( 2 \phi - \frac{3}{4} \right), & \phi > 1,
  \end{array} \right.
\end{equation} 
which leads to a linearly implicit algorithm.
\end{remark}

\begin{remark}
A fully discrete scheme can be obtained by applying a finite element method to the weak form \eqref{phase_modelII_simp_nondim_weak}--\eqref{div_v_modelII_simp_nondim_weak} with \eqref{density_alg_modelII_simp_nondim_a} and \eqref{density_alg_modelII_simp_nondim_b}. 
For analyses of fully discrete schemes, the interested reader is referred to Feng\cite{FenSINUM2006} and Giesselman \&~Preyer\cite{GiePryM2AN2015}.
 \end{remark}
%
\section{Numerical Experiments}
\label{section4}
In this section, we present numerical experiments using our linear semi-implicit energy-dissipative scheme according to  Algorithm~\ref{algorithm_scheme}. We consider $2$D test cases with matching and variable densities and show the unconditional stability in discrete energy dissipation for large time step sizes. We also consider a falling droplet to test the scheme with gravity. We set $\lambda=-2/3$ in all experiments. The dimensionless parameters that are used for the numerical simulations are listed in Table~\ref{example_par}.
\setcounter{table}{0}    
\begin{table}[h]
\begin{center}
\begin{tabular}{c|lllll}
\hline
\text{Example} $1$ & Cn=0.01 & We=0.45 & Pe=4.5 & Re=100 & Fr$^{-2}$=0 \\
\hline
\text{Example} $2$ & Cn=0.0625 & We=0.45 & Pe=4.5 & Re=100 & Fr$^{-2}$=0 \\
\hline
\text{Example} $3$ & Cn=0.03 & We=2.5 & Pe=20000 & Re=1.7 & Fr$^{-2}$=0.1  \\
\hline
\end{tabular}
\end{center}
\caption{Parameters for the test cases}
\label{example_par}
\end{table}
\par
In the computations, for the spatial discretization, we use $\mathbb{P}_{1}-\mathbb{P}_{1}$ finite-elements for the phase variables $\phi$ and $\mu$ and $\mathbb{P}_{2}-\mathbb{P}_{1}$ Taylor-Hood elements for velocity $\mathbf{v}$ and pressure $p$ on uniform meshes with square elements. For time discretization, the stabilizing term is taken $\beta = 2$ and we apply homogeneous Neumann and homogeneous Dirichlet boundary conditions to $\phi$, $\mu$ and $\mathbf{v}$, respectively, in accordance with~\eqref{boun_con}. We also take zero initial velocity $\mathbf{v}_{0}(\mathbf{x}) =\mathbf{0}$. The initial conditions for the phase variable, $\phi$ are determined on the basis of the considered initial fluid volumes. 

%
\subsection{Example $1$: Coalescence for various density ratios} 
The goal of this test case is to investigate the time-integration scheme for varying density ratios, $\rho_{2}\!:\!\rho_{1}$. To this end, we consider the coalescence of two sufficiently close but non-touching droplets in a domain $\Omega = (0,1)^2$ with time step size $\Delta t = 0.05$ for a matched density case with $\rho_{2}\!:\!\rho_{1} = 1\!:\!1$ and a very large density-ratio scenario with two heavier droplets set in a lighter ambient medium with $\rho_{2} \!:\!\rho_{1} = 1\!:\! 1000 $. One may note that 
the latter ratio is very high when compared with other numerical results in the literature. For both cases, we ignore the effect of gravity. The other parameters are listed in Table~\ref{example_par}. We set the initial condition for the 
pahse variable according to:
\begin{equation}
\label{eq:TC1phi0}
\phi_{0}(\mathbf{x}) = 1 - \sum_{i=1}^{2} \tanh \left( \frac{\sqrt{(x - x_{i} )^2 +(y-y_{i})^2} - r_{i}}{\mathrm{Cn} \, \sqrt{2}}\right)  
\end{equation}
with $r_1 = 0.25$ and $r_2 = 0.1$ and $(x_{1}, y_{1}) = (0.4, 0.5)$ and $(x_{2}, y_{2}) = (0.78, 0.5)$.
The initial condition~\eqref{eq:TC1phi0} represents two circular droplets with radii $r_1$ and~$r_2$ 
centered at~$(x_1,y_1)$ and~$(x_2,y_2)$. We cover~$\Omega$ with a spatial mesh composed of $128^2$ uniform 
elements, which provides a support structure for the finite element spaces detailed above.

Figure~\ref{matched_variable_1} and the top row of Figure~\ref{nonmatched_variable_1} present the evolution of the phase variable for the matched and the non-matched density cases, respectively, from $t=0.05$ to $50$. One can observe that 
for both the matched and non-matched density cases, the droplets coalesce and form a circular droplet. The initial coalescence can be attributed to diffusion, while the evolution to the stationary circular shape is due to capillary forces which have the effect of minimizing surface area.
%
\begin{figure}
\centering
\includegraphics[scale=1]{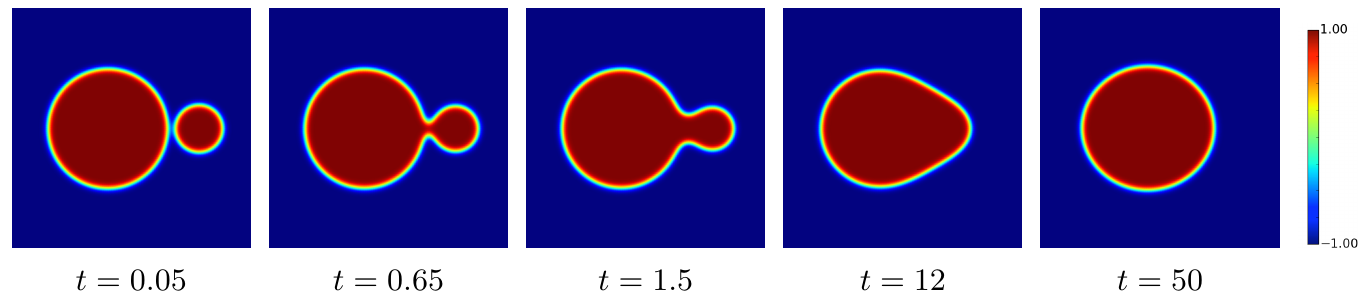}
\caption{\small{Droplet coalescence for matched densities: Evolution of phase variable, $\phi$ with density ratio $\rho_{2} \!:\! \rho_{1} = 1\!:\!1$,  from $t=0.05$ to $50$, Cn$=0.01$, We=$0.45$, Pe=$4.5$, Re=$100$, Fr$^{-2}=0$ and $\Delta t = 0.05$ }}
\label{matched_variable_1}
\end{figure}
\begin{figure}
\centering
\includegraphics[scale=1]{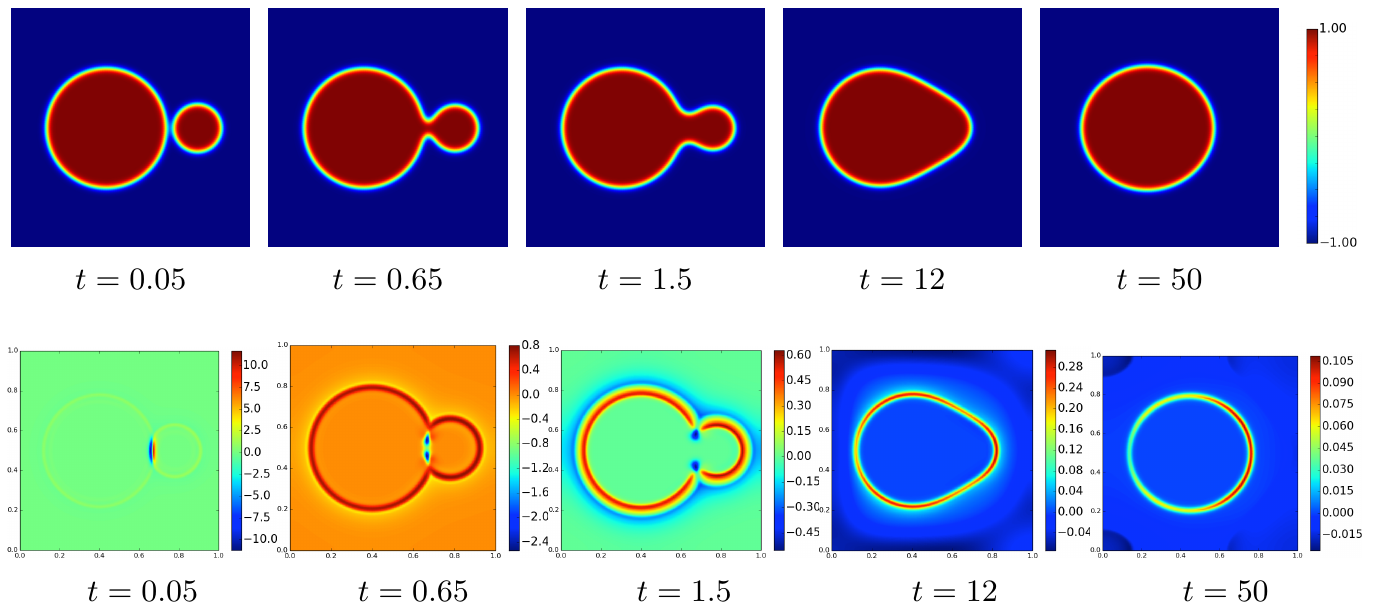}
\caption{\small{Droplet coalescence for non-matched densities: Evolution of phase variable, $\phi$ (top) and divergence of mixture velocity (bottom) with density ratio $\rho_{2} \!:\! \rho_{1} = 1\!:\!1000$, from $t=0.05$ to $50$, Cn$=0.01$, We=$0.45$, Pe=$4.5$, Re=$100$, Fr$^{-2}=0$ and $\Delta t = 0.05$}}
\label{nonmatched_variable_1}
\end{figure}
\par
The divergence of the mixture velocity for the non-matched density case is also displayed in Figure~\ref{nonmatched_variable_1}. The corresponding plots illustrate the quasi-incompressible behavior.
It can be observed that the divergence of the mixture velocity is non-zero, i.e. $\nabla \cdot \mathbf{v} \neq 0$, 
in the vicinity of the moving interface, where compressible mixing of the phases occurs. In the pure phases away from the interface, for which the individual components are incompressible, we indeed observe that the mixture velocity is solenoidal,
i.e. $\nabla \cdot \mathbf{v} = 0$. As opposed to the non-matching density case, for the matching-density case the incompressibility condition is satisfied up to discretization errors throughout the domain, in the pure phases as 
well as at the interface (results not displayed).

Figure~\ref{matched_variable_2} displays the evolution of the mass and energy corresponding to the
phase-field evolution in Figures~\ref{matched_variable_1}\nobreakdash--\ref{nonmatched_variable_1}. 
The results in the left panel confirm that the total mass is preserved in accordance with Theorem~\ref{thm2}. 
From the right panel one can observe that the energy is indeed non-increasing in accordance with the 
discrete energy-dissipation property of the time-integration scheme according to Theorem~\ref{thm2}.
\begin{figure}
\centering
\captionsetup[subfigure]{labelformat=empty}
    \subfloat[(a) Mass]{\includegraphics[width=.52\textwidth]{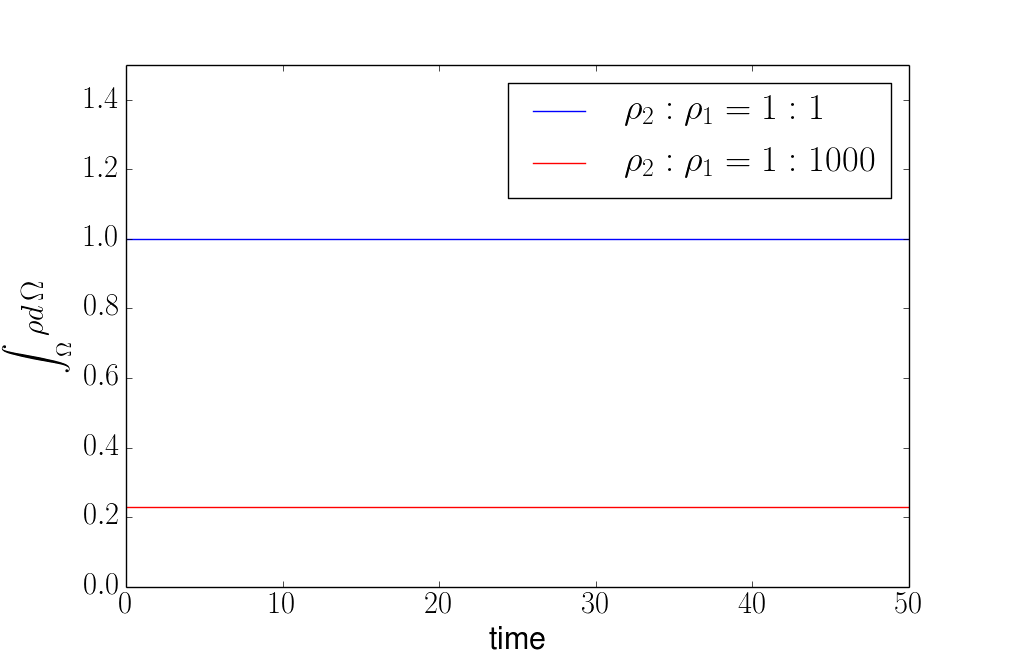}}
     \subfloat[(b) Energy]{\includegraphics[width=.52\textwidth]{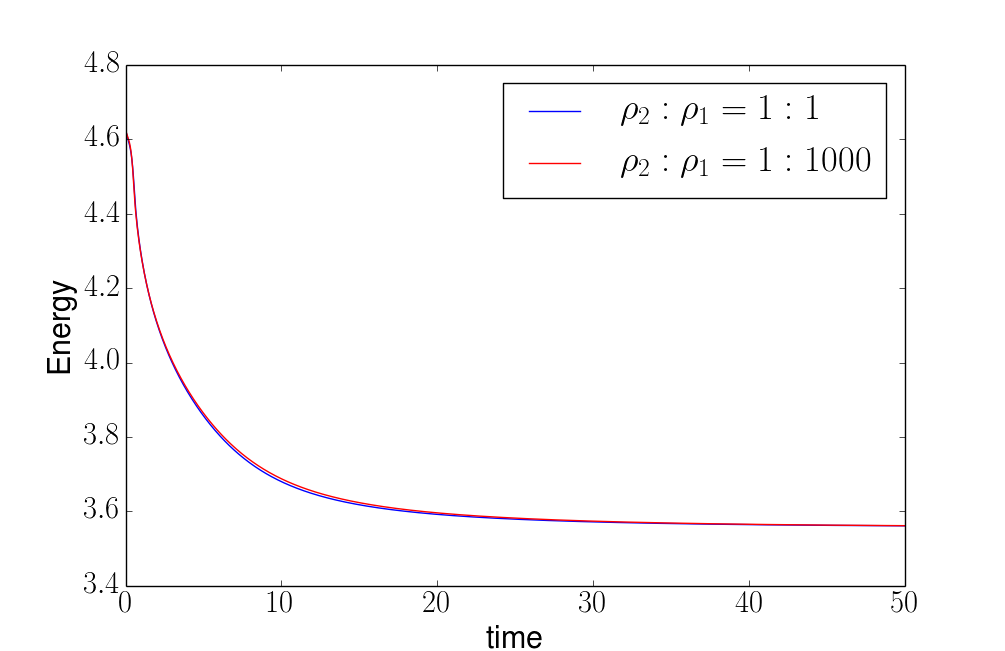}} \\
\caption{\small
Evolution of mass and energy for the droplet-coalescence test case for matched densities $\rho_{2} \!:\! \rho_{1} = 1\!:\!1$ 
and non-matched densities $\rho_{2} \!:\! \rho_{1} = 1\!:\!1000$.
\label{matched_variable_2}}
\end{figure}
\par
%

%
\subsection{Example $2$: Breakup of two droplets with large time step size} 
This test case serves to study the stability of the scheme in Algorithm~\ref{algorithm_scheme} for large time steps. We regard a rectangular domain $\Omega = (-2,2) \times (-4,4)$. The domain is covered with a mesh composed of $75 \times 150$ elements, which again supports finite-element approximation spaces as before. We consider a large density ratio
$\rho_2 : \rho_1 = 1 : 1000$ and a large time step $\Delta t = 0.5$. The considered setup pertains to breakup of two droplets of identical size connected by a thin liquid bridge. We take the dimensionless parameters as presented in 
Table~\ref{example_par}.
\par
Figure~\ref{dumbbell_1} shows snapshots of the evolution of the phase field. One can observe that the liquid bridge 
that initially connects the two droplets fissures under the effect of surface tension and the separate droplets 
subsequently evolve to a circular shape in time, during which their surface area decreases.
\begin{figure}[!t]
\centering
\includegraphics[scale=1]{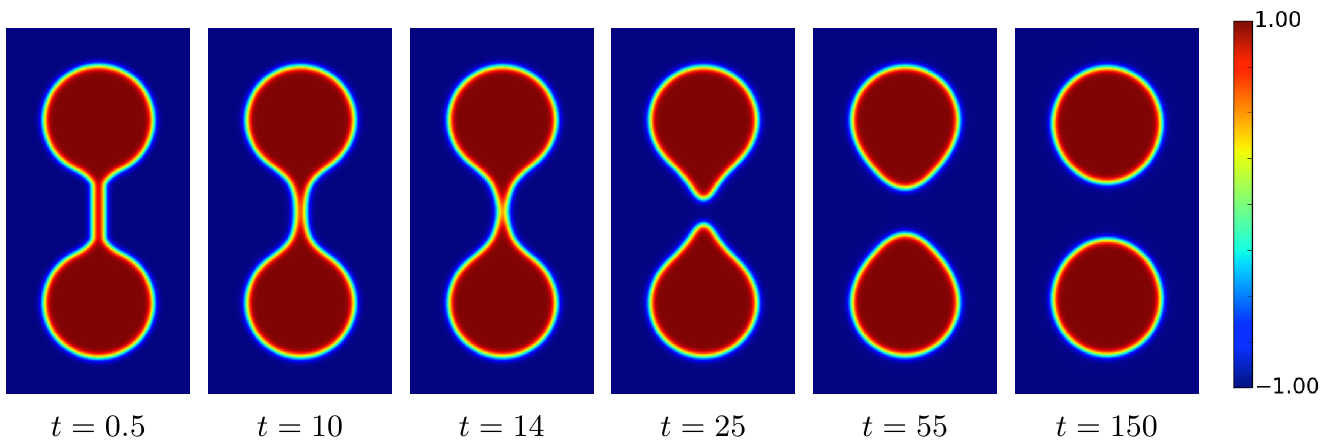}
\caption{\small{Breakup of a liquid bridge between two droplets for a large density ratio 
$\rho_{2} \!:\!\rho_{1} = 1\!:\! 1000$ and with large time-step size $\Delta t = 0.5$: 
Evolution of phase variable from $t=0.5$ to $150$ (Cn$=0.0625$, We=$0.45$, Pe=$4.5$, Re=$100$, Fr$^{-2}=0$).}}
\label{dumbbell_1}
\end{figure}
\begin{figure}[!b]
\centering
\captionsetup[subfigure]{labelformat=empty}
 \subfloat[(a)] {\includegraphics[width=.52\textwidth]{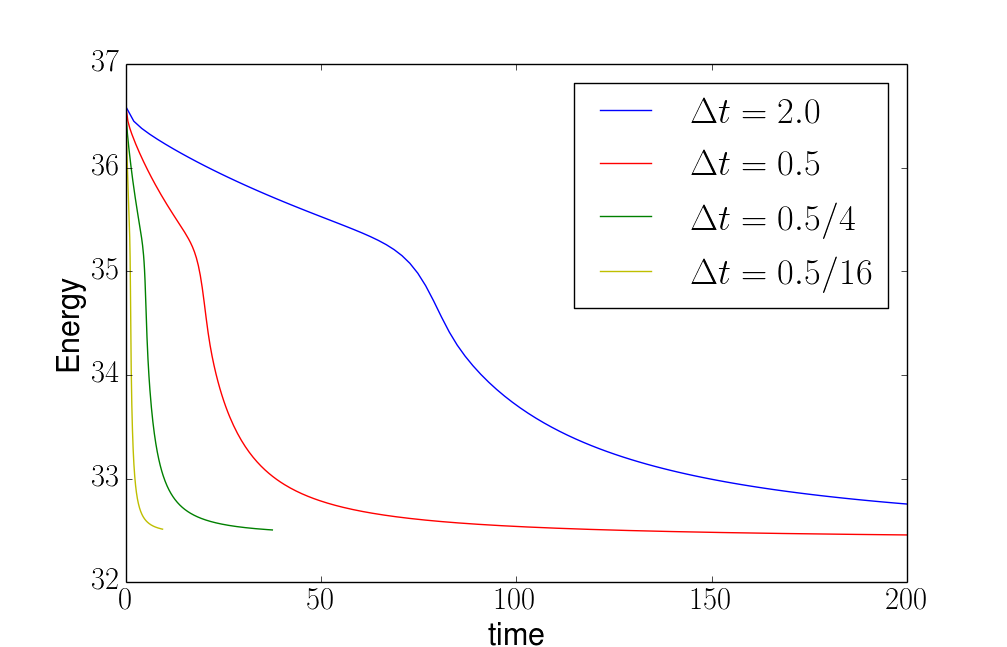}}
 \subfloat[(b) ] {\includegraphics[width=.52\textwidth]{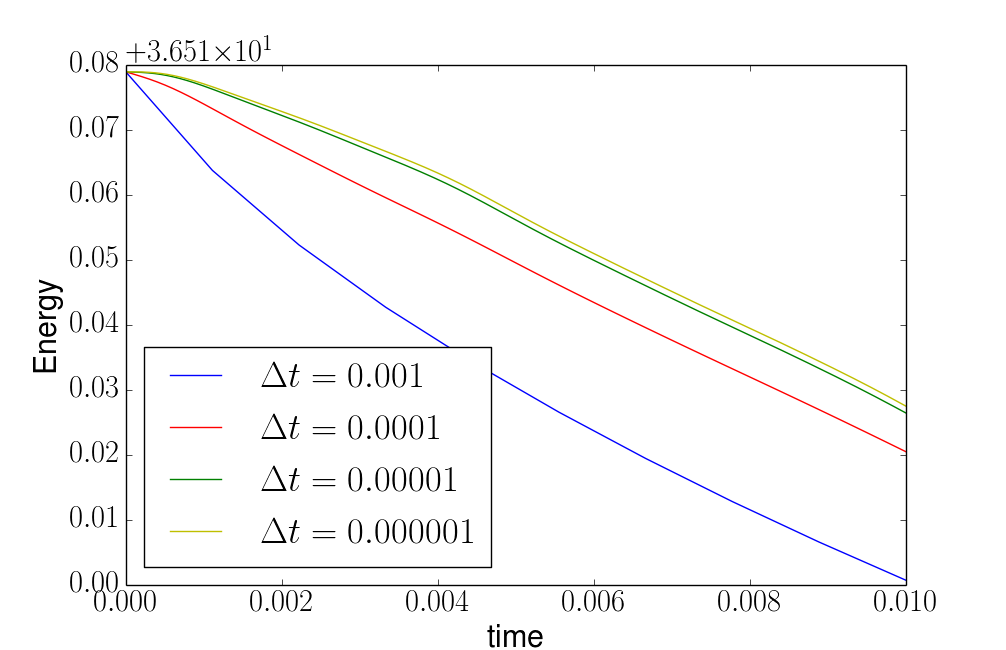}}\,
\caption{\small{
Energy evolution for breakup of a liquid bridge between two droplets for density ratio 
$\rho_{2} \!:\!\rho_{1} = 1\!:\! 1000$ and $\Delta t\in2^{\{-5,-3,-1,1\}}$  (a)
and  $\Delta t\in{}10^{\{-6,-5,-4,-3\}}$ (b).}}
\label{dumbbell_2}
\end{figure}
The evolution of the energy for various time-step sizes is presented in Figure~\ref{dumbbell_2}. In 
panel~(a), the energy evolution is presented for very large time step sizes. It is important to note that 
the time-integration scheme dissipates the energy even at such very large time steps, in accordance with 
Theorem~\ref{thm2}. The red dissipation curve in panel (a) corresponds to the simulation presented in 
Figure~\ref{dumbbell_1}. For all dissipation curves in panel (a), we observe a strong initial dissipation followed 
by a decrease of the dissipation rate. The initial dissipation rate signals the topological change by which the 
connected liquid volume separates into two disjoint liquid volumes. The asymptotic dissipation rate 
as $t \to \infty$ corresponds to the two droplets assuming their circular shape and the corresponding 
decrease in the surface area. Note that for diffuse interface problems of NSCH type
the considered time-step sizes in panel (a) are very large; typical time-step sizes for this type of problems reported 
in the literature are of the order of $10^{-3}$. By virtue of the fact that our time-integration scheme retains
its stability even at such large time-step sizes, it provides a solid basis for time adaptivity. In 
Figure~\ref{dumbbell_2}(b) we consider the energy dissipation of the time-integration scheme in the 
limit $\Delta t \to 0$. To this end, we plot the evolution of the energy versus time for 
$\Delta t = \{ 10^{-6}, \ldots, 10^{-3} \}$. One can observe that the energy evolution converges 
as $\Delta t \to 0$ and that small but finite dissipation remains in this limit, in accordance with 
the energy dissipation of the underlying quasi-incompressible NSCH equations. The total mass is preserved 
independent of the time-step size (results not displayed).
%
%
%
%
%
\subsection{Example $3$: Pinching droplet}
In this third test case, we explore the time-integration scheme subject to the effect of gravity. We consider a domain $\Omega = (-0.3, 0.3) \times (-0.8, 0.8)$ composed of $60 \times 200$ uniform square elements supporting the previously defined finite-element approximation spaces. Note that the computational domain is restricted to the right half of $\Omega$ in view of symmetry. The initial condition corresponds to a droplet attached to the top boundary:
\begin{align*}
\phi_{0}(\mathbf{x}) = - \tanh \left( \frac{\sqrt{x ^2 +(y- 2.9)^2} - 0.35}{\sqrt{2}\,\mathrm{Cn}}\right)  
\end{align*}
see Figure~\ref{falling_droplet_1}. The parameters are set in accordance with Table~\ref{example_par}. For this example, 
we set the density ratio $\rho_{2}\!:\!\rho_{1}=1\!:\!2.5$, because at large density ratios the condition $\rho^{n} >0$ 
can be violated for the considered large time-step size. 
\par
The evolution of the phase-field corresponding to the falling droplet is presented in Figure~\ref{falling_droplet_1}. Initially,  the droplet is attached to the upper wall. In time, due to gravity it moves downward and after detaching from the upper wall it continues its vertical movement as an almost circular shaped droplet. Upon reaching the bottom boundary, the droplet spreads under the Neumann boundary condition for the phase variable $\phi$, and assumes an ellipsoidal shape as $t \to \infty$, which corresponds to the equilibrium shape under gravity. It is to be noted that the homogeneous Neumann condition can be conceived of as a neutral wetting condition, i.e. the contact angle of the fluid--fluid interface with the solid wall corresponds to $\pi/2\,$.\cite{Jacqmin:2000kx,Yue:2011uq,Brummelen:2016qa}
\begin{figure}[!b]
\centering
\includegraphics[scale=1]{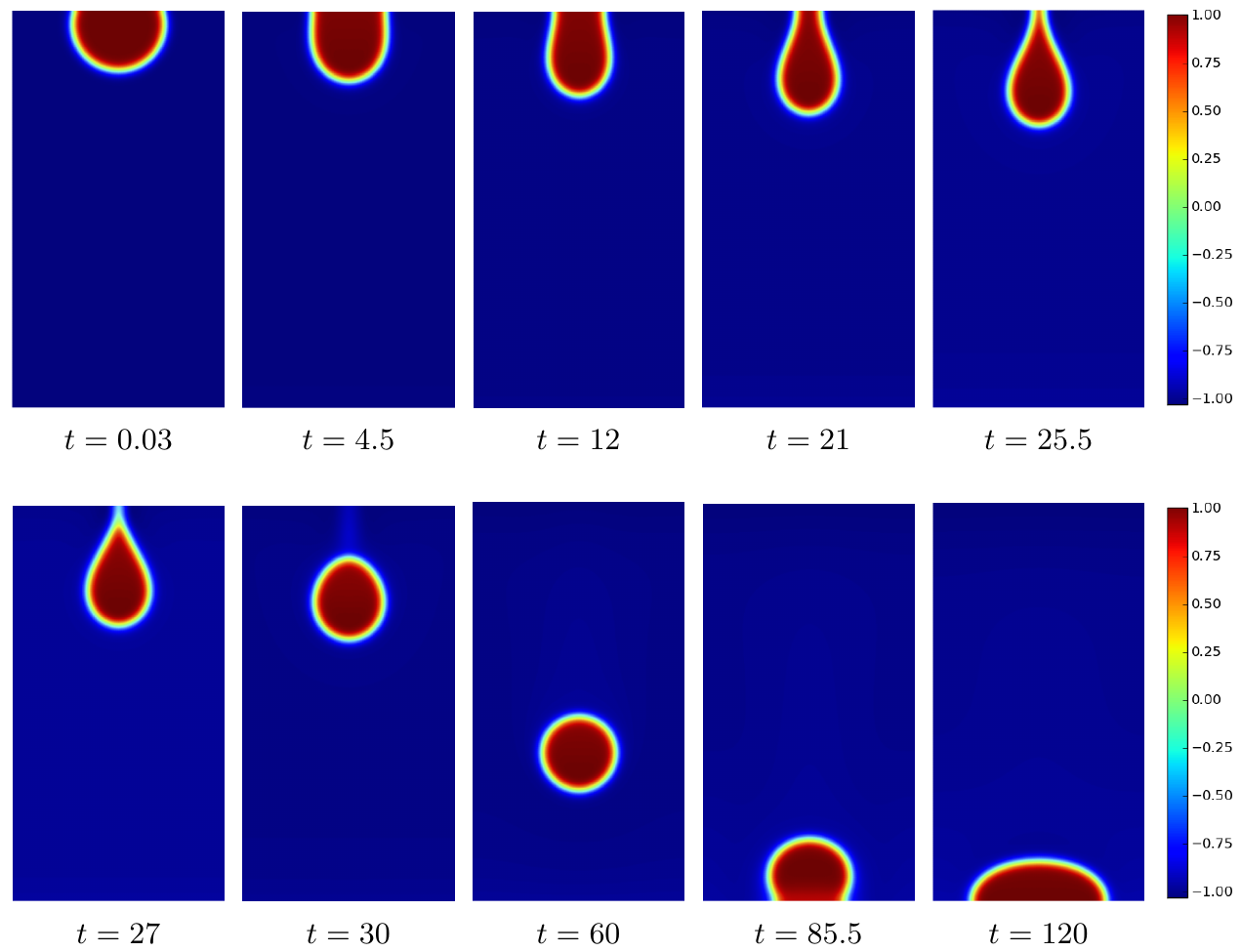}
\caption{\small{Evolution of the phase variable for a pinching droplet subject to gravity with 
density ratio $\rho_{2}\! :\!\rho_{1} = 1\!:\! 2.5$ from $t=0.03$ to $120$ (Cn$=0.03$, We=$2.5$, Pe=$2\times10^4$, 
Re=$1.7$, Fr$^{-2}=0.1$, $\Delta t = 0.03$).}}
\label{falling_droplet_1}
\end{figure}
%
%
\section{Conclusion}
\label{section5}
In this paper, we presented a new form of diffuse-interface Navier--Stokes-Cahn--Hilliard model 
for binary fluids with distinct densities. The model is based on the assumption that the fluids are individually incompressible but their partial mixture shows a quasi-incompressible character. Accordingly, the mass-averaged mixture velocity is generally non-solenoidal. We established an energy-dissipation relation for the quasi-incompressible NSCH system.
In addition, we presented a new linearly implicit time-integration scheme for the NSCH model 
and proved energy stability of this scheme independent of the time-step size. Moreover, we showed that the
time-integration scheme retains the mass and phase conservation properties of the underlying NSCH system. 
Central to the energy-dissipation property of the time-integration scheme is the use of two distinct but compatible definitions of the mixture density in the formulation.
\par
We conducted numerical experiments for breakup and coalescence of droplets and for a falling droplet subject to gravity. The numerical results confirm the energy-dissipation and mass-and-phase-conservation properties of the time-integration 
scheme for high density ratios and large time step sizes. In the computations pertaining to binary fluids with 
non-matching densities, we observed a non-zero velocity-divergence near the interface due to the mixing of the incompressible components of the fluid, in agreement with the quasi-incompressible behavior of the NSCH system.
\par
By virtue of the energy-stability of the presented time-integration scheme, the proposed scheme provides a solid basis for adaptive temporal refinement. We view the work presented in this
paper as preparatory to space-and-time adaptive approximation methods. In particular, adaptivity based on a-posteriori error estimates is essential to obtain efficient approximations for diffuse-interface type models with controllable accuracy.\cite{FenWuJCM2008,ZeeOdePruHawNMPDE2011,BarMulOrtSINUM2011,SimWuCMAME2015,WuZeeSimBru2017}
This is valuable for investigating the complicated multi-scale features that occur in many 
binary-fluid flows with capillarity, e.g. the formation of satellite droplets.

\section*{Acknowledgments}
The support of this work by NanoNextNL, a micro and nanotechnology consortium of the Government of the Netherlands and $130$ partners, is gratefully acknowledged. The authors would like to express their thanks to Gertjan van Zwieten for his help with the numerical implementations in the Nutils software (Nutils.org). The second author acknowledges support by the Netherlands Organisation for Scientific Research (NWO) via the Innovational Research Incentives Scheme (IRIS), Veni grant $639.031.033$. The authors are also grateful to the referees for their recommendations which have helped improve the paper.
\bibliography{BibFile,BibFileExtra}

\begin{thebibliography}{10}

\bibitem{AbeCMP2009}
{\sc H.~Abels}, {\em Existence of weak solutions for a diffuse interface model
  for viscous, incompressible fluids with general densities}, Commun. Math.
  Phys., 289 (2009), pp.~45--73.

\bibitem{AbeGarGruM3AS2012}
{\sc H.~Abels, H.~Garcke, and G.~Gr\"un}, {\em Thermodynamically consistent,
  frame indifferent diffuse interface models for incompressible two-phase flows
  with different densities}, Math. Models Methods Appl. Sci., 22 (2012),
  pp.~1150013--1--40.

\bibitem{AkiDreyM3AS2014}
{\sc G.~L. Aki, W.~Dreyer, J.~Giesselmann, and C.~Kraus}, {\em A
  {q}uasi--{i}ncompressible diffuse interface model with phase transition},
  Math. Models Methods Appl. Sci., 24 (2014), pp.~827--861.

\bibitem{AndMcfWheARFM1998}
{\sc D.~M. Anderson, G.~B. {McFadden}, and A.~A. Wheeler}, {\em
  Diffuse-interface methods in fluid mechanics}, Annu. Rev. Fluid Mech., 30
  (1998), pp.~139--165.

\bibitem{AndMcfWhePD2000}
\leavevmode\vrule height 2pt depth -1.6pt width 23pt, {\em A phase-field model
  of solidification with convection}, Phys. D, 135 (2000), pp.~175 -- 194.

\bibitem{BarMulOrtSINUM2011}
{\sc S.~Bartels, R.~M{\"{u}}ller, and C.~Ortner}, {\em Robust a priori and a
  posteriori error analysis for the approximation of {A}llen--{C}ahn and
  {G}inzburg--{L}andau equations past topological changes}, SIAM J. Numer.
  Anal., 49 (2011), pp.~110--134.

\bibitem{Blowey:1991qa}
{\sc J.~F. Blowey and C.~M. Elliott}, {\em The cahn--hilliard gradient theory
  for phase separation with non-smooth free energy part i: Mathematical
  analysis}, Euro. J. Appl. Math., 2 (1991), pp.~233--280.

\bibitem{Bosch:2014kx}
{\sc J.~Bosch, M.~Stoll, and P.~Benner}, {\em Fast solution of
  {C}ahn--{H}illiard variational inequalities using implicit time
  discretization and finite elements}, Journal of Computational Physics, 262
  (2014), pp.~38--57.

\bibitem{BowBOOK-CH1976}
{\sc R.~Bowen}, {\em Theory of mixtures}, in Continuum Physics, A.~Eringen,
  ed., vol.~III: Mixtures and EM Field Theories, Academic Press, 1976,
  pp.~1--127.

\bibitem{BoyCF2002}
{\sc F.~Boyer}, {\em A theoretical and numerical model for the study of
  incompressible mixture flows}, Comput. \& Fluids, 31 (2002), pp.~41 -- 68.

\bibitem{BoyLapTPM2010}
{\sc F.~Boyer, C.~Lapuerta, S.~Minjeaud, B.~Piar, and M.~Quintard}, {\em
  {C}ahn---{H}illiard/{N}avier---{S}tokes model for the simulation of
  {T}hree--{P}hase flows}, Trans. in Porous Media, 82 (2010), pp.~463--483.

\bibitem{CahHilJChP1958}
{\sc J.~W. Cahn and J.~E. Hilliard}, {\em Free energy of a nonuniform system.
  {I}. {I}nterfacial free energy}, J.~Chem. Phys., 28 (1958), pp.~258--267.

\bibitem{ColNolARMA1963}
{\sc B.~D. Coleman and W.~Noll}, {\em The thermodynamics of elastic materials
  with heat conduction and viscosity}, Arch. Ration. Mech. Anal., 13 (1963),
  pp.~167--178.

\bibitem{Gennes:2004ai}
{\sc P.-G. de~Gennes, F.~Brochard-Wyart, and D.~Quer\'e}, {\em Capillarity and
  Wetting Phenomena: Drops, Bubbles, Pearls, Waves}, Springer, 2004.

\bibitem{DingSpeShuJCP2007}
{\sc H.~Ding, P.~D. Spelt, and C.~Shu}, {\em Diffuse interface model for
  incompressible tw{o}-phase flows with large density ratios}, J.~Comput.
  Phys., 226 (2007), pp.~2078--2095.

\bibitem{EllPROC1989}
{\sc C.~M. Elliott}, {\em The {C}ahn--{H}illiard model for the kinetics of
  phase separation}, in Mathematical Models for Phase Change Problems:
  Proceedings of the European Workshop held at {\'{O}}bidos, Portugal, October
  1--3, 1988, J.~Rodrigues, ed., vol.~88 of International Series of Numerical
  Mathematics, Basel, 1989, Birkh{\"{a}}user, pp.~35--73.

\bibitem{EllZheARMA1986}
{\sc C.~M. Elliott and S.~Zheng}, {\em On the {C}ahn--{H}illiard equation},
  Arch. Ration. Mech. Anal., 96 (1986), pp.~339--357.

\bibitem{EyrPROC1998}
{\sc D.~Eyre}, {\em Unconditionally gradient stable time marching the
  {C}ahn--{H}illiard equation}, in Computational and Mathematical Models of
  Microstructural Evolution, J.~W. Bullard, L.-Q. Chen, R.~K. Kalia, and A.~M.
  Stoneham, eds., vol.~529 of Mater. Res. Soc. Symp. Proc., Warrendale,
  Pennsylvania, 1998, Materials Research Society, pp.~39--46.

\bibitem{FenSINUM2006}
{\sc X.~Feng}, {\em Fully discrete finite element approximations of the
  {Navier--Stokes--Cahn--Hilliard} diffuse interface model for two-phase fluid
  flows}, SIAM J. Numer. Anal., 44 (2006), pp.~1049--1072.

\bibitem{FenWuJCM2008}
{\sc X.~Feng and H.-J. Wu}, {\em A~posteriori error estimates for finite
  element approximations of the {C}ahn--{H}illiard equation and the
  {H}ele--{S}haw flow}, J.~Comput. Math., 26 (2008), pp.~767--796.

\bibitem{FreKotARMA2017}
{\sc H.~Freist{\"u}hler and M.~Kotschote}, {\em Phase-field and {Korteweg}-type
  models for the time-dependent flow of compressible two-phase fluids}, Arch.
  Ration. Mech. Anal., 224 (2017), pp.~1--20.

\bibitem{GarHinKahAPNUM2016}
{\sc H.~Garcke, M.~Hinze, and C.~Kahle}, {\em A stable and linear time
  discretization for a thermodynamically consistent model for two-phase
  incompressible flow}, Appl. Numer. Math., 99 (2016), pp.~151--171.

\bibitem{GiePryM2AN2015}
{\sc J.~Giesselmann and T.~Pryer}, {\em Energy consistent discontinuous
  {Galerkin} methods for a quasi-incompressible diffuse two phase flow model},
  M2AN Math. Model. Numer. Anal., 49 (2015), pp.~275--301.

\bibitem{GomHugJCP2011}
{\sc H.~Gomez and T.~J.~R. Hughes}, {\em Provably unconditionally stable,
  second-order time-accurate, mixed variational methods for phase-field
  models}, J.~Comput. Phys., 230 (2011), pp.~5310--5327.

\bibitem{GomZeeBOOK-CH2016}
{\sc H.~Gomez and K.~G. van~der Zee}, {\em Computational phase-field modeling},
  in Encyclopedia of Computational Mechanics, Second Edition, E. Stein, R. de
  Borst and T.J.R. Hughes, eds.,  (2016).
\newblock to appear.

\bibitem{GueQuaJCP2000}
{\sc J.-L. Guermond and L.~Quartapelle}, {\em A projection {FEM} for variable
  density incompressible flows}, J.~Comput. Phys., 165 (2000), pp.~167 -- 188.

\bibitem{GuoLinJCP2014}
{\sc Z.~Guo, P.~Lin, and J.~S. Lowengrub}, {\em A numerical method for the
  quasi-incompressible {C}ahn---{H}illiard {N}avier---{S}tokes equations for
  variable density flows with a discrete energy law}, J.~Comput. Phys., 276
  (2014), pp.~486 -- 507.

\bibitem{GurPD1996}
{\sc M.~E. Gurtin}, {\em Generalized {G}inzburg--{L}andau and
  {C}ahn--{H}illiard equations based on a microforce balance}, Phys. D, 92
  (1996), pp.~178--192.

\bibitem{GurFriAnaBOOK2010}
{\sc M.~E. Gurtin, E.~Fried, and L.~Anand}, {\em The Mechanics and
  Thermodynamics of Continua}, Cambridge University Press, Cambridge, 2010.

\bibitem{GurPolVinM3AS1996}
{\sc M.~E. Gurtin, D.~Polignone, and J.~Vi{\~{n}}als}, {\em Two-phase binary
  fluids and immiscible fluids described by an order parameter}, Math. Models
  Methods Appl. Sci., 6 (1996), pp.~815--832.

\bibitem{HanWangJCP2015}
{\sc D.~Han and X.~Wang}, {\em A second order in time, uniquely solvable,
  unconditionally stable numerical scheme for
  {C}ahn---{H}illiard---{N}avier---{S}tokes equation}, JCP, 290 (2015), pp.~139
  -- 156.

\bibitem{HeiMalIJES2010}
{\sc M.~Heida and J.~M{\'a}lek}, {\em On compressible {Korteweg} fluid-like
  materials}, Int. J. Eng. Sci., 48 (2010), pp.~1313--1324.

\bibitem{Hintermuller:2011pi}
{\sc M.~Hinterm{\"u}ller, M.~Hinze, and M.~H. Tber}, {\em An adaptive
  finite-element moreau--yosida-based solver for a non-smooth cahn--hilliard
  problem}, Optim. Methods Softw., 26 (2011), pp.~777--811.

\bibitem{HohHalRMP1977}
{\sc P.~C. Hohenberg and B.~I. Halperin}, {\em Theory of dynamic critical
  phenomena}, Rev. Mod. Phys., 49 (1977), pp.~435--479.

\bibitem{HuWisWanLowJCP2009}
{\sc Z.~Hu, S.~M. Wise, C.~Wang, and J.~S. Lowengrub}, {\em Stable and
  efficient finite-difference nonlinear-multigrid schemes for the phase field
  crystal equation}, J.~Comput. Phys., 228 (2009), pp.~5323--5339.

\bibitem{Jacqmin:1999fk}
{\sc D.~Jacqmin}, {\em Calculation of two-phase {N}avier-{S}tokes flows using
  phase-field modeling}, J. Comput. Phys., 155 (1999), pp.~96--127.

\bibitem{Jacqmin:2000kx}
\leavevmode\vrule height 2pt depth -1.6pt width 23pt, {\em Contact-line
  dynamics of a diffuse fluid interface}, J. Fluid Mech., 402 (2000),
  pp.~57--88.

\bibitem{KimCCP2012}
{\sc J.~Kim}, {\em Phase-field models for multi-component fluid flows}, Commun.
  Comput. Phys, 12 (2012), pp.~613--661.

\bibitem{KimKanLowJCP2004}
{\sc J.~Kim, K.~Kang, and J.~Lowengrub}, {\em Conservative multigrid methods
  for {C}ahn--{H}illiard fluids}, J.~Comput. Phys., 193 (2004), pp.~511--543.

\bibitem{KimLowIFB2005}
{\sc J.~Kim and J.~Lowengrub}, {\em Phase field modeling and simulation of
  three-phase flows}, Interfaces Free Bound., 7 (2005), pp.~435--466.

\bibitem{KortewegANSES1901}
{\sc D.~J. Korteweg}, {\em Sur la forme que prennent les \'{e}quations des
  mouvements des fluides si l'on tient compte des forces capillaires par des
  variations de densit\'{e}}, Arch. N\'{e}er. Sci. Exactes S\'{e}r. II, 6
  (1901), pp.~1--24.

\bibitem{LiuGomEvaHugLanJCP2013}
{\sc J.~Liu, H.~Gomez, J.~A. Evans, T.~J.~R. Hughes, and C.~M. Landis}, {\em
  Functional entropy variables: {A} new methodology for deriving
  thermodynamically consistent algorithms for complex fluids, with particular
  reference to the isothermal {N}avier--{S}tokes--{K}orteweg equations},
  J.~Comput. Phys., 248 (2013), pp.~47--86.

\bibitem{LowTruPRSLA1998}
{\sc J.~S. Lowengrub and L.~Truskinovsky}, {\em Quasi-incompressible
  {C}ahn--{H}illiard fluids and topological transitions}, Proc. R.~Soc. Lond.
  Ser.~A Math. Phys. Eng. Sci., 454 (1998), pp.~2617--2654.

\bibitem{MinNMPDE2013}
{\sc S.~Minjeaud}, {\em An unconditionally stable uncoupled scheme for a
  triphasic {C}ahn---{H}illiard/{N}avier---{S}tokes model}, Numer. Methods
  Partial Differential Equations, 29 (2013), pp.~584--618.

\bibitem{OdeHawPruM3AS2010}
{\sc J.~T. Oden, A.~Hawkins, and S.~Prudhomme}, {\em General diffuse-interface
  theories and an approach to predictive tumor growth modeling}, Math. Models
  Methods Appl. Sci., 20 (2010), pp.~477--517.

\bibitem{ScaZalARFM1999}
{\sc R.~Scardovelli and S.~Zaleski}, {\em Direct numerical simulation of
  free-surface and interfacial flow}, Annu. Rev. Fluid Mech., 31 (1999),
  pp.~567--603.

\bibitem{SheYanSISC2010}
{\sc J.~Shen and X.~Yang}, {\em A phase-field model and its numerical
  approximation for two-phase incompressible flows with different densities and
  viscosities}, SIAM J. Sci. Comput., 32 (2010), pp.~1159--1179.

\bibitem{SheYanSINUM2015}
\leavevmode\vrule height 2pt depth -1.6pt width 23pt, {\em Decoupled, energy
  stable schemes for phase-field models of two-phase incompressible flows},
  SIAM J. Numer. Anal., 53 (2015), pp.~279--296.

\bibitem{SheYanWanCCP2013}
{\sc J.~Shen, X.~Yang, and Q.~Wang}, {\em Mass and volume conservation in phase
  field models for binary fluids.}, Commun. Comput. Phys, 13 (2013),
  pp.~1045--1065,.

\bibitem{SimWuCMAME2015}
{\sc G.~Simsek, X.~Wu, K.~van~der Zee, and E.~van Brummelen}, {\em
  Duality-based two-level error estimation for time-dependent pdes: Application
  to linear and nonlinear parabolic equations}, Comput. Methods Appl. Mech.
  Engrg., 288 (2015), pp.~83 -- 109.

\bibitem{Temam:2001cz}
{\sc R.~Temam and A.~Mirainville}, {\em Mathematical Modeling in Continuum
  Mechanics}, Cambridge University Press, 2001.

\bibitem{TieGuiACME2015}
{\sc G.~Tierra and F.~Guill{\'{e}}n-Gonz{\'a}lez}, {\em Numerical methods for
  solving the {Cahn--Hilliard} equation and its applicability to related
  energy-based models}, Arch. Comput. Methods Eng., 22 (2015), pp.~269--289.

\bibitem{TruBOOK1984}
{\sc C.~Truesdell}, {\em Rational Thermodynamics}, Springer, New York, 2nd~ed.,
  1984.

\bibitem{Brummelen:2016qa}
{\sc E.~H. van Brummelen, M.~Shokrpour-Roudbari, and G.~J. van Zwieten}, {\em
  Elasto-capillarity simulations based on the
  {N}avier-{S}tokes-{C}ahn-{H}illiard equations}, in Advances in Computational
  Fluid-Structure Interaction and Flow Simulation, Modeling and Simulation in
  Science, Engineering and Technology, Birkh{{\"a}}user, 2016, pp.~451--462.

\bibitem{Waa1893}
{\sc J.~D. van~der Waals}, {\em The thermodynamic theory of capillarity under
  the hypothesis of a continuous variation of density}, J.~Stat. Phys., 20
  (1979), pp.~197--244.
\newblock Originally published (in Dutch) in Verhandel. Konink. Akad. Weten.
  Amsterdam (Sect.~1), Vol.~1, No.~8 (1893).

\bibitem{ZeeOdePruHawNMPDE2011}
{\sc K.~G. van~der Zee, J.~T. Oden, S.~Prudhomme, and A.~Hawkins-Daarud}, {\em
  Goal-oriented error estimation for {C}ahn--{H}illiard models of binary phase
  transition}, Numer. Methods Partial Differential Equations, 27 (2011),
  pp.~160--196.

\bibitem{WisWanLowSINUM2009}
{\sc S.~M. Wise, C.~Wang, and J.~S. Lowengrub}, {\em An energy-stable and
  convergent finite-difference scheme for the phase field crystal equation},
  SIAM J. Numer. Anal., 47 (2009), pp.~2269--2288.

\bibitem{WuZeeSimBru2017}
{\sc X.~Wu, K.~G. van~der Zee, G.~Simsek, and E.~H. van Brummelen}, {\em
  A-posteriori error estimation and adaptivity for nonlinear parabolic
  equations using {IMEX-Galerkin} discretization of primal and dual equations},
  submitted,  (2017).

\bibitem{WuZwiZee2014}
{\sc X.~Wu, G.~J. van Zwieten, and K.~G. van~der Zee}, {\em Stabilized
  second-order splitting schemes for {C}ahn--{H}illiard models with application
  to diffuse-interface tumor-growth models}, Int. J. Numer. Meth. Biomed.
  Engng., 30 (2014), pp.~180--203.

\bibitem{Yue:2011uq}
{\sc P.~Yue and J.~Feng}, {\em Wall energy relaxation in the cahn--hilliard
  model for moving contact lines}, Phys. Fluids, 23 (2011), p.~012106.

\end{thebibliography}
\bibliographystyle{siam}
\clearpage
\tableofcontents
\end{document}